\newcommandx{\set}[2][1=1]{\ensuremath{\{#1,\ldots,#2\}}}
\newcommandx{\tlog}[3][1=,3=]{\log_{#1}^{#3}(#2)}
 \newtheorem{proposition}{Proposition}
 \newtheorem{corollary}{Corollary}
 \newtheorem{lemma}{Lemma}
 \newtheorem{claim}{Claim}
 \newtheorem{theorem}{Theorem}
  \newtheorem{observation}{Observation}
  \newtheorem{rrule}{Reduction Rule}
  \theoremstyle{definition}
  \newtheorem{definition}{Definition}
  \newtheorem{construction}{Construction}
\crefname{observation}{Observation}{Observations}
\crefname{claim}{Claim}{Claims}
\crefname{rrule}{Reduction Rule}{Reduction Rules}
\Crefname{rrule}{Reduction Rule}{Reduction Rules}
\crefname{construction}{Construction}{Constructions}
\Crefname{proposition}{Pr.}{Props.}
\crefname{proposition}{Proposition}{Propositions}
\Crefname{theorem}{Th.}{Th.}
\crefname{theorem}{Theorem}{Theorems}
\Crefname{corollary}{Co.}{Cors.}
\crefname{corollary}{Corollary}{Corollaries}
\crefname{conjecture}{Conjecture}{Conjectures}
\newcommand{\yes}{\texttt{yes}}
\newcommand{\no}{\texttt{no}}
\newcommand{\RD}{$(\Rightarrow)\quad$}
\newcommand{\LD}{$(\Leftarrow)\quad$}
\newcommand{\multicoloredClique}{\textsc{Multi\-colored Clique}}
\newcommand{\np}{\ensuremath{\mathrm{NP}}}
\newcommand{\xp}{\ensuremath{\mathrm{XP}}}                                       
\newcommand{\wone}{\ensuremath{\mathrm{W[1]}}}                                   
\newcommand{\wonehard}{\wone-hard}
\newcommand{\nphard}{\np-hard}
\DeclareMathOperator{\score}{score}                                              
\DeclareMathOperator{\switch}{switch}
\newcommand{\lqed}{}
\newcommandx{\problemdef}[5][2=Input,4=Question]{%
  \begingroup
  \par\medskip
  \noindent \textsc{#1}\nopagebreak[4]
  \par\noindent\hangindent=\parindent\textbf{#2}:  #3\nopagebreak[4]
  \par\noindent\hangindent=\parindent\textbf{#4}:  #5
  \par  \medskip
  \endgroup
}
\newcommand{\N}{\mathbb{N}}
\newcommand{\Q}{\mathbb{Q}}
\newcommand{\Z}{\mathbb{Z}}
\newcommand{\norm}[2]{\ensuremath{\|#1\|_{#2}}}
\renewcommand{\O}{\mathcal{O}}
\newcommand{\prob}[1]{\textnormal{\textsc{#1}}}
\newcommand{\MPV}{\prob{Multistage SNTV}}
\newcommand{\RMPV}{\prob{Revolutionary Multistage SNTV}}
\newcommand{\mpv}{\prob{MSNTV}}
\newcommand{\cmpv}{\prob{MSNTV}}%
\newcommand{\rmpv}{\prob{RMSNTV}}
\newcommand{\VC}{\prob{Vertex Cover}}
\newcommand{\cocl}[1]{\ensuremath{\operatorname{#1}}}
\newcommand{\W}[1]{\cocl{W[#1]}}
\newcommand{\NP}{\cocl{NP}}
\newcommand{\FPT}{\cocl{FPT}}
\newcommand{\coNP}{\cocl{coNP}}
\newcommand{\XP}{\cocl{XP}}
\newcommand{\poly}{\cocl{poly}}
\newcommand{\sign}{\cocl{sign}}
\newcommand{\NPincoNPslashpoly}{\ensuremath{\NP\subseteq \coNP/\poly}}
\newcommand{\calC}{\mathcal{C}}
\newcommand{\symdif}[2]{\ensuremath{#1\triangle#2}}
\newcommand{\cqed}{\hfill$\diamond$}
\newcommand{\wmod}[1]{\ensuremath{\widehat{#1}}}
\newcommand{\struct}[1]{\noindent\textbf{#1}}
\newcommand{\tref}[1]{\textsuperscript{\tiny(\Cref{#1})}}
\newcommand{\trefs}[2]{\scriptsize(\Cref{#1,#2})}
\newcommand{\apptref}[1]{\textsuperscript{\tiny(\appsymb)}}
\newcommand{\ceq}{\ensuremath{\coloneqq}}
\definecolor{lilla}{HTML}{750787}
\newcommand{\swColA}{cyan}
\newcommand{\swColB}{lilla}
\newcommand{\xcase}[2]{\smallskip\noindent\textbf{Case~#1: #2.}}
\newcommand{\xsubcase}[2]{\smallskip\noindent\hspace{1em}\textbf{Case~#1: #2.}}
\newcommand{\xsubsubcase}[3]{\smallskip\noindent\hspace{2em}\textbf{Case~#1.#2:
#3.}}
\newcommand{\appsymb}{$\bigstar$}
\newcommand{\appref}[1]{{\hyperref[proof:#1]{\appsymb}}}
\newcommand{\apprefX}[1]{{\hyperref[#1]{\appsymb}}}
\newcommand{\appendixsection}[1]{%
  \gappto{\appendixProofText}{\section{Additional Material for Section~\ref{#1}}\label{app:#1}}
}
\newcommand{\toappendix}[1]{%
  \gappto{\appendixProofText}
  {{
    #1
  }}
}
\newcommand{\appendixproof}[2]{%
  \gappto{\appendixProofText}
  {
    \subsection{Proof of \cref{#1}}\label{proof:#1}
    #2
  }
}
\newcommand{\thextitle}{When Votes Change and Committees Should (Not)}
\title{\thextitle}
\author[1,2]{Robert~Bredereck}
\author[1,3]{Till~Fluschnik\thanks{Supported by DFG, project AFFA~(BR~5207/1)}}
\author[3,4]{Andrzej~Kaczmarczyk}
\affil[1]{{\small Institut für Informatik, TU Clausthal, Clausthal, Germany}}
\affil[2]{{\small Humboldt-Universit\"at zu Berlin, Berlin, Germany}}
\affil[3]{{\small Technische Universität Berlin, Faculty~IV, Algorithmics and
Computational Complexity, Germany}}
\affil[4]{{\small AGH University, Krak\'ow, Poland}}
\affil[ ]{{\small \textsl{
robert.bredereck@tu-clausthal.de,
till.fluschnik@tu-clausthal.de,
andrzej.kaczmarczyk@agh.edu.pl
}}}
\newcommand{\Wcmpv}{\textsc{W-\cmpv{}}}
\begin{document}

\maketitle 

\begin{abstract}
Electing a single committee of a small size is a classical and well-understood voting
situation. %
Being interested in a sequence of committees, we introduce %
two time-dependent
multistage models based on simple scoring-based voting.
Therein,
we are given a sequence of voting profiles (stages) over the same set of agents and candidates,
and our task is to find a small committee for each stage of high score.
In the \emph{conservative model} we additionally require that any two consecutive
committees have a small %
symmetric difference.
Analogously, in the \emph{revolutionary model} we require large %
symmetric differences.
We prove both models to be~\NP-hard even for a constant number of agents,
and, based on this,
initiate a parameterized complexity analysis for the most natural parameters
and combinations thereof. %
Among other results,
we prove both models to be in \XP{} yet \W{1}-hard regarding the number of stages,
and that being revolutionary seems to be ``easier'' than being conservative.

\end{abstract}

\section{Introduction}
\label{sec:intro}
\appendixsection{sec:intro}

In well-studied classical committee election scenarios, given a set of
candidates, we aim at selecting a small committee that is, in a certain sense,
most suitable for a given collection of preferences over the
candidates~\citep{ComSoc2016,TrendsComSoc2017,EandComp2016}.
However, typically these scenarios concentrate solely on electing a
committee in a single election to the neglect of a time dimension. 
This neglect results in serious limitations of the model.

For instance,
it is not
possible to ensure a relationship 
(e.g.,\ a small number of changes)
between any two consecutive committees.
We tackle 
this issue
by introducing a \emph{multistage}~\citep{GuptaTW14}
variant of the problem. In this variant, a \emph{sequence} of voting profiles
\emph{is given}, and we seek a \emph{sequence} of small committees, each
collecting a reasonable number of approvals, 
such that 
the \emph{difference
between consecutive committees} is 
upper-bounded.

For instance,
assume a research community to seek organizers of a series of events (say those scheduled for next year).
The organizers must be fixed in advance to allow %
preparation time.
Since events 
may
differ in location and type, not every candidate fits equally well to every event.
Thus, 
each member (agent) of the community is asked to name one 
suitable
organizer for each event
and, based on this, the goal is to determine a sequence of organizing committees (one for each event).
Naturally, there are three constraints:
(i) each committee is bounded in size,
(ii) each committee has enough support from the agents, and
(iii) at least a certain number of candidates in consecutive committees overlap
to avoid a lack of knowledge transfer jeopardizing effectiveness.

Initiating a study of so far overlooked multistage
variant of multiwinner elections, we aim at understanding the computational
complexity of 
the related computational problems. In particular we want
to detect computationally tractable cases. Notably, our
multistage setting introduces two new dimensions to the standard model of
multiwinner elections: the relation between consecutive committees and the time.
Thus, our second goal is to observe how these two dimensions affect the
computational complexity of the introduced model.

\begin{figure*}[t]
  \centering
  \begin{tikzpicture}
  
      \usetikzlibrary{patterns,backgrounds,shapes,calc,shapes.arrows,shapes.misc}
      \def\xr{.85}
      \def\yr{1}

      \def\boxw{4}
      \def\boxh{1.075}

      \tikzstyle{xarc}=[->,gray,thick,>=latex]
      \tikzstyle{indicDot} = [fill=black, text width = 1.5pt, draw =
                      black, circle, inner sep = 0pt]
      \tikzstyle{indicLine} = [dotted]

      \newcommand{\parabox}[9]{
      \node (a#1) at (#2)[rectangle, draw, rounded corners, minimum width=\xr*\boxw cm, minimum height=\yr*\boxh cm,fill=gray,draw,thick]{};

      \node (top#1) at (a#1.north)[anchor=north,rectangle, rounded corners, minimum
      width=\xr*\boxw cm, minimum height=\yr*\boxh*0.25
      cm,fill=lightgray!10!white,thick,draw, text depth = 3pt, inner ysep = 1pt]{{\phantom{$\ell$}#3\phantom{$\ell$}}};

      \node at (top#1.south west)[anchor=north west,rectangle, rounded corners,
      minimum width=\xr*\boxw*0.5 cm, minimum height=\yr*\boxh*0.575
      cm,fill=#6,align=center,font=\footnotesize,draw, inner ysep = 1pt, text
      height = 13pt]{\vphantom{$l^l$}#4#8};

      \node at (top#1.south east)[anchor=north east,rectangle, rounded corners,
      minimum width=\xr*\boxw*0.5 cm, minimum height=\yr*\boxh*0.575
      cm,fill=#7,align=center,font=\footnotesize,draw, inner ysep = 1pt, text
      height = 13pt]{\vphantom{$l^l$}#5#9};

      \node (leftProb#1) at (top#1.south west)[anchor=west,
      text width=\xr*\boxw*0.275 cm, inner ysep = -1pt,
      shape=chamfered rectangle, chamfered rectangle corners={north east, south east},
      chamfered rectangle angle=70,
      chamfered rectangle xsep=3em,draw,fill=white]{\vphantom{\cmpv}};

      \node (rightProb#1) at (top#1.south east)[anchor=east,
      text width=\xr*\boxw*0.275 cm, inner ysep = -1pt,
      shape=chamfered rectangle, chamfered rectangle corners={north west,
      south west},
      chamfered rectangle angle=70,
      chamfered rectangle xsep=3em,draw,fill=white]{\vphantom{\rmpv{}}};
      
      \node at (leftProb#1.west)[anchor=west,draw=none,fill=none, inner sep=1pt]{\scriptsize \cmpv{}};
      \node at (rightProb#1.east)[anchor=east,draw=none,fill=none, inner
      sep=1pt]{\scriptsize
      \rmpv{}};
      }

      \def\xsh{4.5}
      \def\ysh{1.8}

      \parabox{m}{-1.5*\xsh*\xr,0*\yr}{$m$}{FPT}{FPT}{green!20!white}{green!20!white}{\tref{cor:FPTwrtm}}{\tref{cor:FPTwrtm}};
      \parabox{k}{-1.5*\xsh*\xr,-\ysh*\yr}{\small$k$}{XP, W[2]-h.}{FPT$^\dagger$}{orange!20}{green!20!white}{\textsuperscript{a}}{};
      \parabox{ell}{-1.5*\xsh*\xr,-2*\ysh*\yr}{\small$\ell$}{p-NP-h}{FPT$^\dagger$}{red!18!white}{green!20!white}
      {\tref{thm:bothnphard}}{};
      \draw[xarc] (aell) to (ak);
      \draw[xarc] (ak) to (am);

      \parabox{n}{1.5*\xsh*\xr,0*\yr}{$n$}{p-NP-h}{p-NP-h}{red!18!white}{red!18!white}{\tref{thm:bothnphard}}{\tref{thm:bothnphard}};
      \parabox{x}{1.5*\xsh*\xr,-\ysh*\yr}{$x$}{p-NP-h}{p-NP-h}{red!18!white}{red!18!white}
      {\tref{thm:bothnphard}}{\tref{thm:bothnphard}};
      \draw[xarc] (ax) to (an);

      \parabox{tau}{-0.5*\xsh*\xr,-2*\ysh*\yr}{$\tau$}{XP, W[1]-h.}{XP, W[1]-h}{orange!20}{orange!20}{\textsuperscript{b}}{\textsuperscript{c}};
      
      \parabox{elltau}{-0.5*\xsh*\xr,-1*\ysh*\yr}{$\ell+\tau$}{XP, W[1]-h.}{FPT$^\dagger$}{orange!20}{green!20!white}{\textsuperscript{b}}{};
      \parabox{ktau}{-0.5*\xsh*\xr,0*\ysh*\yr}{$k+\tau$}{XP, W[1]-h.}{FPT$^\dagger$}{orange!20}{green!20!white}{\textsuperscript{b}}{};
      \parabox{xtau}{0.5*\xsh*\xr,0*\ysh*\yr}{$x+\tau$}{FPT\apptref{prop:cmpvxtau}}{XP or FPT?}{green!20}{yellow!20}{}{};

      \draw[xarc] (ak) to (aktau);
      \draw[xarc] (aell) to (aelltau);
      \draw[xarc] (ax) to (axtau);
      
      \parabox{mn}{0*\xsh*\xr,\ysh*\yr}{\normalsize$m+n$}{FPT}{FPT}{green!20}{green!20}{
      }{
      };
      \parabox{mtau}{-1.5*\xsh*\xr,\ysh*\yr}{\normalsize$m+\tau$}{FPT}{FPT}{green!20!white}{green!20!white}{\tref{thm:PKwrtmtau}}{\tref{thm:PKwrtmtau}};
      \parabox{ntau}{1.5*\xsh*\xr,\ysh*\yr}{\normalsize$n+\tau$}{FPT}{FPT}{green!20!white}{green!20!white}{\tref{thm:pkntau}}{\tref{thm:pkntau}};

      \draw[xarc] (am.north east) to (amn);
      \draw[xarc] (am) to (amtau);
      \draw[xarc] (an.north west) to (amn);
      \draw[xarc] (an) to (antau);
      \draw[xarc] (atau) to (aelltau);
      \draw[xarc] (aelltau) to (aktau);
      \draw[xarc] (aktau) to (amtau);
      \draw[xarc] ($(atau.north east)-(0.2*\xr,0)$) to ($(axtau.south west)+(0.3*\xr,0)$);
      \draw[xarc] (axtau) to (antau);
      
      \begin{scope}[xshift=\xr*2.375 cm,yshift=\yr*-2.5cm,scale=0.8]
        \newcommand{\swD}{14} %
        \newcommand{\swDm}{13} %
        \newcommand{\swU}{4} %

        \newdimen\swR %
        \swR=1.95cm 
        \newdimen\swL %
        \swL=2.15cm

        \newcommand{\swA}{360/\swD} %
        
        \newcommand{\swFontRing}{\scriptsize\color{darkgray}}
        \draw[fill=green!15!white,draw=none] circle [radius=\swR*4/4];
        \draw[fill=yellow!15!white,draw=none] circle [radius=\swR*3/4];
        \draw[fill=orange!15!white,draw=none] circle [radius=\swR*2/4];
        \draw[fill=red!15!white,draw=none] circle [radius=\swR*1/4];

        \def \hookx {1em}
        \def \hooky {-1.5ex}
        \path (270:1*\swR/\swU-0.6*\swR/\swU) node[xshift = 9em, yshift =
        \hooky, anchor = west] (pnpLab) {\swFontRing p-NP-h};
        \path[dotted] (270:2*\swR/\swU-0.4*\swR/\swU) node[xshift = 9em, yshift =
        \hooky, anchor = west] (xpwLab) {\swFontRing  XP, W-h};
        \path (270:3*\swR/\swU-0.6*\swR/\swU) node[xshift = 9em, yshift =
        \hooky, anchor = west] (xpLab) {\swFontRing  XP};
        \path (270:4*\swR/\swU-0.6*\swR/\swU) node[xshift = 9em, yshift =
        \hooky, anchor = west] (fptLab) {\swFontRing  FPT};

        \node[indicDot] (pnpDot) at (250:1*\swR/\swU-0.25*\swR/\swU) {};
        \node[indicDot] (xpwDot) at (250:2*\swR/\swU-0.25*\swR/\swU) {};
        \node[indicDot] (xpDot) at (250:3*\swR/\swU-0.25*\swR/\swU) {};
        \node[indicDot] (fptDot) at (250:4*\swR/\swU-0.25*\swR/\swU) {};

        \coordinate (pnpTurn) at ($(pnpDot)  + (\hookx, 0)$);
        \coordinate (xpwTurn) at ($(xpwDot)  + (\hookx, 0)$);
        \coordinate (xpTurn) at ($(xpDot)  + (\hookx, 0)$);
        \coordinate (fptTurn) at ($(fptDot)  + (\hookx, 0)$);

        \draw[indicLine]  (pnpLab.west) --  (pnpTurn |- pnpLab.west) -- (pnpDot);
        \draw[indicLine]  (xpwLab.west) --  (xpwTurn |- xpwLab.west) -- (xpwDot);
        \draw[indicLine]  (xpLab.west)  --  (xpTurn |- xpLab.west)   -- (xpDot); 
        \draw[indicLine]  (fptLab.west) --  (fptTurn |- fptLab.west) -- (fptDot);
        \path (0:0cm) coordinate (O); %
          \foreach \swX in {1,...,\swD}{
            \draw[thin,color=lightgray] (\swX*\swA:0) -- (\swX*\swA:\swR);
          }

          \foreach \swY in {1,...,\swU}{
            \foreach \swX in {1,...,\swD}{
              \path (\swX*\swA:\swY*\swR/\swU-0.5*\swR/\swU) coordinate (D\swX-\swY);
            }
          }
          
          \newcommand{\swPolygon}[3]{
            \foreach \x/\y in {#1}{\node (t\x) at (D\x-\y)[scale=0.4,draw,#2]{};}
              \foreach[evaluate=\x as \y using int(\x+1)] \x in {1,2,...,\swDm}{\draw[#3] (t\x) to (t\y);}\draw[#3] (t\swD) to (t1);
          }
          \foreach \x/\y in {
          1/$x$,
          2/$k$,
          3/$m$,
          4/$n$,
          5/$m+n$\phantom{AA},
          6/$\tau$,
          7/$k+\tau$~~~~,
          8/$m+\tau$,
          9/\phantom{AA}$x+\tau$,
          10/\phantom{AA}$n+\tau$,
          11/$\ell$,
          12/$\ell+n$,
          13/\phantom{AA}$k+x$,
          14/\phantom{AA}$k+n$
          }{\path (\x*\swA:\swL) node (L\x) {\scriptsize \y};}

        \swPolygon{1/1,2/2,3/4,4/1,5/4,6/2,7/2,8/4,9/4,10/4,11/1,12/1,13/2,14/3}{
        circle,
        draw=\swColA!50!black,fill=\swColA,scale=0.85}{thick,\swColA}
        \swPolygon{1/1,2/4,3/4,4/1,5/4,6/2,7/4,8/4,9/3,10/4,11/4,12/4,13/4,14/4}{fill=\swColB,draw=\swColB!50!black,scale=0.66}{thick,densely dashed,\swColB}
      \end{scope}

    \end{tikzpicture}
    \caption{
    Overview of results for \cmpv{} and \rmpv{}. Abbreviations
    p-\NP-h and~\W{1}-h stand for, respectively,
    para-\NP-hard and~\W{1}-hard. An arrow from one
    parameter~$p$ to another parameter~$p'$ indicates that~$p$ can be upper
    bounded by some function in~$p'$ (e.g.,~$\ell\leq 2k$, $k\leq m$, or~$x\leq
    n$). 
    The spiderweb diagram depicts further results being not displayed for readability
    (solid: conservative; dashed: revolutionary).
    ~~\textsuperscript{a}\,\trefs{thm:xpregardingk}{thm:cmpvwhardktau}
    ~~\textsuperscript{b}\,\trefs{thm:xptau}{thm:cmpvwhardktau}
    ~~\textsuperscript{c}\,\trefs{thm:xptau}{thm:rmpvwhardtau}
    ~~$^\dagger$\,\protect\citep{KRZ21}
    }
    \label{fig:results}
  \end{figure*}

\subsection{Model and Examples}

We denote by~$\N$ and~$\N_0$ the natural numbers excluding and including zero, respectively.
For some function~$f\colon A\to B$,
let~$f^{-1}(B')=\{a\in A\mid f(a)\in B'\}$ for every~$B'\subseteq B$.
We use basic notation from graph theory~\citep{Diestel10}
and parameterized algorithmics~\citep{cygan2015parameterized}.
The main problem of this work is as follows.
\problemdef{\MPV{} (\mpv)}
{A set of agents~$A=\{a_1,\ldots,a_n\}$, 
a set of candidates~$C=\{c_1,\ldots,c_m\}$, 
a sequence~$U=(u_1,\ldots,u_\tau)$ of~$\tau$ voting profiles with~$u_t\colon
A \to C \cup \{\emptyset\}$, $t \in \set{\tau}$, 
and three integers~$k\in\N,$ $\ell\in\N_0$, and~$x\in\N$.}
{Is there a sequence~$(C_1,\ldots,C_\tau)$ of committees~$C_t\subseteq C$ 
such that for all~$t\in \set{\tau}$ it holds true that
$|C_t|\leq k$ and
$\score_t(C_t)\ceq |u_t^{-1}(C_t)|\geq x$,
and 
\begin{align}
  |\symdif{C_t}{C_{t+1}}|&\leq \ell   \label{eq:mpv}
\end{align}
holds true for all~$t\in\set{\tau-1}$?
}
  
\noindent
One may wonder why we chose an upper bound on~$k$ in the problem definition
instead of specifying an exact constant committee size.
While most natural instances will have solutions with committees of size exactly~$k$,
requiring them rules out odd symmetric differences, e.g., $\ell=1$.

\toappendix{
 \centering
 \begin{figure*}[t]
  \begin{tikzpicture}
  
      \usetikzlibrary{patterns,backgrounds,shapes,calc}
      \def\xr{.85}
      \def\yr{1}

      \def\boxw{4}
      \def\boxh{1.4}

      \tikzstyle{xarc}=[->,gray,thick,>=latex]
      \tikzstyle{indicDot} = [fill=black, text width = 1.5pt, draw =
                      black, circle, inner sep = 0pt]
      \tikzstyle{indicLine} = [dotted]

      \newcommand{\parabox}[9]{
      \node (a#1) at (#2)[rectangle, rounded corners, minimum width=\xr*\boxw cm, minimum height=\yr*\boxh cm,fill=gray,draw,very thick]{};
      
      \node at (a#1.south west)[anchor=south west,rectangle, rounded corners, minimum width=\xr*\boxw*0.5 cm, minimum height=\yr*\boxh*0.7 cm,draw,fill=white,label={[yshift=-\yr*\boxh*0.375 cm]90:{\footnotesize \cmpv{}}}]{};
      \node at (a#1.south west)[anchor=south west,rectangle, rounded corners,
      minimum width=\xr*\boxw*0.5 cm, minimum height=\yr*\boxh*0.25
      cm,fill=#6,align=center,font=\scriptsize,draw]{\vphantom{$l^l$}#4#8};
      \node at (a#1.south east)[anchor=south east,rectangle, rounded corners, minimum width=\xr*\boxw*0.5 cm, minimum height=\yr*\boxh*0.7 cm,fill=white,draw,label={[yshift=-\yr*\boxh*0.375 cm]90:{\footnotesize \rmpv{}}}]{};
      \node at (a#1.south east)[anchor=south east,rectangle, rounded corners,
      minimum width=\xr*\boxw*0.5 cm, minimum height=\yr*\boxh*0.25
      cm,fill=#7,align=center,font=\scriptsize,draw]{\vphantom{$l^l$}#5#9};
      \node at (a#1.north)[anchor=north,rectangle, rounded corners, minimum width=\xr*\boxw cm, minimum height=\yr*\boxh*0.25 cm,fill=lightgray!10!white,thick,draw]{{\phantom{$\ell$}#3\phantom{$\ell$}}};
      }

      \def\xsh{4.5}
      \def\ysh{1.8}

      \parabox{m}{-1.5*\xsh*\xr,0*\yr}{$m$}{FPT, noPK}{FPT, noPK}{green!20!white}{green!20!white}{\tref{cor:FPTwrtm}}{\tref{cor:FPTwrtm}};
      \parabox{k}{-1.5*\xsh*\xr,-\ysh*\yr}{\small$k$}{XP, W[2]-h.}{FPT$^\dagger$, noPK}{orange!20}{green!20!white}{\textsuperscript{a}}{};
      \parabox{ell}{-1.5*\xsh*\xr,-2*\ysh*\yr}{\small$\ell$}{p-NP-h}{FPT$^\dagger$, noPK}{red!18!white}{green!20!white}{\tref{prop:cmpvnphard}}{};
      \draw[xarc] (aell) to (ak);
      \draw[xarc] (ak) to (am);

      \parabox{n}{1.5*\xsh*\xr,0*\yr}{$n$}{p-NP-h}{p-NP-h}{red!18!white}{red!18!white}{\tref{thm:bothnphard}}{\tref{thm:bothnphard}};
      \parabox{x}{1.5*\xsh*\xr,-\ysh*\yr}{$x$}{p-NP-h}{p-NP-h}{red!18!white}{red!18!white}{\tref{prop:cmpvnphard}}{\tref{cor:rmpvnphard}};
      \draw[xarc] (ax) to (an);

      \parabox{tau}{-0.5*\xsh*\xr,-2*\ysh*\yr}{$\tau$}{XP, W[1]-h.}{XP, W[1]-h}{orange!20}{orange!20}{\textsuperscript{b}}{\textsuperscript{c}};
      
      \parabox{elltau}{-0.5*\xsh*\xr,-1*\ysh*\yr}{$\ell+\tau$}{XP, W[1]-h.}{FPT$^\dagger$, PK?}{orange!20}{green!20!white}{\textsuperscript{b}}{};
      \parabox{ktau}{-0.5*\xsh*\xr,0*\ysh*\yr}{$k+\tau$}{XP, W[1]-h.}{FPT$^\dagger$, PK?}{orange!20}{green!20!white}{\textsuperscript{b}}{};
      \parabox{xtau}{0.5*\xsh*\xr,0*\ysh*\yr}{$x+\tau$}{FPT\tref{prop:cmpvxtau}, PK?}{XP, FPT?}{green!20}{yellow!20}{}{};

      \draw[xarc] (ak) to (aktau);
      \draw[xarc] (aell) to (aelltau);
      \draw[xarc] (ax) to (axtau);
      
      \parabox{mn}{0*\xsh*\xr,\ysh*\yr}{\normalsize$m+n$}{FPT, noPK}{FPT, noPK}{green!20}{green!20}{
      }{\tref{thm:nopkmn}};
      \parabox{mtau}{-1.5*\xsh*\xr,\ysh*\yr}{\normalsize$m+\tau$}{FPT, PK}{FPT, PK}{green!10!white}{green!10!white}{\tref{thm:PKwrtmtau}}{\tref{thm:PKwrtmtau}};
      \parabox{ntau}{1.5*\xsh*\xr,\ysh*\yr}{\normalsize$n+\tau$}{FPT, PK}{FPT, PK}{green!10!white}{green!10!white}{\tref{thm:pkntau}}{\tref{thm:pkntau}};

      \draw[xarc] (am.north east) to (amn);
      \draw[xarc] (am) to (amtau);
      \draw[xarc] (an.north west) to (amn);
      \draw[xarc] (an) to (antau);
      \draw[xarc] (atau) to (aelltau);
      \draw[xarc] (aelltau) to (aktau);
      \draw[xarc] (aktau) to (amtau);
      \draw[xarc] (atau.north east) to ($(axtau.south west)+(0.3*\xr,0)$);
      \draw[xarc] (axtau) to (antau);
      
      \begin{scope}[xshift=\xr*2.375 cm,yshift=\yr*-2.9cm,scale=0.8]
        \newcommand{\swD}{14} %
        \newcommand{\swDm}{13} %
        \newcommand{\swU}{5} %

        \newdimen\swR %
        \swR=1.95cm 
        \newdimen\swL %
        \swL=2.15cm

        \newcommand{\swA}{360/\swD} %
        
        \newcommand{\swFontRing}{\scriptsize\color{darkgray}}
        \draw[fill=green!7!white,draw=none] circle [radius=\swR*5/5];
        \draw[fill=green!15!white,draw=none] circle [radius=\swR*4/5];
        \draw[fill=yellow!15!white,draw=none] circle [radius=\swR*3/5];
        \draw[fill=orange!15!white,draw=none] circle [radius=\swR*2/5];
        \draw[fill=red!15!white,draw=none] circle [radius=\swR*1/5];

        \def \hookx {1em}
        \def \hooky {-.75ex}
        \path (270:1*\swR/\swU-0.5*\swR/\swU) node[xshift = 9em, yshift =
        \hooky, anchor = west] (pnpLab) {\swFontRing p-NP-h};
        \path (270:2*\swR/\swU-0.5*\swR/\swU) node[xshift = 9em, yshift =
        \hooky, anchor = west] (xpwLab) {\swFontRing  XP, W-h};
        \path (270:3*\swR/\swU-0.5*\swR/\swU) node[xshift = 9em, yshift =
        \hooky, anchor = west] (xpLab) {\swFontRing  XP};
        \path (270:4*\swR/\swU-0.5*\swR/\swU) node[xshift = 9em, yshift =
        \hooky, anchor = west] (fptLab) {\swFontRing  FPT, noPK};
        \path (270:5*\swR/\swU-0.5*\swR/\swU) node[xshift = 9em, yshift =
        \hooky, anchor = west] (fptpkLab) {\swFontRing  FPT, PK};

        \node[indicDot] (pnpDot) at (250:1*\swR/\swU-0.25*\swR/\swU) {};
        \node[indicDot] (xpwDot) at (250:2*\swR/\swU-0.25*\swR/\swU) {};
        \node[indicDot] (xpDot) at (250:3*\swR/\swU-0.25*\swR/\swU) {};
        \node[indicDot] (fptDot) at (250:4*\swR/\swU-0.25*\swR/\swU) {};
        \node[indicDot] (fptpkDot) at (250:5*\swR/\swU-0.25*\swR/\swU) {};

        \coordinate (pnpTurn) at ($(pnpDot)  + (\hookx, 0)$);
        \coordinate (xpwTurn) at ($(xpwDot)  + (\hookx, 0)$);
        \coordinate (xpTurn) at ($(xpDot)  + (\hookx, 0)$);
        \coordinate (fptTurn) at ($(fptDot)  + (\hookx, 0)$);
        \coordinate (fptpkTurn) at ($(fptpkDot)  + (\hookx, 0)$);

        \draw[indicLine]  (pnpLab.west) --  (pnpTurn |- pnpLab.west) -- (pnpDot);
        \draw[indicLine]  (xpwLab.west) --  (xpwTurn |- xpwLab.west) -- (xpwDot);
        \draw[indicLine]  (xpLab.west)  --  (xpTurn |- xpLab.west)   -- (xpDot); 
        \draw[indicLine]  (fptLab.west) --  (fptTurn |- fptLab.west) -- (fptDot);
        \draw[indicLine]  (fptpkLab.west) --  (fptpkTurn |- fptpkLab.west) --
        (fptpkDot);
        \path (0:0cm) coordinate (O); %
          \foreach \swX in {1,...,\swD}{
            \draw[thin,color=lightgray] (\swX*\swA:0) -- (\swX*\swA:\swR);
          }

          \foreach \swY in {1,...,\swU}{
            \foreach \swX in {1,...,\swD}{
              \path (\swX*\swA:\swY*\swR/\swU-0.5*\swR/\swU) coordinate (D\swX-\swY);
            }
          }
          
          \newcommand{\swPolygon}[3]{
            \foreach \x/\y in {#1}{\node (t\x) at (D\x-\y)[scale=0.4,draw,#2]{};}
              \foreach[evaluate=\x as \y using int(\x+1)] \x in {1,2,...,\swDm}{\draw[#3] (t\x) to (t\y);}\draw[#3] (t\swD) to (t1);
          }
          \foreach \x/\y in {
          1/$x$,
          2/$k$,
          3/$m$,
          4/$n$,
          5/$m+n$\phantom{AA},
          6/$\tau$,
          7/$k+\tau$,
          8/$m+\tau$,
          9/\phantom{AA}$x+\tau$,
          10/\phantom{AA}$n+\tau$,
          11/$\ell$,
          12/$\ell+n$,
          13/\phantom{AA}$k+x$,
          14/\phantom{AA}$k+n$
          }{\path (\x*\swA:\swL) node (L\x) {\scriptsize \y};}

        \swPolygon{1/1,2/2,3/4,4/1,5/4,6/2,7/2,8/5,9/4,10/5,11/1,12/1,13/2,14/3}{
        circle,
        draw=\swColA!50!black,fill=\swColA,scale=0.85}{thick,\swColA}
        \swPolygon{1/1,2/4,3/4,4/1,5/4,6/2,7/4,8/5,9/3,10/5,11/4,12/4,13/4,14/4}{fill=\swColB,draw=\swColB!50!black,scale=0.66}{thick,densely dashed,\swColB}
      \end{scope}

    \end{tikzpicture}
    \caption{
    Overview of results for \cmpv{} and \rmpv{}. 
    Abbreviations PK, noPK, p-\NP-h, and~\W{1}-h stand for, respectively,
    ``polynomial kernel'', ``no polynomial kernel
    unless~\NPincoNPslashpoly'', para-\NP-hard, and~\W{1}-hard. An arrow from one
    parameter~$p$ to another parameter~$p'$ indicates that~$p$ can be upper
    bounded by some function in~$p'$ (e.g.,~$\ell\leq 2k$, $k\leq m$, or~$x\leq
    n$). 
    ``?'' indicates that a detailed classification 
    is unknown.
    The spiderweb diagram depicts further results being not displayed for readability
    (solid: conservative; dashed: revolutionary).
    ~~\textsuperscript{a}\,\trefs{thm:xpregardingk}{thm:cmpvwhardktau}
    ~~\textsuperscript{b}\,\trefs{thm:xptau}{thm:cmpvwhardktau}
    ~~\textsuperscript{c}\,\trefs{thm:xptau}{thm:rmpvwhardtau}
    ~~$^\dagger$\,\protect\citep{KRZ21}
    }
    \label{app:fig:results}
  \end{figure*}
}

SNTV comes from ``single non-transferable vote'',
to which our model boils down
for a single stage.
While most of our results transfer to the setting of general approval
profiles (see~\cref{sec:conclusion}),
we use
''Plurality'' profiles,
where each agent approves 
exactly one candidate,
for four reasons.
(I)~Aiming for
positive algorithmic results and for recognizing the influence of the basic
model properties,
it is
most natural to start
with the simplest relevant scenario.
Even though SNTV is simple,
(II)~it is
the only committee scoring rule serving finding representation- \emph{and}
excellence-focused committees~\citep{FSST19}. Hence, it forms a good basis
for a further exploration of our model for another rules
reaching these two goals.
(III)~Plurality profiles are widely accepted 
in practice and 
form
complex voting procedures (e.g.,\,STV, the two-vote
or two-stage voting systems used for the German or French parliament).
(IV)~Selecting %
a single candidate
is only a weak (cognitive) barrier for human agents increasing the applicability
of the model. 
In fact, our definition can be easily
extended to more expressive scoring-based voting profiles.

Motivated by respective requirements in many applications, \citet{BN21} propose
a systematic study of (multiwinner) voting models that handle
multiple preference profiles at once 
(e.g., when incorporating changes over time).
Following their work, our paper provides one of the first models opening the
field for new potential applications.
Indeed, \cmpv{} models various possible practical scenarios,
two of which we briefly sketch below.
{%

\noindent
\paragraph{Buffet Selection.}
Suppose we are asked to organize the venue's breakfast buffet of a multiday event 
(like a workshop seminar).
We offer different disjoint food bundles 
(candidates) 
for breakfast
and ask the participants of the event 
to share their preferences of which 
bundle is their favorite for which day.
Due to limited space,
we can offer only at most some number
of bundles in the buffet (committee) simultaneously.
Moreover,
to stay at low cost and to avoid food waste,
we want that few bundles change from one day to the next.
Clearly,
given these constraints
and the collected preferences (voting profiles),
we want at all days 
to have a high number
of participants whose voted bundle made it into the buffet.

\noindent
\paragraph{Exhibition Composition.}
When planning a multiday exhibition of sculptures (candidates)
in a lobby of a hotel 
where we are enabled 
neither to show at once all the sculptures that we want to exhibit,
nor to exchange arbitrarily many sculptures between consecutive exhibitions days 
(due to,
e.g.,\ limited capacity of transporting sculptures between some depot and the hotel).
To nevertheless offer an enjoyable experience for numerous visitors,
we ask the visitors to vote for each day they plan to visit
for their favorite sculpture to be exhibited (to keep the poll simple and robust).

Note that if we drop condition~\eqref{eq:mpv}
(or, equivalently, set~$\ell=2k$),
then on the one hand,
we have no control over changes between consecutive committees,
yet on the other hand,
we obtain a linear-time solvable problem.
Thus,
control comes with a computational cost,
bound in the value of~$\ell$.
To proceed with our second goal---in particular,
better understanding of condition~\eqref{eq:mpv}, %
we additionally study a problem variant of~\cmpv{}. We obtain this variant,
referred to as \RMPV{}~(\rmpv{}),
by replacing~\eqref{eq:mpv} by~$|\symdif{C_t}{C_{t+1}}|\geq \ell$.
In words,
in~\rmpv{} we request a change of size at least~$\ell$ between consecutive committees.
By this,
while we complemented the meaning of~$\ell$,
it still expresses a control over the changes,
and hence comes possibly again with a computational cost.
We investigate whether 
the ``\emph{conservative}'' (\cmpv{}) and the revolutionary (\rmpv{}) variant differ, 
and if so, 
then how and why.

\subsection{State of the Art and Our Contributions}

Our model follows the recently proposed \emph{multistage}
model~\citep{EMS14,GuptaTW14}, that led to several multistage
problems~\citep{BampisELP18,BampisET19,FluschnikNRZ22,FluschnikNSZ23,ChimaniTW22,HeegerHKNRS21,BampisEST19,BampisEM18,KRZ21,Fluschnik21,FluschnikK22} next to
ours. Graph problems considered in the multistage model often study classic
problems on temporal graphs (a sequence of graphs over the same vertices).
While all the multistage problems known from literature cover the variant we call
``conservative,'' our revolutionary variant forms a novel submodel herein.

Although, to the best of our knowledge our model is novel, some other aspects of
selecting multiple (sub)committees have been studied in (computational)
social choice theory. The closest is %
a recent work of
\citet{BKN20}, who augment classic
multiwinner elections with a time dimension. Accordingly, they consider
selecting a sequence of committees. However, the major differences
with our work are, first, that they do not allow agents (voters) to change their ballots
over time and, second, that there are no explicit constraints on the differences
between two successive committees.
\citet{FZC17}, \citet{Lac20}, and \citet{PP13}
allow this but they consider an online scenario (in contrary to our problem that
is offline).
Finally, \citet{AL18} study a so-called subcommittee
voting, where a final committee is a collection of several subcommittees. Their
model, however, does not take time into account and requires that all
subcommittees are mutually disjoint.

\paragraph{Our Contributions.}
 We present the first work in the multistage model 
 that 
 studies a problem from computational social choice
 and that 
 compares the two cases that we call conservative and revolutionary.
 We prove \cmpv{} and~\rmpv{} to be \NP-complete, 
 even for two agents.
 We present a full parameterized complexity analysis of the two problems 
 (see~\cref{fig:results} for an overview of our results;
 see~\cref{app:fig:results} in the appendix for a more precise overview of the
 results).
 Herein, the central tractability concept is fixed-parameter tractability:
 Given some parameter~$\rho$, a problem is called fixed-parameter tractable (FPT) when it can be solved in $f(\rho) |I|^c$~time,
 where $f$~is a computable function only depending on~$\rho$, $|I|$~is the instance size, and~$c$ is some constant.
 Less positively, we can sometimes only show \XP-membership parameterized by~$\rho$, which means that the problem can be solved
 in polynomial time when $\rho$~is constant (but the degree of the polynomial may depend on~$\rho$).
 Fixed-parameter tractability can often be excluded (under standard parameterized complexity assumptions) showing \W{1}- or \W{2}-hardness
 (using parameterized reductions which are similar to standard polynomial-time many-one reductions).
 \cmpv{} and~\rmpv{} are almost indistinguishable regarding their parameterized complexity,
 but when parameterized by~$\ell$,
 \cmpv{} is \NP-hard and \rmpv{} is contained in~\XP.
 Moreover,
 both problems are contained in~\XP{} and \W{1}-hard regarding the parameter number~$\tau$ of stages;
 Note that for many natural multistage problems (and even temporal graph problems),
 such a classification is unknown---$\tau$
 usually leads to para-\NP-hardness.
 Our results further indicate that efficient data reductions
 (see~\cref{sec:preprocessing} for the definition) 
 in terms of polynomial-size problem kernelizations require a combination with~$\tau$:
 While combining the number of agents with the number of candidates allows for no polynomial-size problem kernel (unless~\NPincoNPslashpoly),
 combining any of the two with~$\tau$ yields kernels of polynomial size.

Many details, marked by~\appsymb, are deferred to the appendix.

\section[Limits of Efficient Computation]{Limits of Efficient Computation}
\label{sec:nphardness}
\appendixsection{sec:nphardness}

To prepare the ground for our algorithmic investigations of the
introduced problems, 
we first settle the computational complexity lower bounds
of~\cmpv{} and~\rmpv{}.
We begin with \NP-hardness for quite restricted cases.

\begin{theorem}[\appref{thm:bothnphard}]
 \label{thm:bothnphard}
 (i) \cmpv{} is \NP-hard even for two agents,
 $\ell=0$,
 $x=1$,
 and~$k=|C|/2$.
 (ii) \rmpv{} is \NP-hard even for two agents, 
 $\ell=2k$, 
 $x=1$, 
 and~$k=|C|/2$.
\end{theorem}

\toappendix
{
\subsection{Proof of~\cref{thm:bothnphard}(i)}
\label{proof:thm:bothnphard}

\noindent
We prove the statements of~\cref{thm:bothnphard} separately,
starting with a specific, computationally intractable case of~\cmpv{},
which follows by a reduction from a \textsc{Vertex Cover} variant.\footnote{We remark
that $x=1$ allows that most of the candidates can have no support.
At least in the conservative setting, however,
there can be situations with small~$\ell$ where this is unavoidable.
Moreover, we can modify the reduction to include at each stage for each candidate
one (additional) supporter and increase~$x$ by~$k$.}

\begin{proposition}%
 \label{prop:cmpvnphard}
 \cmpv{} is \NP-hard even for two agents,
 $\ell=0$,
 $x=1$,
 and~$k=|C|/2$.
\end{proposition}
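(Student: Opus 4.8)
The plan is to reduce from \VC{} (given a graph $G=(V,E)$ and an integer $k'$, decide whether $G$ has a vertex cover of size at most $k'$). The crucial observation is that $\ell=0$ collapses the whole instance to a single committee: since $|\symdif{C_t}{C_{t+1}}|\le 0$ forces $C_t=C_{t+1}$, every valid solution uses one fixed set $S\ceq C_1=\dots=C_\tau$. With two agents and $x=1$, the requirement $\score_t(S)\ge 1$ at a stage where the two agents vote for candidates $p$ and $q$ says exactly that $S$ contains $p$ or $q$. Hence, if the candidates are the vertices of $G$ and we create one stage per edge $\{u,v\}\in E$ in which the two agents vote for $u$ and $v$ respectively, then a fixed set $S$ with $|S|\le k$ is a valid multistage solution if and only if $S$ is a vertex cover of $G$ with $|S|\le k$.

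The only mismatch is the prescribed budget $k=|C|/2$, since the \VC{} target $k'$ is arbitrary whereas here the committee bound is forced to be exactly half the number of candidates. I would therefore first massage the \VC{} instance into an equivalent one whose sought cover size equals half the number of vertices. If $k'\ge |V|/2$, I add $2k'-|V|$ isolated vertices; as isolated vertices never need to be covered, this preserves the answer while making the vertex count equal to $2k'$, so the target $k'$ is exactly half of it. If $k'<|V|/2$, I instead attach $b\ceq |V|-2k'$ vertex-disjoint triangles; each triangle forces two of its three vertices into any cover, so the optimum grows by exactly $2b$, and the new target $k''\ceq k'+2b$ satisfies $k''=(|V|+3b)/2$, i.e.\ again half of the now $|V|+3b$ many vertices. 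Both transformations preserve equivalence, run in polynomial time, and yield a graph $G'$ with an even number of vertices whose sought cover size is $|V(G')|/2$.

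With this normalized instance in hand the reduction is immediate: set $C\ceq V(G')$, create one stage per edge of $G'$ (encoding the endpoints as the two agents' votes as above), and put $n=2$, $\ell=0$, $x=1$, and $k=|C|/2$. Both directions follow from the equivalence sketched in the first paragraph, with $|S|\le k$ translating directly into the cover-size bound. Since \VC{} is \NP-hard, so is \cmpv{} under these restrictions.

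The step I expect to be the only genuine obstacle is the budget normalization, specifically the case $k'<|V|/2$: padding with isolated vertices or matching edges cannot push the cover-to-vertices ratio below one half, so a gadget requiring strictly more than half of its own vertices (the triangle, at ratio $2/3$) is exactly what is needed to drive the ratio down to one half. Everything else is bookkeeping: verifying that each triangle is correctly covered under the two-agent, $x=1$ encoding, and dispatching the trivial corner case in which $G'$ has no edges (a \yes{}-instance, handled directly).
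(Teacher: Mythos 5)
Your proposal is correct and follows essentially the same route as the paper: both normalize \VC{} to the "half vertex cover" variant by padding (the paper uses a single clique on $|V|-2r+2$ vertices where you use $|V|-2k'$ disjoint triangles, an immaterial difference) and then encode each edge as a stage in which the two agents approve its endpoints, so that $\ell=0$ forces a single committee that must be a vertex cover. The arithmetic of your triangle padding checks out ($k'+2b=(|V|+3b)/2$ for $b=|V|-2k'$), so no gap remains.
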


\noindent
In the following proof,
we give a polynomial-time many-one reduction from a special variant~\prob{Half Vertex Cover} of the \prob{Vertex Cover} problem,
where~$r$ is set to half the number of vertices:
\problemdef{Vertex Cover}
{An undirected graph~$G$ and an integer~$r\in\N$.}
{Is whether there is 
a set~$X\subseteq V(G)$ of size at most~$r$ such that~$G-X$ contains no edge.}
\noindent
It is not difficult to see that \prob{Half Vertex Cover} is \NP-complete:
We can reduce any instance~$(G,r)$ of~\VC{} to \prob{Half Vertex Cover} by adding a clique on~$|V(G)|-2r+2$ vertices to~$G$ if~$r<|V(G)|/2$,
or adding enough isolated vertices to~$G$ until~$r=|V(G)|/2$.

\begin{proof}
 Let~$I=(G=(V,E))$ be an instance of~\prob{Half Vertex Cover}, 
 and let~$E=\{e_1,\ldots,e_m\}$ without loss of generality.
 We construct an instance~$I'=(A,C,u,k,\ell,x)$ of~\cmpv{} in polynomial time as follows.
 
 \noindent\textbf{Construction:}
 Set the set~$C$ of candidates equal to~$V$, 
 and the set~$A$ of agents equal to~$\{a_1,a_2\}$.
 Next,
 construct~$m$ voting profiles as follows.
 For profile~$u_t$, $t\in\set{m}$,
 set~$u_t(a_1)=\{v\}$ and~$u_t(a_2)=\{w\}$, 
 where~$e_t=\{v,w\}$.
 Finally, 
 set~$k=|V|/2$, $\ell=0$, and~$x=1$.
 This finishes the construction.
 
 \noindent\textbf{Correctness:}
 We claim that~$I$ is a \yes-instance if and only if~$I'$ is a \yes-instance.
 
 \RD{}
 Let~$X\subseteq V$ be a vertex cover of size at most~$|V|/2$.
 We claim that the sequence $(C_1,\ldots,C_m)$ with~$C_t=X$ for every~$t\in\set{m}$ is a solution to~$I'$.
 Suppose towards a contradiction that this is not the case.
 Firstly, 
 observe that~$\symdif{C_t}{C_{t+1}}=\emptyset$ for every $t\in\set{m-1}$,
 and that~$|C_t|\leq k$ for every~$t\in\set{m}$.
 Hence, 
 there is a~$t\in\set{m}$ such that~$\score_t(C_t)=0$.
 This means that for edge~$e_t=\{v,w\}$,
 both~$v,w\not\in X$, 
 contradicting the fact that~$X$ is a vertex cover of~$G$.
 Hence,
 $(C_1,\ldots,C_m)$ is a solution to~$I'$.
 
 \LD{}
 Let~$(C_1,\ldots,C_m)$ be a solution to~$I'$.
 Note that since~$\ell=0$,
 $C_i=C_j$ for every~$i,j\in\set{m}$.
 We claim that~$X\ceq C_1$ is a vertex cover of~$G$ (note that~$|X|\leq k$).
 Let~$t\in\set{m}$ be arbitrary but fixed.
 Since~$\score_t(C_t)\geq 1$,
 we know that~$C_t\cap e_t\neq \emptyset$.
 Hence,
 since~$X=C_t$,
 edge~$e_t$ is covered by~$X$.
 Since~$t$ was chosen arbitrarily,
 it follows that~$X$ is a vertex cover of~$G$ of size at most~$k$.
\lqed
\end{proof}
}

\noindent
Herein,
\cref{thm:bothnphard}(ii)
follows from \cref{thm:bothnphard}(i)
due to the following result 
(which we also use later in this section).

\begin{lemma}[\appref{lem:cmpvtormpv}]
 \label{lem:cmpvtormpv}
 There is an algorithm that,
 on every instance $(A,C,U,k,\ell,x)$ with~$\ell=0$ and~$k=|C|/2$ of~\cmpv{},
 computes an equivalent instance~$(A,C',U',k',\ell',x)$
 of~\rmpv{} with~$k'=|C'|/2$, $\ell'=2k'$, and~$|U'|=2|U|+1$ in polynomial time.
\end{lemma}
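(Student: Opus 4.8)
The plan is to first simplify the source instance and then realise the conservative ``keep the committee fixed'' requirement as a revolutionary ``flip to the complement and back'' behaviour. As in the proof of \cref{prop:cmpvnphard}, since $\ell=0$ every feasible sequence for the source \cmpv{} instance consists of a single repeated committee; hence the source is a \yes-instance iff there is one set $C^*\subseteq C$ with $|C^*|\le k=|C|/2$ and $\score_t(C^*)\ge x$ for all $t\in\set{\tau}$. On the target side, because $\ell'=2k'$ and every committee has size at most $k'$, any two consecutive committees satisfy $|\symdif{C_t}{C_{t+1}}|\le|C_t|+|C_{t+1}|\le 2k'$, with equality only when both have size exactly $k'$ and are disjoint; together with $k'=|C'|/2$ this forces $C_{t+1}=C'\setminus C_t$. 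Thus a feasible \rmpv{} sequence is completely described by a single size-$k'$ set $S$ that alternates with its complement $C'\setminus S$, and the task becomes to make ``the same $S$ at the odd stages'' behave like the source's single repeated committee.

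I would therefore double the candidate set, $C'\ceq C\cup\bar C$ with a disjoint shadow copy $\bar C=\{\bar c\mid c\in C\}$, so that $k'=|C'|/2=|C|=2k$ and $\ell'=2k'$, exactly as required, and the $2|U|+1$ profiles interleave the originals with shadow profiles: for $t\in\set{\tau}$ I place $u_t$ (each agent keeps its vote in $C$) where the alternation assigns committee $S$, and a shadow copy where agent $a$ votes for $\overline{u_t(a)}\in\bar C$ where the alternation assigns committee $C'\setminus S$, using the extra ($2|U|+1$)-th stage to anchor the construction. The intended correspondence encodes the source committee by selecting, for each candidate, the original if chosen and the shadow otherwise: $S=C^*\cup\{\bar c\mid c\in C\setminus C^*\}$, which has size $|C^*|+(|C|-|C^*|)=|C|=k'$. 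For the forward direction one checks that under this encoding both stage types collect exactly $\score_t(C^*)$ votes: on an $S$-stage the score is $|\{a\mid u_t(a)\in S\cap C\}|=\score_t(C^*)$, and on a $(C'\setminus S)$-stage the shadow $\overline{u_t(a)}$ lies in $C'\setminus S$ precisely when $u_t(a)\in C^*$, again giving $\score_t(C^*)\ge x$; since enlarging a committee never decreases its score, padding $C^*$ up to size exactly $k$ is harmless.

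For the backward direction I would read off $C^*\ceq S\cap C$, so that the $S$-stages immediately yield $\score_t(C^*)\ge x$ for every $t$. The main obstacle is transferring the size bound $|C^*|\le k$: the revolutionary constraints are only lower bounds on scores, and such constraints on their own never forbid an oversized $S\cap C$ (indeed $S=C$ would otherwise satisfy every score requirement), so the shadow stages and the extra stage must be arranged to \emph{certify} smallness. The crux is to reconcile two opposing demands — the shadow stages must stay satisfiable for the intended committee (needed forward) yet must bind tightly enough that any feasible $S$ produces a subset of $C$ of size at most $k=|C|/2$ (needed backward) — and this is exactly where the exact-half balance $k'=|C'|/2$ and the additional stage are meant to be used. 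Verifying that they genuinely rule out the spurious, oversized alternations while leaving the encoded solutions intact is the step I expect to require the most care.
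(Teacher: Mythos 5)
There is a genuine gap, and in fact the construction as you describe it is not merely unproven but incorrect in the backward direction. Your opening analysis is right and matches the paper's key observation: $\ell'=2k'$ together with $|C_t|\le k'$ and $k'=|C'|/2$ forces every feasible target sequence to alternate between a size-$k'$ set $S$ and its complement $C'\setminus S$. But with the shadow-doubling $C'=C\cup\bar C$ and $k'=|C|=2k$, the set $S=C$ (all originals, no shadows) has size exactly $k'$, scores at least $x$ on every original-profile stage (it contains every approved candidate), and its complement $\bar C$ scores at least $x$ on every shadow stage (it contains every shadow vote). So the target instance you build is a \yes-instance regardless of whether the source is, and no choice of the single extra anchor stage can repair this: that stage can only force one designated candidate into $S$ or into $C'\setminus S$, which does nothing to bound $|S\cap C|$ from above. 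You correctly identify that the size bound $|S\cap C|\le k$ is the crux and that lower-bound score constraints alone cannot certify it, but the proposal ends exactly where the real work begins, and the chosen encoding leaves no room to do that work.

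The idea you are missing is that the size bound should be inherited from the source rather than re-proved: the hypothesis $k=|C|/2$ is there precisely so that the half-and-half complementation structure of $C$ is already the right one. The paper adds only \emph{two} new candidates, $C'=C\uplus\{z,y\}$, and sets $k'=k+1$, so $k'=|C'|/2$ still holds. The $2\tau+1$ profiles are: the original $u_t$ at odd stage $2t-1$, an all-$y$ profile at each even stage, and a final all-$z$ profile at stage $2\tau+1$. Complementation forces all odd committees to equal some $X$ and all even committees to equal $Y=C'\setminus X$; the final stage forces $z\in X$, and the even stages force $y\in Y$, hence $y\notin X$. Therefore $X\setminus\{z\}\subseteq C$ and $|X\setminus\{z\}|\le k'-1=k$ -- the size bound comes for free because the mandatory candidate $z$ consumes the single extra slot. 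The forward direction takes $X=C_1\cup\{z\}$ (padding $C_1$ to size exactly $k$) and $Y=C'\setminus X\ni y$, and all scores check out. If you want to salvage your shadow construction, you would need additional machinery forcing $|S\cap C|\le k$ -- e.g.\ forcing all $|C|-k$ shadows of non-selected candidates into $S$ -- but at that point you are re-deriving the paper's ``mandatory filler occupies the slack'' mechanism with a much larger gadget.
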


\appendixproof{lem:cmpvtormpv}
{
\begin{proof}
 Let $I=(A,C,U,k,\ell,x)$ with~$\ell=0$, $k=|C|/2$, and~$\tau$ voting profiles be an instance of~\cmpv{}.
 We construct instance~$I'=(A,C',U',k',\ell',x)$ of~\rmpv{} with~$\ell'=2k'$ in polynomial time as follows.
 
 \struct{Construction:}
 Set the set of candidates~$C'=C\uplus\{z,y\}$,
 where~$z,y$ are new candidates not in~$C$.
 Next, 
 construct~$2\cdot \tau+1$ voting profiles as follows.
 For all~$a\in A$ and all~$t\in\set{\tau}$
 set~$u_{2t-1}'(a)=u_{2t-1}(a)$ and~$u_{2t}'(a)=\{y\}$.
 Moreover,
 set~$u_{2\tau+1}'(a)=\{z\}$ for all~$a\in A$.
 Finally,
 set~$k'=k+1$ and~$\ell'=2k'=|C'|$.
 This finishes the construction.
 
 \struct{Correctness:}
 We claim that $I$ is a \yes-instance if and only if~$I'$ is a \yes-instance.
 
 \RD{}
 Let~$(C_1,\ldots,C_\tau)$ be a solution to~$I$ such that~$|C_1|=k$.
 Since~$\ell=0$, we have that~$C_t=C_{t'}$ for every~$t,t'\in\set{\tau}$.
 Let~$X\ceq C_1\cup\{z\}$ and~$Y\ceq C'\setminus X$.
 Note that~$|X|=|Y|=|C'|/2$.
 We claim that~$(C_1',\dots,C_{2\tau+1}')$ with~$C_{2t-1}'=X$ for every~$t\in\set{\tau+1}$ and $C_{2t}'=Y$ for every~$t\in\set{\tau}$ is a solution to~$I'$.
 Since~$y\in Y$,
 $\score_{2t}(Y)=|A|\geq x$ for every~$t\in\set{\tau}$.
 Since~$u_{2t-1}'(a)=u_{2t-1}(a)$ for every~$a\in A$ and~$t\in\set{\tau}$,
 and~$(C_1,\ldots,C_\tau)$ is a solution to~$I$,
 we have~$\score_{2t-1}(X)\geq x$.
 Moreover,
 since~$z\in X$,
 we have~$\score_{2\tau+1}(X)=|A|\geq x$.
 Lastly,
 as~$\symdif{C_{t}'}{C_{t+1}'}=C'$ for every~$t\in\set{2\tau}$,
 the claim follows.
 
 \LD{}
 Suppose that~$(C_1,\dots,C_{2\tau+1})$  is a solution to~$I'$.
 First observe that,
 due to~$\ell=|C|$, 
 we have that~$\symdif{C_t}{C_{t+1}}=C$ for every~$t\in\set{2\tau}$.
 It follows that~$C_{2t-1}=C_{2t'-1}$ and~$z\in C_{2t-1}$ for every~$t,t'\in\set{\tau+1}$,
 and~$C_{2t}=C_{2t'}$ and~$y\in C_{2t}$ for every~$t,t'\in\set{\tau}$.
 We claim that~$(C_1',C_2',\dots,C_\tau')$ with~$C_i'\ceq C_1\setminus\{z\}$ is a solution to~$I$.
 By construction,
 $\score_i(C_i')\geq x$, 
 and as~$C_i'=C_j'$ and~$|C_i'|\leq k$,
 the claim follows.
\lqed
\end{proof}
}

\toappendix
{
\subsection{Proof of~\cref{thm:bothnphard}(ii)}
\noindent
Combining \cref{lem:cmpvtormpv} and \cref{prop:cmpvnphard},
we get:

\begin{corollary}
 \label{cor:rmpvnphard}
 \rmpv{} is \NP-hard even for two agents, $\ell=2k$, $x=1$, and~$k=|C|/2$.
\end{corollary}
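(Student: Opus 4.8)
The plan is to obtain the claim as a direct composition of the two preceding results, since \cref{lem:cmpvtormpv} is precisely tailored to transform the hard \cmpv{} instances of \cref{prop:cmpvnphard} into \rmpv{} instances of exactly the desired shape. In other words, I would read the corollary not as requiring fresh work but as the statement that the reduction of \cref{lem:cmpvtormpv} maps the hard source class into the target class claimed here.

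First I would invoke \cref{prop:cmpvnphard} to fix a concrete class of \NP-hard \cmpv{} instances, namely those with two agents, $\ell=0$, $x=1$, and $k=|C|/2$. Next I would apply the reduction of \cref{lem:cmpvtormpv} to each such instance; since its hypothesis requires exactly $\ell=0$ and $k=|C|/2$, every instance in this hard class is a legal input to the reduction. By the lemma the reduction runs in polynomial time and produces an \emph{equivalent} \rmpv{} instance $(A,C',U',k',\ell',x)$ with $k'=|C'|/2$ and $\ell'=2k'$. Composing a polynomial-time many-one reduction with an \NP-hard problem then yields \NP-hardness of the image.

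The remaining step is to check that the composed map respects all four restrictions named in the statement simultaneously. The lemma leaves the agent set $A$ untouched, so the number of agents stays two; it keeps the final component $x$ unchanged, so $x=1$ is preserved; and it explicitly guarantees $k'=|C'|/2$ together with $\ell'=2k'$. Hence the image lies inside the restricted \rmpv{} class described in the corollary. There is no genuine obstacle here---the substance was already established in \cref{prop:cmpvnphard} and \cref{lem:cmpvtormpv}. The only point that warrants a moment of care is confirming that the reduction meets every parameter constraint at once (in particular that it neither alters $x$ nor silently introduces additional agents), which a glance at the construction underlying \cref{lem:cmpvtormpv} confirms.
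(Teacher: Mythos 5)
Your proposal is correct and matches the paper exactly: the paper obtains \cref{cor:rmpvnphard} precisely by combining \cref{prop:cmpvnphard} with \cref{lem:cmpvtormpv}, relying on the fact that the reduction preserves the agent set and $x$ while guaranteeing $k'=|C'|/2$ and $\ell'=2k'$. Your additional check that all four parameter restrictions are met simultaneously is exactly the (implicit) content of the paper's one-line derivation.
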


\noindent
\cref{thm:bothnphard} now follows from \cref{prop:cmpvnphard} and \cref{cor:rmpvnphard}.
}

\noindent
We point out that
$\ell=0$ (\cmpv{}) and~$\ell=2k$ (\rmpv{}) are not the only intractable
cases (\appsymb).

\toappendix
{
\subsection{Further Intractable Cases of \cmpv{} and \rmpv{}}
\label{app:ssec:otherell}
We present several intractable cases of \cmpv{} and \rmpv{}, where $0<\ell<2k$.
We also get
the following.

\begin{proposition}%
 \label{prop:nphardnesstwo}
 \cmpv{} with~$\ell=1$
 and
 \rmpv{} with~$\ell=m-1=2k-2$
 are~\NP-hard.
\end{proposition}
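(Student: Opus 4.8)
The plan is to reduce from \prob{Half Vertex Cover} (shown \NP-complete in the proof of \cref{prop:cmpvnphard}) in both cases, but to neutralise the extra freedom that the relaxed bound grants. The naive adaptation of \cref{prop:cmpvnphard} breaks here: with $\ell=1$ a committee may change by one element per stage, so a ``sliding'' sequence of small committees can cover all edges of $G$ even when no single committee is a vertex cover (already two disjoint triangles with $k=3$ admit such a drifting solution while having no half-size cover). The remedy in both cases is to force every committee to have size \emph{exactly} $k$, which I do through the score threshold while keeping the choice of which $k$ candidates are selected free.

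For \cmpv{} with $\ell=1$ I keep $C=V$, $k=|V|/2$, one stage per edge, and build each stage for $e_t=\{v,w\}$ as a \emph{uniform block} (every candidate approved by $N$ agents) plus an \emph{endpoint block} ($M$ agents approving $v$ and $M$ approving $w$), with a single global threshold $x$. Choosing $N>M\ge 1$ (concretely $N=2$, $M=1$, $x=2k+1$) lets the uniform block dominate, so that under $|C_t|\le k$ the inequality $\score_t(C_t)\ge x$ holds iff $|C_t|=k$ \emph{and} $C_t\cap e_t\ne\emptyset$; in particular an undersized committee cannot compensate by collecting both endpoint bonuses. With all sizes pinned to $k$, any two consecutive committees satisfy $|\symdif{C_t}{C_{t+1}}|=2(k-|C_t\cap C_{t+1}|)$, which is even, so the bound $\le 1$ forces it to be $0$ and hence $C_t=C_{t+1}$. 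The sequence therefore collapses to a single size-$k$ set hitting every edge, exactly as for $\ell=0$; conversely a half-size cover (pad an arbitrary cover up to size $k$) gives the constant solution.

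For \rmpv{} note that $\ell=m-1=2k-2$ couples $m=2k-1$, an \emph{odd} universe in which two committees of size exactly $k$ have symmetric difference at most $2k-2$, attained iff they share exactly one candidate; thus $2k-2$ is here the maximal, rigid value, playing the role of $2k$ in \cref{cor:rmpvnphard}. I take $C'=V\cup\{y\}$ with $|V|=2p$, so $m'=2p+1$, $k'=p+1$, and $\ell'=2p=m'-1=2k'-2$, and I alternate edge-stages (odd positions) with pure $y$-stages (even positions). The same size gadget---now over all $2p+1$ candidates, augmented by a bonus forcing $y$ into the committee, and on edge-stages by a further endpoint bonus---pins every committee to size $p+1$ and to containing $y$ (a single global $x$ works after padding the per-stage bonuses, with $N$ chosen to dominate, e.g.\ $N=3$). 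Since all committees have size $p+1$, contain $y$, and live in a $(2p+1)$-universe, $|\symdif{C_s}{C_{s+1}}|\ge 2p$ forces $|C_s\cap C_{s+1}|=1$, i.e.\ consecutive committees share only $y$; their non-$y$ parts are then complementary size-$p$ subsets of $V$ that flip at every step. Hence the non-$y$ part alternates between a fixed $X$ (seen at all edge-stages) and $V\setminus X$, and the edge constraints make $X$ a size-$p$ vertex cover; the interposed $y$-stages are precisely what absorbs one flip so that every edge-stage sees the same $X$.

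The delicate step in both reductions is the arithmetic of the single global threshold: one value of $x$ must simultaneously forbid undersized committees on every stage and enforce the intended memberships ($y$ and an edge endpoint), ruling out that a committee trades a missing slot for several bonuses. This is a finite check over the possible committee sizes and membership patterns, which succeeds once $N$ strictly exceeds every bonus and each stage is padded so the intended committee attains $x$ exactly; granting this, the parity collapse (conservative) and the share-only-$y$ collapse (revolutionary) yield the two equivalences verbatim.
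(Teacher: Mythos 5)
Your reductions are correct, but they follow a genuinely different route from the paper's. The paper proves both statements by lifting its already-established hard special cases: for \cmpv{} with $\ell=1$ it reduces from the two-agent, $\ell=0$ instance of \cref{prop:cmpvnphard} by adding four dummy agents and three dummy candidates ($w$ always forced, $v$ and $v'$ forced in alternation on even stages) so that the forced dummy alternation consumes the entire budget of~$1$ and the original committee part cannot move; for \rmpv{} with $\ell=2k-2$ it reduces from the $\ell=2k$ instance of \cref{cor:rmpvnphard} by adding a single always-approved candidate~$w$, which must sit in every committee and thereby caps the achievable symmetric difference at $2k'-2$, forcing the original parts to be fully disjoint as before. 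You instead reduce directly from \prob{Half Vertex Cover}, using a uniform approval block to pin every committee to size exactly~$k$ via the score threshold, and then exploit structure: in the conservative case the parity identity $|\symdif{C_t}{C_{t+1}}|=2(k-|C_t\cap C_{t+1}|)$ collapses $\ell=1$ to $\ell=0$, and in the revolutionary case the odd universe $m=2k-1$ together with the forced candidate~$y$ makes $2k-2$ the rigid maximum, forcing consecutive non-$y$ parts to be complementary and the edge-stages to all see the same cover~$X$. Your argument is self-contained and arguably more transparent (in particular, the parity collapse sidesteps any worry about whether the dummy candidates could be smuggled into adjacent committees), and the threshold arithmetic you defer does check out (e.g.\ $N=2$, $M=1$, $x=2k+1$ gives score $2s+j$, which reaches $x$ only when $s=k$ and $j\geq 1$). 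The trade-off is that your uniform block needs $\Theta(N\cdot m)$ agents, so unlike the paper's gadgets it does not preserve a constant number of agents---irrelevant for \cref{prop:nphardnesstwo} as stated, but it means your construction cannot replace the paper's if one also wants the ``few agents'' feature of \cref{thm:bothnphard}. For a final write-up you should carry out the finite threshold check explicitly rather than asserting it.
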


\begin{lemma}%
 \label{lem:cmpvnphardellone}
 \cmpv{} is~\NP-hard for~$\ell=1$.
\end{lemma}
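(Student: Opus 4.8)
The key observation driving the plan is a \emph{parity trick}: whenever every committee in a solution has the same cardinality~$s$, we have $|C_t\setminus C_{t+1}|=|C_{t+1}\setminus C_t|$ and hence $|\symdif{C_t}{C_{t+1}}|$ is even; consequently the constraint $|\symdif{C_t}{C_{t+1}}|\le 1$ collapses to $|\symdif{C_t}{C_{t+1}}|=0$, i.e.\ $C_t=C_{t+1}$. Thus, if I can force all committees of a solution to share a fixed size, then $\ell=1$ behaves exactly like the $\ell=0$ case of \cref{prop:cmpvnphard}. The plan is therefore to reduce from \prob{Half Vertex Cover} (shown \NP-complete in the excerpt), while augmenting the voting profiles so that every feasible committee is forced to have size exactly~$k$.

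Concretely, given a \prob{Half Vertex Cover} instance $G=(V,E)$ with $|V|=2k$ and $E=\{e_1,\ldots,e_m\}$, I take $C\ceq V$, set the committee bound $k=|V|/2$, and create one stage per edge. At stage~$t$ with $e_t=\{v,w\}$, I distribute the (fixed) agent set so that $v$ and $w$ each receive $q+1$ approvals while every other vertex receives exactly~$q$ approvals, for a constant $q\ge 2$ (say $q=2$). The total number of agents is $2kq+2$ at every stage, so a single agent set re-voting each stage suffices. Finally I set $x\ceq kq+1$ and $\ell\ceq 1$.

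The heart of the correctness argument is that the threshold~$x$ enforces two things simultaneously at every stage~$t$. First, since the two largest vote totals are $q+1$ and all remaining ones are~$q$, a committee of size~$j$ scores at most $2(q+1)+(j-2)q=jq+2$; for $j\le k-1$ this is $kq-q+2<kq+1=x$ once $q\ge 2$, so any committee reaching~$x$ has size exactly~$k$ (recall $|C_t|\le k$). Second, a size-$k$ committee avoiding both $v$ and $w$ scores only $kq<x$, whereas including at least one of $v,w$ yields score $\ge kq+1=x$; hence $\score_t(C_t)\ge x$ forces $C_t\cap e_t\ne\emptyset$. Invoking the parity trick, constant size~$k$ together with $\ell=1$ makes all committees equal to one set~$X$ with $|X|=k$, and the covering property makes~$X$ a vertex cover of~$G$. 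Conversely, any vertex cover can be padded to size exactly~$k$ (as $|V|=2k$) and used at every stage, satisfying all score and symmetric-difference constraints; membership in~\NP{} is immediate since a solution is a short sequence of committees.

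I expect the main obstacle to be the gadget design that pins the committee size to exactly~$k$ \emph{and} encodes the edge-covering requirement through a single plurality score bound: both must follow from the one inequality $\score_t(C_t)\ge x$, so the vote multiplicities ($q$ for ordinary vertices, $q+1$ for the current edge's endpoints) and the threshold $x=kq+1$ must be chosen so that dropping below size~$k$ and missing the current edge are each individually infeasible. Verifying these two extremal score computations is the delicate part; the remaining steps are routine.
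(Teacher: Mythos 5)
Your proof is correct, but it takes a genuinely different route from the paper's. The paper proves this lemma by reducing from the two-agent, $\ell=0$ case of \cmpv{} established in \cref{prop:cmpvnphard}: it adds three auxiliary candidates $v',v,w$ and four auxiliary agents whose approvals force $w$ into every committee and force an alternation between $v'$ and $v$ across stages, so that the single allowed unit of symmetric difference is entirely consumed by the auxiliary candidates and the original committees must remain constant. You instead reduce directly from \prob{Half Vertex Cover} and exploit a parity observation: once the score threshold pins every feasible committee to size exactly~$k$, the symmetric difference of consecutive committees is automatically even, so $|\symdif{C_t}{C_{t+1}}|\leq 1$ collapses to equality of committees. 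Your size-pinning and edge-covering computations check out (the bound $jq+2<kq+1$ for $j\leq k-1$ needs exactly $q\geq 2$, and a size-$k$ committee meets $x=kq+1$ iff it hits the current edge), and the padding of a cover to size exactly $k$ in the forward direction is unproblematic since $|V|=2k$. The trade-offs: your argument is more self-contained and arguably cleaner, but it uses $\Theta(|V|)$ agents, whereas the paper's reduction keeps the number of agents constant (six), in keeping with the paper's emphasis on hardness for few agents; also, the paper's ``spend the change budget on dummies'' technique adapts to other small values of $\ell$ (cf.\ the companion \cref{lem:rmpvnphardell2kminus2}), while the parity trick is specific to collapsing $\ell=1$ to $\ell=0$.
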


{
\begin{proof}
 Let~$I=(A,C,U,k,\ell,x)$ be an instance of~\cmpv{} with~$A=\{a_1,a_2\}$, $x=1$, and~$\ell=0$.
 We construct an instance~$I'=(A'$, $C'$, $U'$, $k'$, $\ell'$, $x')$ of \cmpv{} with~$A'=\{a_1,\dots,a_6\}$,
 $C'=C\cup \{v',v,w\}$,
 $k'=k+2$,
 $\ell'=1$,
 and~$x'=x+4$ in polynomial time as follows.
 
 \struct{Construction:}
 For each~$1\leq t\leq \tau\ceq \tau(U)$,
 let~$a_1,a_2$ approve the same candidates as in~$u_t$.
 If~$t$ is even,
 then let~$a_3,a_4$ approve~$v'$ if~$t$ is divisible by four,
 and approve~$v$ otherwise.
 If~$t$ is odd,
 then let~$a_3,a_4$ approve~$w$.
 Agents~$a_5,a_6$ approve always~$w$.
 This finishes the construction.
 
 \struct{Correctness:}
 We claim that~$I$ is a \yes-instance if and only if~$I'$ is a \yes-instance.
 
 \RD{}
 Let~$(C_1,\dots,C_\tau)$ be a solution to~$C$.
 We claim that the sequence~$(C_1',\dots,C_\tau')$ with $C_t\ceq C_t\cup\{w\}$ if~$t$ is odd,
 $C_t\ceq C_t\cup\{v',w\}$ if~$t$ is even and divisible by four,
 and~$C_t\ceq C_t\cup\{v,w\}$ if~$t$ is even and not divisible by four.
 Observe that~$|C_t|\leq k+2$ and~$\symdif{C_t'}{C_{t+1}'}=\{v'\}$ or $\symdif{C_t'}{C_{t+1}'}=\{v\}$.
 Moreover,
 since in time step~$t$,
 at least one candidate of~$C_t$ is a approved by~$a_1,a_2$,
 by construction we have~$\score_{t}(C_t')\geq 5$.
 This proves the claim.
 
 \LD{}
 Let~$(C_1',\dots,C_\tau')$ be a solution to~$I'$.
 Observe that every solution to~$I'$,
 $w$ must be in every committee,
 $v'$ must be in every committee in an even time step divisible by~four,
 and~$v$ must be in every committee in an even time step not divisible by four.
 Hence, 
 $\symdif{C_t'}{C_{t+1}'}=\{v'\}$ or $\symdif{C_t'}{C_{t+1}'}=\{v\}$ for every~$t\in\set{\tau-1}$.
 We claim that~$(C_1,\dots,C_\tau)$ with~$C_t=C_t'\cap C$ is a solution to~$I$.
 Observe that~$|C_t|\leq k$ and that~$\symdif{C_t}{C_{t+1}}=(\symdif{C_t'}{C_{t+1}'})\cap C=\emptyset$.
 Moreover,
 since for every~$t\in\set{\tau}$ we have that~$\score_t(C_t')\geq 5$,
 at least one of~$a_1,a_2$ must approve a candidate in~$C_t'\cap C=C_t$.
 Thus,
 $\score_t(C_t)\geq 1$,
 and the claim follows.
\lqed
\end{proof}
}

Similarly,
having~$\ell=2k$ is not necessary for~\rmpv{} to be \NP-hard.

\begin{lemma}%
 \label{lem:rmpvnphardell2kminus2}
 \rmpv{} is~\NP-hard for~$\ell=m-1=2k-2$.
\end{lemma}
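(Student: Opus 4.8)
The plan is to avoid building a reduction from scratch and instead reduce from \rmpv{} with $\ell=2k$ (\cref{cor:rmpvnphard}), mirroring the dummy-candidate idea behind \cref{lem:cmpvtormpv}. Recall that the instances produced by \cref{cor:rmpvnphard} have exactly two agents, threshold $x=1$, candidate set of size $|C|=2k$, and the property that every solution consists of pairwise \emph{complementary} consecutive committees: since $|C_t|\leq k$ and $\symdif{C_t}{C_{t+1}}\geq 2k$, each committee has size exactly $k$ and $C_{t+1}=C\setminus C_t$. The key observation I want to exploit is that a symmetric difference of $2k-2=m-1$ (one less than the maximum) is attained by two size-$k$ committees on a ground set of size $m=2k-1$ precisely when they overlap in a \emph{single} candidate; so I plan to introduce one ``always present'' dummy candidate $d$ that occupies this shared slot, reducing the effective symmetric difference back to $2k$ on the original candidates.

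Concretely, given an instance $I=(\{a_1,a_2\},C,U,k,2k,1)$ of \rmpv{} with $|C|=2k$ and $\tau$ profiles, I would construct $I'=(A',C',U',k',\ell',x')$ by setting $C'\ceq C\cup\{d\}$ (so $m'\ceq|C'|=2k+1$), $k'\ceq k+1$, $\ell'\ceq 2k=2k'-2=m'-1$, and $x'\ceq 3$. The agent set becomes $A'\ceq\{a_1,a_2,a_3,a_4\}$, where in \emph{every} profile the two new agents $a_3,a_4$ approve only $d$ and $a_1,a_2$ keep their original approvals; the number of profiles stays $\tau$. The only nontrivial point in the construction is forcing $d$ into every committee, which the thresholds handle: a committee omitting $d$ can collect at most the two original votes, i.e.\ score at most $2<3=x'$, whereas a committee containing $d$ already collects the two dummy votes and thus meets $x'=3$ if and only if its $C$-part is approved by at least one of $a_1,a_2$, i.e.\ has original score at least $1$. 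Hence in any solution $(C_1,\dots,C_\tau)$ to $I'$ we have $d\in C_t$ for all $t$, and writing $D_t\ceq C_t\setminus\{d\}\subseteq C$ with $|D_t|\leq k$, the shared $d$ gives $\symdif{C_t}{C_{t+1}}=\symdif{D_t}{D_{t+1}}$; the requirement $\symdif{D_t}{D_{t+1}}\geq 2k$ then forces $|D_t|=|D_{t+1}|=k$ and $D_t\cap D_{t+1}=\emptyset$, i.e.\ $D_{t+1}=C\setminus D_t$. Since moreover $\score_t(D_t)\geq 1$, the sequence $(D_1,\dots,D_\tau)$ is exactly a solution to $I$. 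For the forward direction, any solution $(D_1,\dots,D_\tau)$ of $I$ (necessarily complementary with $|D_t|=k$) yields $C_t\ceq D_t\cup\{d\}$ of size $k'$ with $\score_t(C_t)=\score_t(D_t)+2\geq 3$ and $\symdif{C_t}{C_{t+1}}=\symdif{D_t}{D_{t+1}}=2k=\ell'$, so $I'$ is a \yes-instance iff $I$ is.

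I expect the main obstacle to be purely in the \emph{argument} rather than the construction, namely pinning down the combinatorial structure of solutions to $I'$: one must verify that the symmetric-difference bound at the slightly-below-maximum value $\ell'=m'-1$ still rigidly forces full-size, exactly-one-overlapping consecutive committees, and that this single overlap is provably the dummy $d$ (and not some original candidate). The threshold calculation above is what neutralizes the one extra degree of freedom that $\ell'=m'-1$ offers compared to the fully rigid $\ell'=m'$ case of \cref{cor:rmpvnphard}; once $d$ is forced into every committee, the remaining analysis reduces cleanly to the complementary structure already understood for the $\ell=2k$ variant. A minor point to state carefully is that all relevant quantities satisfy $\ell'=m'-1=2k'-2$ simultaneously ($m'=2k+1$, $k'=k+1$), so the produced instances genuinely witness the claimed special case.
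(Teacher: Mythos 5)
Your proposal is correct and matches the paper's proof essentially verbatim: both reduce from the $\ell=2k$ instances of \rmpv{} (\cref{cor:rmpvnphard}) by adding one dummy candidate approved by two new agents in every profile, forcing it into every committee via the raised score threshold $x'=x+2$ and then stripping it off. Your write-up is in fact slightly more careful than the paper's in verifying that the forced shared candidate must be the dummy and that the symmetric-difference bound then collapses to the original $\ell=2k$ constraint.
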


{
\begin{proof}
 Let~$I=(A,C,U,k,\ell,x)$ be an instance of~\rmpv{} with~$A=\{a_1,a_2\}$, $\ell=2k$, $x=1$, and~$k=|C|/2$.
 We construct an instance~$I'=(A',C',U',k',\ell',x')$ of \cmpv{} with~$A'=\{a_1,\dots,a_4\}$,
 $C'=C\cup \{w\}$,
 $k'=k+1$,
 $\ell'=2k'-2$,
 and~$x'=x+2$ in polynomial time as follows.
 
 \struct{Construction:}
 For each~$1\leq t\leq \tau\ceq \tau(U)$,
 let~$a_1,a_2$ approve the same candidates as in~$u_t$,
 and~$a_3,a_4$ approve~$w$.
 Let~$u_1',\dots,u_\tau'$ denote the obtained voting profiles.
 
 \struct{Correctness:}
 We claim that~$I$ is a \yes-instance if and only if~$I'$ is a \yes-instance.
 
 \RD{} 
 Let~$(C_1,\dots,C_\tau)$ be a solution to instance~$I$.
 We claim that $(C_1',\dots,C_\tau')$ with~$C_t'=C_t\cup\{w\}$ is a solution to~$I'$.
 Note that~$|C_t'|\leq k+1$,
 and that~$|\symdif{C_t'}{C_{t+1}'}|=|\symdif{C_t}{C_{t+1}}|\geq 2k=2k'-2=\ell'$.
 Moreover,
 note that since~$x'=3$,
 for each~$t\in\set{\tau}$ we have that~$a_1,a_2$ approve a candidate from~$C_t'$,
 which is different to~$w$.
 Hence,
 by construction,
 $\score_t(C_t)\geq 1$.
 
 \LD{}
 Let~$(C_1',\dots,C_\tau')$ be a solution to~$I'$.
 Observe that for every solution to~$I'$,
 $w$ must be in every committee.
 Moreover,
 since~$\ell'=|C'|-1=|C|$,
 we have that~$(C_t\cup C_{t+1})\setminus\{w\}=C$.
 We claim that~$(C_1,\dots,C_\tau)$ with~$C_t=C_t'\setminus\{w\}$ is a solution to~$I$.
 Note that~$|C_t|\leq k$,
 $|\symdif{C_t}{C_{t+1}}|=|\symdif{C_t'}{C_{t+1}'}|\geq \ell$,
 and that~$\score_t(C_t)=\score{C_t'}-2\geq x$.
 Thus,
 the claim follows.
\lqed
\end{proof}
}
}

\cref{thm:bothnphard} shows that \cmpv{} and~\rmpv{} remain \nphard{} even for
very specific scenarios.
However, the restrictions
in~\cref{thm:bothnphard} do not deal with
the size~$k$ of committee and the number~$\tau$ of stages.
This gives hope that instances in which these numbers are small
could be
solved more effectively.
However, as we show in the remainder of this section, regarding parameters~$k$
and~$\tau$ (and their combination) for~\cmpv{} and parameter~$\tau$ for \rmpv{}
we (presumably) cannot obtain running times for which the exponential blow-up is
only depending on values of the parameters.

\begin{theorem}[\appsymb]
 \label{thm:cmpvwhardktau}
 \cmpv{} is 
 \begin{inparaenum}[(i)]
    \item \wonehard{} when parameterized by~$k+\tau$, 
 even if~$\ell=0$;
 \item \W{2}-hard when parameterized by~$k$, even if~$x=1$ and~$\ell=0$.
 \end{inparaenum}

\end{theorem}

\noindent
For~\cref{thm:cmpvwhardktau}(ii),
we reduce from the~\W{2}-complete \prob{Dominating Set} problem (\apprefX{app:cmpvhardjtau}).%
\toappendix
{
\subsection{Proof of \cref{thm:cmpvwhardktau}(ii)}
\label{app:cmpvhardjtau}
\begin{proof}
  Let~$I=(G,k)$ with~$G=(V,E)$ be an instance of~\prob{Dominating Set},
  where~$V=\{v_1,\dots,v_n\}$ without loss of generality.
  We construct an instance~$I'\ceq (A,C,U,k',\ell,x)$ with~$k'\ceq k$,
  $\ell\ceq 0$, 
  $x\ceq 1$, and~$\tau=n$ as follows.
  Let~$A\ceq \{a_1,\dots,a_{\Delta+1}\}$, 
  where~$\Delta$ denotes the maximum degree of~$G$.
  Let~$C\ceq V$.
  Now, 
  for every~$t\in\set{\tau}$,
  set~$u_t$ such that~$u_t(A)=N_G[v_t]$,
  that is,
  the closed neighborhood of~$v_t$ in~$G$.
  This finishes the construction.
  We claim that~$I$ is \yes-instance if and only if~$I'$ is a \yes-instance.
  
  \RD{}
  Let~$X\subseteq V$ be a dominating set of~$G$.
  We claim that~$(S_1,\dots,S_\tau)$ with~$S_i=X$ for all~$i\in\set{\tau}$ is solution to~$I'$.
  Observe that~$|N_G[v_t]\cap X|\geq 1$,
  and hence~$|u_t(A)\cap S_t|\geq 1$.
  
  \LD{}
  Let~$(S_1,\dots,S_\tau)$ be a solution to~$I'$.
  Note that since~$\ell=0$,
  $S_i=S_j$ for all~$i,j\in\set{\tau}$.
  Let~$X\ceq S_1$.
  Note that~$|X|\leq k$.
  We claim that~$X$ is a dominating set.
  Suppose not,
  then there is a~$t\in\set{n}$ such that~$X\cap N_G[v_t]=\emptyset$.
  Then, $|S_t\cap N_G[v_t]|= |S_t\cap u_t(A)| = 0 < x$,
  a contradiction to the fact that $(S_1,\dots,S_\tau)$ is a solution.
\end{proof}
}%
In the reduction behind the proof of~\cref{thm:cmpvwhardktau}(i),
we employ \emph{Sidon sets} defined subsequently.
 A Sidon set is a set~$S=\{s_1, s_2, \ldots, s_b\}$ of $b$~natural numbers such
 that every pairwise sum of the elements in~$S$ is different.%
\newcommand{\sidonSet}{\ensuremath{S}}
\newcommand{\sidonElem}{\ensuremath{s}}
Sidon sets can be computed efficiently.

\begin{lemma} \label{lem:sidon-comp}
 A Sidon set of size~$b$ can be computed in $\O(b)$~time if~$b$ is encoded in
 unary.
\end{lemma}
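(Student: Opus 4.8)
The plan is to exhibit an \emph{explicit closed-form} construction whose $i$-th element is computed by a constant number of arithmetic operations on polynomially-bounded integers, so that emitting all $b$ elements costs $O(b)$ time on a word RAM. Concretely, I would put $C := 2b^2+1$ and define
\[ s_i \;:=\; C\cdot i + i^2, \qquad i \in \{1,\ldots,b\}. \]
Since $s_i$ is strictly increasing in $i$, these are $b$ distinct natural numbers, each bounded by $C\cdot b + b^2 = O(b^3)$. (One could instead try $s_i = 2^i$, which is also Sidon, but its exponentially large entries already take $\Theta(b^2)$ time to write out, so it cannot meet an $O(b)$ bound—hence the insistence on polynomially-bounded elements.)

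The correctness would then rest on a digit-separation argument. Suppose $s_i + s_j = s_k + s_l$ for indices in $\{1,\ldots,b\}$. Expanding gives $s_i + s_j = C(i+j) + (i^2+j^2)$, and because $i^2+j^2 \le 2b^2 < C$, the ``low part'' $i^2+j^2$ never carries into the ``high part'' $C(i+j)$; thus $s_i+s_j$ determines both $i+j = \lfloor (s_i+s_j)/C\rfloor$ and $i^2+j^2 = (s_i+s_j)\bmod C$. Consequently $s_i+s_j = s_k+s_l$ forces simultaneously $i+j=k+l$ and $i^2+j^2=k^2+l^2$. From these two equalities one extracts $ij=kl$ via $2ij = (i+j)^2-(i^2+j^2)$, so $\{i,j\}$ and $\{k,l\}$ are the two roots of the same quadratic $t^2-(i+j)t+ij$, i.e.\ $\{i,j\}=\{k,l\}$. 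Hence all pairwise sums are pairwise distinct (this argument even covers sums with repetition, so a fortiori the sums over distinct pairs), and $S=\{s_1,\ldots,s_b\}$ is a Sidon set.

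Finally I would account for the running time. As $b$ is given in unary, the input has size $\Theta(b)$; each $s_i$ has magnitude $O(b^3)$ and therefore $O(\log b)$ bits, fitting in $O(1)$ machine words, and is obtained from $C$ and $i$ by a constant number of multiplications and additions. Iterating $i$ from $1$ to $b$ and outputting $s_i$ thus takes $O(b)$ time overall, matching the claim.

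The only genuinely delicate points are the choice of the multiplier and the computational model: $C$ must \emph{strictly} exceed $\max_{i,j\in\{1,\ldots,b\}}(i^2+j^2)=2b^2$ so that the low and high parts stay separated (hence $C=2b^2+1$ rather than $2b^2$), and the linear-time bound is meaningful only under the word-RAM convention that numbers polynomially bounded in the input size are manipulated in $O(1)$ time. I expect no further obstacle; the crux is simply the elementary fact that the pair $(i+j,\,i^2+j^2)$ pins down the multiset $\{i,j\}$.
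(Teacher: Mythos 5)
Your proof is correct, but it takes a different route from the paper. The paper uses the Erd\H{o}s--Tur\'an construction $s_i = 2\hat{b}i + (i^2 \bmod \hat{b})$ with $\hat{b}$ the smallest prime exceeding $b$, citing the literature for the Sidon property and spending the $O(b)$ budget on a sieve to locate $\hat{b}$ (which exists below $2b$ by Bertrand--Chebyshev). You instead take $s_i = (2b^2+1)i + i^2$ and prove the Sidon property from scratch: the multiplier exceeds every possible low part $i^2+j^2$, so a pairwise sum determines $i+j$ and $i^2+j^2$, hence $ij$, hence the multiset $\{i,j\}$ -- a clean and fully self-contained argument that also dispenses with prime-finding. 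The trade-off is element magnitude: Erd\H{o}s--Tur\'an gives elements of size $O(b^2)$ while yours are $O(b^3)$. Since the lemma is used in the \W{1}-hardness reduction to set numbers of approving agents proportional to the Sidon elements, the only requirement is that these stay polynomial in the input size, which both constructions satisfy; so your version is a valid drop-in replacement, at the cost of a polynomially larger (but still polynomial) gadget.
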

\begin{proof}
 Suppose we aim at obtaining a Sidon set~$\sidonSet{}=\{\sidonElem_1, \ldots,
 \sidonElem_b\}$. For every~$i \in \set{b}$, we
 compute~$\sidonElem{}_i\ceq 2\hat{b}i+ (i^2\bmod \hat{b})$, where $\hat{b}$~is the
 smallest prime number greater than~$b$~\citep{ET41}. Thus, given~$\hat{b}$, one
 can compute~\sidonSet{} in linear time.

 It remains to show how to find~$\hat{b}$ in linear time. Due to the
 Bertrand-Chebyshev~\citep{Chebyshev1852} theorem, we have that $\hat{b} < 2b$.
 Searching all prime numbers smaller than~$2b$ is doable in $\O(b)$~time (see,
 for example, an intuitive algorithm by \citet{GM78}).
\lqed
\end{proof}

\newcommand{\cGraph}{\ensuremath{G}}
\newcommand{\cVerticesOf}[1]{\ensuremath{V_{#1}}}
\newcommand{\cEdge}{\ensuremath{e}}
\newcommand{\cEdges}{\ensuremath{E}}
\newcommand{\cEdgesOf}[2]{\ensuremath{E_{#2}^{#1}}}
\newcommand{\cCliqueSize}{\ensuremath{q}}
\newcommand{\cClique}{\ensuremath{K}}
\newcommand{\cVertex}{\ensuremath{v}}
\newcommand{\cVvertex}{\ensuremath{u}}
\newcommand{\instanceFrom}{\ensuremath{\hat{I}}}
\newcommand{\instanceTo}{\ensuremath{I}}
\newcommand{\requiredApprovals}{\ensuremath{x}}
\newcommand{\prefProfileOf}[1]{\ensuremath{p({#1})}}
\newcommand{\pprefProfileOf}[1]{\ensuremath{p'({#1})}}
\newcommand{\idOp}{\ensuremath{\mathop{\mathrm{id}}}}
\newcommand{\idOf}[1]{\ensuremath{\idOp({#1})}}
\newcommand{\changeBound}{\ensuremath{\ell}}
\newcommand{\committeeSize}{\ensuremath{k}}
\newcommand{\committee}{\ensuremath{C}}
\newcommand{\gadget}{\ensuremath{\mathcal{G}}}
\newcommand{\vertexTotCount}{\ensuremath{h}}
\begin{proof}[Proof of~\cref{thm:cmpvwhardktau}(i)]
 We reduce from~\multicoloredClique{} that is \wone-complete when parameterized
 by the solution size~\citep{P03,FHRV09}. An instance~\instanceFrom{} of~\multicoloredClique{}
 consists of a \cCliqueSize{}-partite graph~$\cGraph=(\cVerticesOf{1} \uplus 
 \cVerticesOf{2} \uplus\dots\uplus \cVerticesOf{\cCliqueSize}, \cEdges)$ and the task
 is to decide whether there is a set~\cClique{} of~\cCliqueSize{} pairwise
 connected vertices, each from a distinct part.
 For brevity, for some~$i, j \in \set{\cCliqueSize}$, $i < j$,
 let~\cEdgesOf{j}{i} be a set of edges connecting vertices from parts~$\cVerticesOf{i}$ and~$\cVerticesOf{j}$; thus, $\cEdges{}=\bigcup_{i,j \in \set{\cCliqueSize}, i < j}
  \cEdgesOf{j}{i}$.

 \smallskip\noindent\textbf{Construction.} In the corresponding instance~\instanceTo{}
 of~\cmpv{}, we let all vertices and edges in~\cGraph{} be candidates. Then, we
 define three gadgets (see~\cref{fig:wonehardness} for an illustration):
 \begin{figure*}[t]
  \centering
    \begin{tikzpicture}

      \usetikzlibrary{calc,positioning}
      \usetikzlibrary{decorations.pathreplacing}

      \def\xr{0.8}
      \def\yr{0.7}

      \def\bxw{6}
      \def\bxxw{0.6}
      \def\bxh{4.5}
      \def\bxxh{0.925*\bxh}
      \def\vdsc{0.8}

      \newcommand{\decbox}[4]{
        \node[above =of #1,yshift=-\yr*1.35cm] {#2};
        \node[above =of #1.south,yshift=-\yr*1.45cm,scale=\vdsc] {$\vdots$};
        \node[below =of #1.north,yshift=\yr*1.6cm,scale=\vdsc] {$\vdots$};
        \node[above =of #1.center,yshift=-\yr*0.7cm,scale=\vdsc] {$\vdots$};
        \node[below =of #1.center,yshift=\yr*0.75cm,scale=\vdsc] {$\vdots$};
        \node[above =of #1.center,yshift=-\yr*1.2cm,font=\small] (a1) {#3};
        \node at (#1.center)[scale=\vdsc]{$\vdots$};
        \node[below =of #1.center,yshift=\yr*1.0cm,font=\small] (a2) {#3};
        \draw [decorate,decoration={brace,amplitude=4pt},xshift=-2cm,yshift=10pt] (a2.west) -- (a1.west) node [black,midway,xshift=-0.2cm,anchor=east,font=\small]{#4};	
      }

      \begin{scope}
        \node (bx) at (0,0)[minimum width=\xr*\bxw*0.8 cm, minimum height=\yr*\bxh cm, dashed, rounded corners, draw]{};
          \node[below =of bx,yshift=\yr*1.45cm,font=\small] {Vertex selection gadget};
        \node[above =of bx,yshift=-\yr*1.35cm] {$\cdots$};
        \node[above =of bx,yshift=-\yr*1.35cm,xshift=-\xr*1.75cm] {$\cdots$};
        \node[above =of bx,yshift=-\yr*1.35cm,xshift=\xr*1.75cm] {$\cdots$};

        \node (bxxi) at (0-1*\xr,0)[minimum width=\xr*\bxxw cm, minimum height=\yr*\bxxh cm, densely dotted, rounded corners, draw]{};
          \decbox{bxxi}{$V_i$}{$v$}{$x$~$\times$}

        \node (bxxj) at (0+1*\xr,0)[minimum width=\xr*\bxxw cm, minimum height=\yr*\bxxh cm, densely dotted, rounded corners, draw]{};
        \decbox{bxxj}{$V_j$}{$w$}{$x$~$\times$}
      \end{scope}

      \begin{scope}[xshift=0.725*\xr*\bxw cm]
        \node (bx) at (0,0)[minimum width=\xr*\bxw*0.55 cm, minimum height=\yr*\bxh cm, dashed, rounded corners, draw]{};
        \node[below =of bx,yshift=\yr*1.45cm,font=\small] {Edge selection gadget};
        \node[above =of bx,yshift=-\yr*1.35cm,xshift=-\xr*1.25cm] {$\cdots$};
        \node[above =of bx,yshift=-\yr*1.35cm,xshift=\xr*1.25cm] {$\cdots$};

        \node (bxe) at (0,0)[minimum width=\xr*\bxxw cm, minimum height=\yr*\bxxh cm, densely dotted, rounded corners, draw]{};
          \decbox{bxe}{$E_i^j$}{$e$}{$x$~$\times$};
      \end{scope}

      \newcommand{\decboxc}[5]{
        \node[above =of #1,yshift=-\yr*1.35cm,font=\footnotesize] {#2};
        \%
        \node[below =of #1.north,yshift=-\yr*1.7cm,font=\small] (a1) {#3};
        \node[below =of #1.north,yshift=-\yr*1.0cm,scale=\vdsc] {$\vdots$};
        \node[below =of #1.north,,yshift=-\yr*0.9cm,font=\small] (a2) {#3};
        \node[below =of #1.north,yshift=\yr*0.3cm,font=\small] (b1) {#5};
        \node[below =of #1.north,yshift=\yr*0.2cm,scale=\vdsc]{$\vdots$};
        \node[below =of #1.north,yshift=-\yr*0.5cm,font=\small] (b2) {#5};
        \node[below =of #1.north,yshift=\yr*1.5cm,font=\small] (c1) {#4};
        \node[below =of #1.north,yshift=\yr*1.4cm,scale=\vdsc] {$\vdots$};
        \node[below =of #1.north,yshift=\yr*0.7cm,font=\small] (c2) {#4};
        \node[below =of #1.north,yshift=-\yr*1.9cm,scale=\vdsc] {$\vdots$};
      }

      \newcommand{\decboxcA}[3]{    
        \draw [decorate,decoration={brace,amplitude=4pt},xshift=-2cm,yshift=10pt] (a1.west) -- (a2.west) node [black,midway,xshift=-0.2cm,anchor=east,font=\small]{#1};	
        \draw [decorate,decoration={brace,amplitude=4pt},xshift=-2cm,yshift=10pt] (b2.west) -- (b1.west) node [black,midway,xshift=-0.2cm,anchor=east,font=\small]{#2};	
        \draw [decorate,decoration={brace,amplitude=4pt},xshift=-2cm,yshift=10pt] (c2.west) -- (c1.west) node [black,midway,xshift=-0.2cm,anchor=east,font=\small]{#3};	
      }

      \newcommand{\decboxcB}[3]{    
        \draw [decorate,decoration={brace,amplitude=4pt},xshift=-2cm,yshift=10pt] (a2.east) -- (a1.east) node [black,midway,xshift=0.2cm,anchor=west,font=\small]{#1};	
        \draw [decorate,decoration={brace,amplitude=4pt},xshift=-2cm,yshift=10pt] (b1.east) -- (b2.east) node [black,midway,xshift=0.2cm,anchor=west,font=\small]{#2};	
        \draw [decorate,decoration={brace,amplitude=4pt},xshift=-2cm,yshift=10pt] (c1.east) -- (c2.east) node [black,midway,xshift=0.2cm,anchor=west,font=\small]{#3};	
      }

      \begin{scope}[xshift=1.725*\xr*\bxw cm]
        \node (bx) at (0,0)[minimum width=\xr*\bxw*1.35 cm, minimum height=\yr*\bxh cm, dashed, rounded corners, draw]{};
        \node[below =of bx,yshift=\yr*1.45cm,font=\small] {Coherence gadget};
        \node[above =of bx,yshift=-\yr*1.35cm,xshift=-\xr*2.5cm] {$\cdots$};
        \node[above =of bx,yshift=-\yr*1.35cm,xshift=\xr*2.5cm] {$\cdots$};

        \node (bxp) at (0-0.6*\xr,0)[minimum width=\xr*\bxxw cm, minimum height=\yr*\bxxh cm, densely dotted, rounded corners, draw]{};
        \decboxc{bxp}{$p((i,j))$}{$e$}{$v$}{$w$}
	\decboxcA{$x-\idOf{v,w}$~$\times$}{$\idOf{w}$~$\times$}{$\idOf{v}$~$\times$};

        \node (bxpp) at (0+1*\xr,0)[minimum width=\xr*\bxxw cm, minimum height=\yr*\bxxh cm, densely dotted, rounded corners, draw]{};
        \decboxc{bxpp}{$p'((i,j))$}{$e$}{$v$}{$w$}
	\decboxcB{$\times$~$\idOf{v,w}$}{$\times$~$\frac{x}{2}-\idOf{w}$}{$\times$~$\frac{x}{2}-\idOf{v}$};
      \end{scope}

      \draw[very thick,<->,>=latex] (-0.5*\bxw*\xr,-\yr*\bxh/2) to node[midway,sloped,yshift=0.4*\xr cm]{agents / voting profiles}(-0.5*\bxw*\xr,\yr*\bxh/2);

      \draw[very thick,->,>=latex] (-0.4*\bxw*\xr,1.35*\yr*\bxh/2) to
      node[midway,above,sloped,yshift=-3pt]{stages}(2.4*\bxw*\xr,1.35*\yr*\bxh/2);

      \end{tikzpicture}
  \caption{Illustration of the construction in the proof of~\cref{thm:cmpvwhardktau},
  exemplified with edge~$e=\{v,w\}\in E_i^j$ with~$v\in V_i$ and~$w\in V_j$.
  A column represents a stage (which in turn represents an vertex or edge selection gadget for some vertex or edge set, respectively, or a coherence gadget of a pair of colors)
  and a row represents an agent (approving either a vertex or an edge). For
  brevity, we use~$\idOf{v,w}$ to denote~$\idOf{v} + \idOf{w}$.}
  \label{fig:wonehardness}
 \end{figure*}
 the \emph{vertex selection} gadget, the \emph{edge
 selection} gadget, and the \emph{coherence} gadget. Further, we show how to use
 the gadgets to construct~\instanceTo{}. Instance~\instanceTo{} will be
 constructed in a way that its solution is a single committee of size
 exactly~$\cCliqueSize + {\cCliqueSize \choose 2}$ corresponding to vertices and
 edges of a clique witnessing a \yes-instance of~\instanceFrom{} (if one exists).
 To define the gadgets, 
 we use a value~\requiredApprovals{} that we explicitly
 define at the end of the construction.

 \paragraph{Vertex selection gadget.}
 Fix some part~$\cVerticesOf{i}$, $i \in \set{\cCliqueSize}$. The vertex selection gadget
 for~$i$ ensures that exactly one vertex from~$\cVerticesOf{i}$ is selected.
 We construct the gadget by forming a preference
 profile~\prefProfileOf{\cVerticesOf{i}}
 consisting of~$\requiredApprovals \cdot
 |\cVerticesOf{i}|$ agents such that each vertex~$\cVertex \in \cVerticesOf{i}$
 is approved by exactly \requiredApprovals{}~agents.
 
 \paragraph{Edge selection gadget.}
 For each two parts~$\cVerticesOf{i}$ and~$\cVerticesOf{j}$ such that~$i<j$, we construct the
 edge selection gadget that allows to select exactly one edge
 from~\cEdgesOf{j}{i}. Accordingly, we build a preference
 profile~\prefProfileOf{\cEdgesOf{j}{i}} consisting of~$\requiredApprovals \cdot
 |\cEdgesOf{j}{i}|$ agents. Again, each edge in~$\cEdgesOf{j}{i}$ is approved by
 exactly \requiredApprovals{}~agents.

 \paragraph{Coherence gadget.}
 For the construction of the coherence gadget, let~$\vertexTotCount \ceq 
 |\bigcup_{i \in \set{\cCliqueSize}} \cVerticesOf{i}|$ and
 let~$\sidonSet{}=\{\sidonElem_1, \ldots, \sidonElem_\vertexTotCount\}$ be a
 Sidon set computed according to~\cref{lem:sidon-comp}. We define a bijection~$\idOp \colon \bigcup_{i \in \set{\cCliqueSize}} \cVerticesOf{i}
 \rightarrow S$ associating each vertex of~\cGraph{} with its (unique)
 \emph{id}. Now, the construction of the coherence gadget for some
 pair~$\{\cVerticesOf{i},\cVerticesOf{j}\}$ of parts such that~$i < j$ goes as
 follows. We introduce two preference profiles~\prefProfileOf{(i,j)}
 and~\pprefProfileOf{(i,j)}. In preference profile~\prefProfileOf{(i,j)}, (i)
 each candidate~$\cVertex\in \cVerticesOf{i} \cup \cVerticesOf{j}$ is approved
 by exactly \idOf{\cVertex}~agents and (ii) each edge~$\cEdge=\{\cVertex,
 \cVertex'\} \in \cEdgesOf{j}{i}$ is approved by exactly~$(\requiredApprovals -
 \idOf{\cVertex} - \idOf{\cVertex'})$~agents. In preference
 profile~\pprefProfileOf{(i,j)}, (i) each candidate~$\cVertex\in \cVerticesOf{i}
 \cup \cVerticesOf{j}$ is approved by exactly $\frac{\requiredApprovals}{2} -
 \idOf{\cVertex}$~agents and (ii) each edge~$\cEdge=\{\cVertex, \cVertex'\}\in
 \cEdgesOf{j}{i}$ is approved by exactly $(\idOf{\cVertex} +
 \idOf{\cVertex'})$~agents.

 Having all the gadgets defined it remains to use them to form the agents and
 the preference profiles of instance~\instanceTo{}; and to
 define~\requiredApprovals{}, \changeBound, and~\committeeSize{}. Since we want
 to have a committee consisting of $\cCliqueSize$~vertices  and~${\cCliqueSize \choose 2}$~edges, we
 let~$\committeeSize{} \ceq  \cCliqueSize + {\cCliqueSize \choose 2}$. We aim at a single committee, thus
 we set~$\changeBound{} = 0$, which enforces that the committee must stay the
 same over time. Further, we set~$\requiredApprovals=2\sidonElem_\vertexTotCount$.
 Finally, to form the preference profiles of~\instanceTo{} we put together, in
 any order, vertex selection gadgets for every part~$\cVerticesOf{i}$, $i \in \set{\cCliqueSize}$
 as well as edge selection gadgets and coherence gadgets for every pair~$\{\cVerticesOf{i},\cVerticesOf{j}\}$
 of parts such that~$i<j$. As for the agents of~\instanceTo{}, with each
 gadget~$\gadget$ we add a separate set of agents needed to implement~$\gadget$
 making sure that all other agents introduced by all other gadgets are approving
 no candidate in their voting profiles occurring in~$\gadget$.

 The running time analysis
 and correctness proof can be found in~\cref{app:cmpvwhardktau} (\appsymb).
 \appendixproof{thm:cmpvwhardktau}{
 \noindent\textbf{Running Time.}\label{app:cmpvwhardktau}
 The reduction builds a polynomial number (with
 respect to the input size) of copies of gadgets. However, the construction time
 of the coherence gadget heavily depends on the computation of Sidon
 set~\sidonSet{} (of size~\vertexTotCount) and the value of its largest
 element~$\sidonElem_\vertexTotCount$. Since~\vertexTotCount{} is linearly
 bounded in the size of the input, due to~\cref{lem:sidon-comp}, we get a
 polynomial running time (with respect to the input size) of
 computing~\sidonSet{}. As a by-product we get that the largest element
 of~\sidonSet{} is also polynomially upper-bounded (with respect to the input
 size).

 \noindent\textbf{Correctness.}
 Naturally, if instance~\instanceTo{} of~\cmpv{} is a \yes-instance, then there
 is a committee that is witnessing this fact; otherwise, such a committee does
 not exist. Since the candidates are all vertices and edges of graph~\cGraph{},
 we refer to edges and vertices being part of some committee as, respectively,
 \emph{selected} vertices and edges.

 On the way to prove the correctness of the reduction, we state the following
 lemma about the coherence gadget.
 \begin{lemma}%
  \label{lem:coherence_gadget}
  For a pair~$\{\cVerticesOf{i},\cVerticesOf{j}\}$ of parts such that~$i<j$, let~\committee{} be a committee
  selecting exactly one vertex from either of them, $\cVertex$, $\cVertex'$
  respectively, and exactly one edge~$\cEdge{} \in \cEdgesOf{j}{i}$
  connecting some vertices of parts~$\cVerticesOf{i}$ and~$\cVerticesOf{j}$. Then, the scores
  of~\committee{} for the profiles of the coherence gadget for parts~$\cVerticesOf{i}$ and~$\cVerticesOf{j}$
  are at least~\requiredApprovals{} if and only if~$\cEdge=\{\cVertex,
  \cVertex'\}$.
 \end{lemma}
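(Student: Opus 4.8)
The plan is to evaluate the two coherence-gadget scores of~\committee{} explicitly and then show that both of them reaching the threshold~\requiredApprovals{} is equivalent to an equality between two pairwise sums of Sidon-set elements, which the Sidon property turns into the required identity $\cEdge = \{\cVertex, \cVertex'\}$. Write $\cVertex \in \cVerticesOf{i}$ and $\cVertex' \in \cVerticesOf{j}$ for the unique vertices of the two parts that~\committee{} selects, and let $\cEdge = \{a,b\}$ with $a \in \cVerticesOf{i}$ and $b \in \cVerticesOf{j}$ be the unique selected edge of~\cEdgesOf{j}{i}. In the two profiles of the coherence gadget only agents belonging to this very gadget approve candidates from $\cVerticesOf{i} \cup \cVerticesOf{j} \cup \cEdgesOf{j}{i}$, and within this candidate family~\committee{} meets exactly $\cVertex$, $\cVertex'$, and~\cEdge{}; hence each of the two scores is simply the sum of the approval counts these three candidates receive.

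First I would compute the score of~\committee{} in profile~\prefProfileOf{(i,j)}. By construction it equals $\idOf{\cVertex} + \idOf{\cVertex'} + (\requiredApprovals - \idOf{a} - \idOf{b})$, so this score is at least~\requiredApprovals{} if and only if $\idOf{\cVertex} + \idOf{\cVertex'} \geq \idOf{a} + \idOf{b}$. Symmetrically, the score in profile~\pprefProfileOf{(i,j)} equals $(\tfrac{\requiredApprovals}{2} - \idOf{\cVertex}) + (\tfrac{\requiredApprovals}{2} - \idOf{\cVertex'}) + (\idOf{a} + \idOf{b}) = \requiredApprovals - (\idOf{\cVertex} + \idOf{\cVertex'}) + (\idOf{a} + \idOf{b})$, which is at least~\requiredApprovals{} if and only if $\idOf{a} + \idOf{b} \geq \idOf{\cVertex} + \idOf{\cVertex'}$. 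Combining the two inequalities, both scores reach~\requiredApprovals{} precisely when $\idOf{\cVertex} + \idOf{\cVertex'} = \idOf{a} + \idOf{b}$.

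It then remains to invoke the defining property of the Sidon set~\sidonSet{}: as~\idOp{} is a bijection onto~\sidonSet{} and the four ids above all lie in~\sidonSet{}, equality of the two pairwise sums forces $\{\idOf{\cVertex}, \idOf{\cVertex'}\} = \{\idOf{a}, \idOf{b}\}$, hence $\{\cVertex, \cVertex'\} = \{a,b\}$. Since $\cVertex, a \in \cVerticesOf{i}$ and $\cVertex', b \in \cVerticesOf{j}$ sit in the two disjoint parts, this pins down $\cVertex = a$ and $\cVertex' = b$, i.e.\ $\cEdge = \{\cVertex, \cVertex'\}$. For the converse direction I would simply note that $\cEdge = \{\cVertex, \cVertex'\}$ makes the two pairwise sums coincide, so both scores equal~\requiredApprovals{} exactly.

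The two score computations are routine once one is careful with the bookkeeping, so I expect no genuine obstacle here; the only points to double-check are that the profiles are well-defined---each stated approval count is non-negative, which follows from $\requiredApprovals = 2\sidonElem_{\vertexTotCount}$ together with $\idOf{\cdot} \leq \sidonElem_{\vertexTotCount}$---and that no further candidate of~\committee{} (e.g.\ vertices of other parts or edges of other part-pairs) contributes to these two profiles, which is guaranteed by the gadget-local assignment of agents. The conceptual heart is the one-line Sidon argument, and it becomes available only once the two scores have been written in the additive form above.
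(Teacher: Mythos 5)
Your proposal is correct and follows essentially the same route as the paper's proof: compute the two gadget scores explicitly, observe that both reaching~$x$ forces equality of the two pairwise id-sums, and invoke the Sidon property to conclude $\cEdge=\{\cVertex,\cVertex'\}$, with the converse following from cancellation. The extra sanity checks you mention (non-negativity of the approval counts and gadget-locality of the agents) are valid and implicit in the paper's construction.
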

 
 \begin{proof}
  Assume that a selected edge connects vertices~$\cVvertex$ and~$\cVvertex'$
  from, respectively, part~$\cVerticesOf{i}$ and part~$\cVerticesOf{j}$. Let us compute the scores
  of~\committee{} for profiles~\prefProfileOf{(i,j)}
  and~\pprefProfileOf{(i,j)} of the coherence gadget for~$i$ and~$j$ (note that
  due to assumptions on~\committee{} only candidates~\cVertex{},
  $\cVertex'$, and~\cEdge{} contribute to the scores):
  \begin{align}
   &\score_{\prefProfileOf{(i,j)}}(\committee) = \idOf{\cVertex} +
   \idOf{\cVertex'} + \requiredApprovals - \idOf{\cVvertex} - \idOf{\cVvertex'},
   \label{eq:edge_selection_one}
   \\
   &\score_{\pprefProfileOf{(i,j)}}(\committee) =
   \frac{\requiredApprovals}{2}-\idOf{\cVertex}
   +\frac{\requiredApprovals}{2}-\idOf{\cVertex'} + \idOf{\cVvertex} +
   \idOf{\cVvertex'}.
   \label{eq:edge_selection_two}
  \end{align}

  If both of the scores are at least~\requiredApprovals{}, after
  simplifying the equations we arrive at:
  \begin{align*}
   &\idOf{\cVertex} + \idOf{\cVertex'} \geq \idOf{\cVvertex} + \idOf{\cVvertex'}
   \\ \land &
   \idOf{\cVvertex} + \idOf{\cVvertex'} \geq \idOf{\cVertex}
   + \idOf{\cVertex'}.
  \end{align*}
  This, in turn, simply means that:
  \begin{equation}\label{eq:vertices_equality}
   \idOf{\cVvertex} + \idOf{\cVvertex'} = \idOf{\cVertex} + \idOf{\cVertex'}.
  \end{equation}
  Recall that the image of bijective function~$\idOf{\cdot}$ is a Sidon set.
  Thus, by the definition of the Sidon set, \cref{eq:vertices_equality} is true
  if and only if~$\{\cVertex, \cVertex'\} = \{\cVvertex, \cVvertex'\} = \cEdge$.

  The opposite direction follows directly from the definition of Sidon sets
  because all~$\idOf{\cdot}$-terms in~\cref{eq:edge_selection_one}
  and~\cref{eq:edge_selection_two} cancel out resulting in both scores
  being~$\requiredApprovals$.
 \lqed
\end{proof}

 Analogously, we provide the following lemma that describes the role of the
 vertex and edge selection gadgets.
 
 \begin{lemma}%
  \label{lem:selection_gadgets}
  Every committee witnessing a \yes-instance selects exactly one vertex from
  each part of~\cGraph{} and exactly~$\cCliqueSize \choose 2$ edges, one for
  each distinct pair~$\{\cVerticesOf{i},\cVerticesOf{j}\}$ of parts.
  Additionally, the scores of such a committee~\committee{} in all vertex and
  edge selection gadgets are exactly~\requiredApprovals{}.
 \end{lemma}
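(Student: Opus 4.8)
The plan is to combine a one-line structural observation with a counting argument that exploits the tightness of the committee-size bound. Since~$\changeBound=0$, every solution is a single committee~$\committee$ repeated across all stages, so it suffices to reason about one committee~$\committee$ with~$|\committee|\leq\committeeSize$ that scores at least~$\requiredApprovals$ in each of the (vertex, edge, and coherence) gadget profiles.

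First I would analyze a single vertex selection gadget for a part~$\cVerticesOf{i}$. By construction its profile only involves the agents introduced for that gadget, each approving exactly one vertex of~$\cVerticesOf{i}$, while every vertex of~$\cVerticesOf{i}$ is approved by exactly~$\requiredApprovals$ of them; all other candidates (edges and vertices of other parts) receive no approval in this profile. Hence the score of~$\committee$ in this profile equals~$\requiredApprovals\cdot|\committee\cap\cVerticesOf{i}|$, and the requirement that this score be at least~$\requiredApprovals$ forces~$|\committee\cap\cVerticesOf{i}|\geq 1$. An identical argument for the edge selection gadget of a pair~$\{\cVerticesOf{i},\cVerticesOf{j}\}$ with~$i<j$ yields~$|\committee\cap\cEdgesOf{j}{i}|\geq 1$.

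Next comes the counting step. Summing these lower bounds over the~$\cCliqueSize$ parts and the~$\cCliqueSize\choose 2$ pairs, and using that the parts~$\cVerticesOf{i}$ are pairwise disjoint while the edge sets~$\cEdgesOf{j}{i}$ partition~$\cEdges{}$, I obtain
\[
 |\committee| = \sum_{i}|\committee\cap\cVerticesOf{i}| + \sum_{i<j}|\committee\cap\cEdgesOf{j}{i}| \geq \cCliqueSize + {\cCliqueSize\choose 2} = \committeeSize.
\]
Together with~$|\committee|\leq\committeeSize$ this forces~$|\committee|=\committeeSize$ and makes each individual lower bound tight. Therefore~$|\committee\cap\cVerticesOf{i}|=1$ for every part and~$|\committee\cap\cEdgesOf{j}{i}|=1$ for every pair, which is exactly the claimed structure of one selected vertex per part and one selected edge per pair; substituting these values back into the score expressions shows every vertex and edge selection gadget attains score exactly~$\requiredApprovals$.

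The main obstacle I expect is the bookkeeping that justifies reading each gadget's score off purely from the relevant intersection. I must carefully invoke the construction's guarantee that the agents of one gadget abstain in every other gadget's profile, so that candidates outside~$\cVerticesOf{i}$ (respectively outside~$\cEdgesOf{j}{i}$) contribute nothing to that gadget's score, and that distinct candidates are approved by disjoint agent sets so the score is additive in the number of selected candidates. Once this is pinned down, the counting collapses cleanly because every candidate is either a vertex lying in a unique part or an edge lying in a unique~$\cEdgesOf{j}{i}$.
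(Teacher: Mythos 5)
Your proposal is correct and follows essentially the same route as the paper's proof: first force at least one selected vertex per part and at least one selected edge per pair via the zero-score argument in the corresponding selection profile, then use the committee-size bound $\committeeSize=\cCliqueSize+\binom{\cCliqueSize}{2}$ together with the disjointness of the candidate classes to turn ``at least one'' into ``exactly one,'' from which the exact scores follow. Your write-up merely makes explicit the counting step and the $\changeBound=0$ single-committee observation that the paper leaves implicit.
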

 
 \begin{proof}
  Let us fix a committee~$\committee$ witnessing a \yes-instance. Towards a
  contradiction, assume that there exists a part~\cVerticesOf{i} a vertex of which is not
  selected. Then, since in profile~\prefProfileOf{V_i} only vertices from
  part~\cVerticesOf{i} are approved, the score of~\committee{} in~\prefProfileOf{V_i} is
  zero; the contradiction. Similarly, assume that there is a pair~$\{\cVerticesOf{i},\cVerticesOf{j}\}$ of
  parts such that no edge in~$\cEdgesOf{j}{i}$ is selected by~\committee{}. Again,
  by an analogous argument, if this is the case, then the scores of~\committee{}
  in profiles~\prefProfileOf{\cEdgesOf{j}{i}} and~\pprefProfileOf{\cEdgesOf{i}{j}}
  are zero which gives a contradiction. From the fact that there are exactly
  \cCliqueSize{}~vertex selection gadgets and exactly $\cCliqueSize \choose
  2$~edge selection gadgets, it follows that~\committee{} selects exactly one
  vertex from each part of~\cGraph{} and exactly one edge for a pair of parts.
  By the construction of the vertex and edge selection gadgets, indeed each of
  them gives score exactly~\requiredApprovals{}.
 \lqed
\end{proof}

 Having the above lemmas, we finally show the correctness of our reduction. In
 one direction, suppose \cClique{}~is a clique of size~\cCliqueSize{} in
 graph~\cGraph{} of an instance~\instanceFrom{} of~\multicoloredClique{}. We
 construct a committee~\committee{} selecting all vertices of~\cClique{} and
 all edges connecting the vertices in~\cClique{}. Indeed, the necessary
 condition from~\cref{lem:selection_gadgets} is met because of the definition of
 a clique and the fact that~\cClique{} is a clique of size~\cCliqueSize{}. Again
 by the definition of a clique, for every two vertices from distinct parts,
 committee~\committee{} selects the edge that connects them; thus,
 by~\cref{lem:coherence_gadget}, committee~\committee{} is witnessing the
 positive answer.
 
 For the opposite direction, suppose there is no clique in~\cGraph{}
 of~\instanceFrom{}. We show that there is no committee~\committee{} witnessing
 a \yes-instance. Due to~\cref{lem:selection_gadgets} it follows
 that~\committee{} selects a vertex for each part of~\cGraph{} and an edge for
 each pair of parts.
 However, since there is no clique in~\cGraph{} every
 committee~\committee{} there exists
 at least one pair~$\{\cVertex, \cVertex'\}$ of vertices from distinct parts
 that are not connected with an edge. Thus, due to~\cref{lem:coherence_gadget}
 at least in one coherence gadget there is at least one profile for which the
 score of~\committee{} is below~\requiredApprovals{}; which finishes the
 argument.
\lqed
}
\end{proof}

\noindent
As for \rmpv{} and the parameters~$k$ and~$\tau$, the situation
differs from that for~\cmpv{}. Namely, for the former we can only show the
(parameterized) intractability with respect to~$\tau$.

\begin{theorem}[\appref{thm:rmpvwhardtau}]
 \label{thm:rmpvwhardtau}
 \rmpv{} parameterized by~$\tau$ is \W{1}-hard.
\end{theorem}

\appendixproof{thm:rmpvwhardtau}
{
To prove~\cref{thm:rmpvwhardtau},
we aim for employing~\cref{lem:cmpvtormpv},
which forms a parameterized reduction regarding~$\tau$.
To this end,
we need firstly to reduce any instance with~$\ell=0$ of~\cmpv{} to an equivalent instance of~\cmpv{} with~$k=|C|/2$ and~$\ell=0$ such that the number of resulting voting profiles only depends on~$\tau$.

\begin{lemma}
 \label{lem:rmpvtormpv}
 There is an algorithm that,
 on every instance~$(A,C,\allowbreak U,k,\ell,x)$ with~$\ell=0$ of~\cmpv{},
 computes in polynomial time an equivalent instance~$(A',C',U',k',\ell',x')$ of~\cmpv{} with~$k'=|C'|/2$, $\ell'=0$, and~$|U'|=|U|$.
\end{lemma}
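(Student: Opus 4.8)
The plan is to exploit that $\ell=0$ collapses a solution to a single committee $C^*$ that is reused in all $\tau$ stages; both the given and the produced instance thus ask only for one fixed committee whose score is at least the threshold in every profile. Consequently I leave the stage structure untouched, so that $|U'|=|U|=\tau$ and $\ell'=0$ hold by construction, and I only pad the candidate set $C$ (write $m\ceq|C|$), adjusting $k'$ and the required score so that $k'=|C'|/2$. The transformation splits into two cases, according to whether $k\geq m/2$ or $k<m/2$.

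If $k\geq m/2$, the committee budget is already large enough and I merely inflate the number of candidates: I add $2k-m$ fresh \emph{free} candidates that no agent ever approves, keep all agents and all approvals, and set $k'\ceq k$, $x'\ceq x$, and $\ell'\ceq 0$. Then $|C'|=m+(2k-m)=2k=2k'$. Since a free candidate changes no score, the stage-$t$ score $\score_t'(D)$ of any $D\subseteq C'$ in the new instance equals $\score_t(D\cap C)$; the forward direction reuses the old committee verbatim, and the backward direction projects a valid new committee $D$ to $D\cap C$, which has size at most $k'=k$ and identical stagewise scores.

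If $k<m/2$, I must instead prevent the enlarged budget from being spent on original candidates, which is the crux of the construction. I introduce $g\ceq m-2k$ fresh \emph{mandatory} candidates $z_1,\dots,z_g$ and, for each $z_i$, a dedicated group $G_i$ of $W\ceq n$ new agents approving $z_i$ in every profile and nothing else (the original agents retain their approvals, which only ever point to original candidates). I set $C'\ceq C\cup\{z_1,\dots,z_g\}$, $k'\ceq m-k$, $\ell'\ceq 0$, and $x'\ceq x+gW$, so that $|C'|=m+g=2(m-k)=2k'$ as required. Scores then decompose, in every profile $t$, as $\score_t'(D)=\score_t(D\cap C)+W\cdot|\{i:z_i\in D\}|$.

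Correctness hinges entirely on the calibration of $W$ and $x'$. If a committee $D$ omits even one $z_i$, then in every profile its score is at most $(g-1)W+n$, which is strictly below $x'=x+gW$ exactly when $W>n-x$; the choice $W=n$ works because $x\geq 1$ (as $x\in\N$). Hence every valid new committee contains all $g$ mandatory candidates, leaving at most $k'-g=k$ slots for original ones, and $D\cap C$ is then a committee of size at most $k$ whose stagewise scores equal $\score_t'(D)-gW\geq x$, i.e.\ an old solution; conversely $C^*\cup\{z_1,\dots,z_g\}$ turns an old solution into a new committee of size at most $k+g=k'$ with score $\score_t(C^*)+gW\geq x'$. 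The whole construction adds at most $m$ candidates and $mn$ agents and creates no new stage, so it runs in polynomial time. I expect the only genuine obstacle to be this bookkeeping in the second case---verifying that a mandatory candidate can never be profitably traded for original candidates, which is precisely what pins down $W=n$ and $x'=x+gn$.
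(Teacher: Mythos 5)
Your construction coincides with the paper's: pad with never-approved candidates when $k\geq |C|/2$, and otherwise add $|C|-2k$ mandatory candidates each backed by $n$ dedicated always-approving agents, raising the threshold to $x+n(|C|-2k)$ so that omitting any mandatory candidate is fatal (using $x\geq 1$). The proof is correct and takes essentially the same approach as the paper.
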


\begin{proof}
 Let~$I=(A,C,U,k,\ell,x)$ with~$\ell=0$ be an instance of~\cmpv{}.
 We construct an instance $I'=(A',C',U',k',\ell',x')$ of~\cmpv{} with~$k'=|C'|/2$, $\ell'=0$, and~$|U'|=|U|$, as follows.
 
 \struct{Construction:}
 Initially,
 we set~$I'$ to~$I$.
 If~$k=|C|/2$, we are done.
 If~$k>|C|/2$,
 then we add~$2k-|C|$ candidates to~$C$, forming~$C'$.
 If~$k<|C|/2$,
 then we add~$|C|-2k$ candidates to~$C$, say set~$C^*$,
 forming~$C'=C\cup C^*$.
 Moreover,
 we add~$|A|\cdot (|C|-2k)$ agents to~$A$, say~$A^*$,
 forming~$A'=A\cup A^*$,
 where for each~$c\in C^*$,
 in each stage, 
 exactly~$|A|$ of the new agents approve~$c$ (this forms~$U'$).
 Note that~$|U'|=|U|$.
 Set~$k'=|C'|/2$, $\ell'=0$,
 and~$x'=x+|A|(|C|-2k)$.
 Since correctness for the cases~$k=|C|/2$ and~$k>|C|/2$ is immediate,
 we prove in the following correctness for the case of~$k<|C|/2$.
 
 \struct{Correctness:}
 In the case of~$k<|C|/2$,
 we claim that~$I$ is a \yes-instance if and only if~$I'$ is a \yes-instance.
\yes-instance 
 \RD{}
 Let~$\calC=(C_1,\ldots,C_\tau)$ be a solution to~$I$.
 We claim that~$(C_1',\dots,C_\tau')$ with~$C_i'=C_i\cup C^*$ is a solution to~$I'$.
 First note that~$|C_t'|\leq |C_t|+|C|-2k \leq |C|-k=|C'|/2$.
 Moreover, 
 we have that~$|\symdif{C_t'}{C_{t+1}'}|=|\symdif{C_t}{C_{t+1}}|=0$.
 Finally,
 we have that~$\score_t(C_t')=\score_t(C_t)+|A|(|C|-2k)\geq x+|A|(|C|-2k)=x'$.
 
 \LD{}
 Let~$\calC'=(C_1',\dots,C_\tau')$ be a solution to~$I'$.
 First observe that~$C^*\subseteq C_t'$ for all~$i\in\set{\tau}$.
 Suppose not,
 that is,
 there is a~$t\in\set{\tau}$ and~$c\in C^*$ such that~$c\not\in C_t'$.
 Then the score of~$C_t'$ is at most~$|A|+|A|(|C|-2k-1)=|A|(|C|-2k)<x'$,
 a contradiction to the fact that~$\calC'$ is a solution.
 We claim that~$(C_1,\ldots,C_\tau)$ with~$C_t=C_t^*\setminus C^*$ a solution to~$I$.
 Note that~$|C_t|=|C_t'|-(|C|-2k)\leq (2|C|-2k)/2 -(|C|-2k)=k$,
 and that~$|\symdif{C_t}{C_{t+1}}|=|\symdif{C_t'}{C_{t+1}'}|=0$.
 Finally,
 note that~$\score_t(C_t) = \score_t(C_t')-|A|(|C|-2k)\geq x+|A|(|C|-2k)-|A|(|C|-2k)=x$.
\lqed
\end{proof}

\begin{proof}[Proof of~\cref{thm:rmpvwhardtau}]
 \cref{lem:rmpvtormpv} followed by~\cref{lem:cmpvtormpv} gives a parameterized reduction from~\cmpv{} to \rmpv{} regarding the parameter~$\tau$.
 \cref{thm:cmpvwhardktau} then finishes the proof.
\lqed
\end{proof}
}
\noindent
Altogether, \cref{thm:rmpvwhardtau,thm:bothnphard,thm:cmpvwhardktau} mark
clear borders of computational tractability, 
allowing us to refine our search for
efficient (parameterized) algorithms for the problems we introduced.

\section[Polynomial for Constant Parameter Values]{Polynomial for Constant
Parameter Values}%
\label{sec:xp-and-fpt}
\appendixsection{sec:xp-and-fpt}

We start a series of algorithms in this section, with the one offering a
polynomial-time running time in the case of a small target committee size~$k$.
The proof of \cref{thm:xpregardingk} is based on computing in \XP-running time
an auxiliary directed graph in which we then check for the existence of an
$s$-$t$ path witnessing a \yes-instance.

\appendixproof{thm:xpregardingk}%
{%
\begin{proof}%
 Let $I=(A,C,U,k,\ell,x)$ be an instance of~\cmpv{} (or \rmpv{}) with~$n$ agents, 
 $m$ candidates,
 and~$\tau$ voting profiles.
 We compute the following directed graph~$D=(V,A)$ with vertex set~$V=\{s,t\}\uplus V^1\uplus\dots\uplus V^\tau$,
 where~$V^t=\{v^t_X\mid X\subseteq C, |X|\leq k, \score_t(X)\geq x\}$,
 and arc set~$A$ composed of the sets~$\{(s,v)\mid v\in V^1\}$, 
 $\{(v,t)\mid v\in V^\tau\}$, 
 and for each~$t\in\set{\tau-1}$ the sets
 $\{(v_X^t,v_Y^{t+1})\mid |\symdif{X}{Y}|\leq \ell\}$ in the conservative variant
 (and $\{(v_X^t,v_Y^{t+1})\mid |\symdif{X}{Y}|\geq \ell\}$ in the revolutionary variant).
 Note that there are at most~$\tau\cdot m^k+2$ vertices and at most~$(\tau-1)\cdot (m^k)^2+2m^k$ arcs.
 Hence,
 $D$ can be constructed in~$\O(\tau\cdot k^2 m^{2k+1}n)$ time.
 We claim that~$I$ is a \yes-instance if and only if~$D$ admits an~$s$-$t$ path.
 Note that we can check for an~$s$-$t$ path in~$D$ in time linear in the size of~$D$.
 
 \RD{}
 Let~$(C_1,\dots,C_\tau)$ be a solution to~$I$.
 We claim that~$P=(s,v_{C_1}^1,\dots,v_{C_\tau}^\tau,t)$ is an $s$-$t$ path in~$D$.
 Clearly,
 the arcs~$(s,v_{C_1}^1)$ and~$(v_{C_\tau}^\tau,t)$ are in~$A$.
 For each~$t\in\set{\tau-1}$,
 the arc~$(v_{C_t}^t,v_{C_{t+1}}^{t+1})$ exists since~$|\symdif{C_t}{C_{t+1}}|\leq \ell$ ($\geq \ell$ in the revolutionary case).
 Hence,
 $P$ is an~$s$-$t$ path in~$D$.

 \LD{}
 Let~$P=(s,v_{C_1}^1,\dots,v_{C_\tau}^\tau,t)$ be an $s$-$t$ path in~$D$.
 We claim that~$(C_1,\dots,C_\tau)$ is a solution to~$I$.
 First observe that~$|C_t|\leq k$ and~$\score_t(C_t)\geq x$.
 Moreover,
 we have that $|\symdif{C_t}{C_{t+1}}|\leq \ell$ ($\geq \ell$ in the revolutionary case)
 for each~$t\in\set{\tau-1}$.
 The claim thus follows.
\lqed
\end{proof}%
}%

\begin{theorem}[\appref{thm:xpregardingk}]
 \label{thm:xpregardingk}
 \cmpv{} and \rmpv{} both admit an~$\O(\tau\cdot m^{2k+1}\cdot n)$-time algorithm and hence are contained in~\XP{} when parameterized by~$k$.
\end{theorem}

\noindent
Since the committee size~$k$ must be at most $m$, 
we obtain the following fixed-parameter tractability result regarding $m$.

\begin{corollary}
\label{cor:FPTwrtm}
 \cmpv{} and \rmpv{} both are solvable in time $2^{\O(m\tlog{m})}\cdot\tau n$.
\end{corollary}

\noindent 
The above two results are quite meaningful for elections with few
candidates, even more if the committee to be chosen is very small. In fact,
small-scale elections seem quite common in practice, which can be observed in
preflib~\citep{MW17-trends}, %
an open-access collection of real-world election data.
It turns out that 37\% and 48\% of
instances stored therein feature, respectively, at most $10$ and
$30$~candidates. Thus, we believe the above algorithm is promising in the light
of real-world applications, especially as our theoretical bounds only regard the
worst-case complexity.

We move on to the next result, providing an algorithm exploiting a small
number~$\tau$ of stages.

\begin{theorem}[\appref{thm:xptau}]
 \label{thm:xptau}
 When parameterized by~$\tau$,
 \cmpv{} and \rmpv{} are contained in~\XP{}.
\end{theorem}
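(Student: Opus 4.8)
The plan is to settle both variants at once with a single dynamic program that scans the candidates one at a time, and whose table is indexed by only $\O(\tau)$ aggregate counters, each polynomially bounded. This yields a table of size $(\text{input})^{\O(\tau)}$, which is exactly what \XP{} with respect to $\tau$ asks for. The starting observation is that, because each agent approves at most one candidate, the score decomposes additively: writing $d_t(c)\ceq|\{a\in A\mid u_t(a)=c\}|$ for the number of approvals of candidate $c$ in stage~$t$, we have $\score_t(C)=\sum_{c\in C}d_t(c)$ for every $C\subseteq C$. Hence a candidate sequence $(C_1,\dots,C_\tau)$ is fully described by assigning to each candidate $c$ a \emph{membership pattern} $P_c\ceq\{t\in\set{\tau}\mid c\in C_t\}\subseteq\set{\tau}$, of which there are only $2^\tau$. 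Naively enumerating committees is hopeless, since $k$ may be as large as $m$ and there are up to $\binom{m}{k}^{\tau}$ sequences; instead I fix an arbitrary order on the candidates and decide their patterns on the fly, maintaining just enough information to check every constraint at the end.

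The crucial point is that all three families of constraints can be read off from $\O(\tau)$ counters. For a partial assignment of patterns to a prefix of the candidates, let $s_t$ be the number of already-processed candidates with $t\in P_c$ (the partial size of $C_t$), let $D_t$ be the number of them with $|P_c\cap\{t,t+1\}|=1$ (the partial value of $|\symdif{C_t}{C_{t+1}}|$), and let $\sigma_t\ceq\min\{x,\ \sum_{c\,:\,t\in P_c}d_t(c)\}$ be the partial score of $C_t$ capped at $x$. Processing the next candidate $c$ under a chosen pattern $p\subseteq\set{\tau}$ increases $s_t$ by one exactly when $t\in p$, increases $D_t$ by one exactly when $|p\cap\{t,t+1\}|=1$, and increases $\sigma_t$ by $d_t(c)$ (re-capped at $x$) exactly when $t\in p$; candidates belonging to no committee simply receive $p=\emptyset$. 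A state is the tuple $\big((s_t)_{t\in\set{\tau}},(D_t)_{t\in\set{\tau-1}},(\sigma_t)_{t\in\set{\tau}}\big)$, and the DP records which states are reachable after processing the first $i$ candidates. After all $m$ candidates have been processed, I declare a \yes-instance iff some reachable state satisfies $s_t\le k$ and $\sigma_t=x$ for every $t\in\set{\tau}$ together with $D_t\le\ell$ for \cmpv{} (respectively $D_t\ge\ell$ for \rmpv{}) for every $t\in\set{\tau-1}$. Correctness is immediate in both directions: a reachable accepting state is witnessed by a concrete pattern assignment and hence a concrete committee sequence meeting all bounds, while any solution yields a reachable accepting state by feeding in its patterns. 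Note that the \emph{only} difference between the two problems is the final comparison of $D_t$ with $\ell$, so one DP covers both.

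For the running time I may assume $x\le n$ (otherwise no committee can ever reach score $x$) and $k\le m$. Then $s_t,D_t\in\{0,\dots,m\}$ and $\sigma_t\in\{0,\dots,x\}\subseteq\{0,\dots,n\}$, so the number of states is at most $(m+1)^{2\tau-1}(n+1)^{\tau}=m^{\O(\tau)}n^{\O(\tau)}$; each of the $m$ candidates offers $2^\tau$ pattern choices per state, so the total time is $2^\tau\cdot m^{\O(\tau)}\cdot n^{\O(\tau)}$, polynomial for every fixed $\tau$. This places both \cmpv{} and \rmpv{} in \XP{} with respect to $\tau$, consistently with the \W{1}-hardness shown in \cref{thm:cmpvwhardktau,prop:rmpvwhardtau}.

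I expect the main conceptual hurdle to be the score constraint. Unlike the size and symmetric-difference constraints, which depend only on \emph{how many} candidates receive each pattern, the score depends on \emph{which} candidates do, since the per-candidate weights $d_t(c)$ differ across candidates and stages; this is exactly why merely enumerating pattern-count vectors does not suffice. Processing candidates individually and carrying the capped partial scores $\sigma_t$ inside the state is precisely what resolves this coupling, at the cost of the $n^{\O(\tau)}$ factor. The remaining steps — verifying the three counter-update rules and bounding the state space — are routine.
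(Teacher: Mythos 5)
Your proposal is correct and is essentially the paper's own proof: the paper uses the identical candidate-by-candidate dynamic program over the $2^\tau$ membership patterns (called \emph{candidate fingerprints} there), with a table indexed by the per-stage committee sizes, consecutive symmetric differences, and scores, and a final scan for an entry satisfying $k_t\le k$, $d_t\le\ell$ (resp.\ $\ge\ell$), and $s_t\ge x$. The only cosmetic differences are that you cap the partial scores at $x$ and bound the size counters by $m$ rather than $k$, whereas the paper records exact values and checks the thresholds at the end, yielding the same $m\cdot k^{2\tau}\cdot n^{\tau}$-type \XP{} running time.
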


\appendixproof{thm:xptau}{
\begin{proof}
 We describe a dynamic programming algorithm using the key insight
 that there are only $2^\tau$ possible subsets of committees for
 a single candidate to be part of.
 With $\tau$~being constant, we can afford to iterate through all
 these possibilities.
 Even more importantly, when building up all committees simultaneously in $m$~phases, with increasing~$i$,
 we consider in each phase~$i$ only committee members among the first $i$~candidates.
 Herein, we keep track of the sizes, symmetric differences, and scores
 of all $\tau$~committees using only polynomially many table entries.

 We use a boolean dynamic programming table $T[i$, $k_1, \dots, k_\tau$, $d_1,
 \dots, d_{\tau-1}, s_1,\dots,s_\tau]$ with $0 \le i \le m, 0 \le k_t \le k, 0
 \le d_t \le 2k, 0 \le s_t \le n$ to implement the above idea.
 The meaning of the table entries is as follows.
 Let $T[i, k_1, \dots, k_\tau, d_1, \dots, d_{\tau-1}, s_1,\dots,s_\tau]$ be true
 if and only if there are committees~$C_1$, $\dots$, $C_\tau \subseteq \{c_1, \dots, c_i\}$
 with 
 \begin{itemize}
  \item $|C_t|=k_t$ for each~$t \in \set{\tau}$,
  \item $|\symdif{C_t}{C_{t+1}}| = d_t$ for each $t \in \set{\tau-1}$, and
  \item $\score_t(C_t)=s_t$ for each~$t \in \set{\tau}$.
 \end{itemize}
 It can be easily verified (using the definition of the boolean table), that there exists a solution to our problem
 if and only if $T[m,k'_1, \dots, k'_\tau,d'_1, \dots, d'_{\tau-1}, s'_1,\dots,s'_\tau]$ is true 
 for some combination of  $k'_1,\dots,k'_\tau$, $d'_1, \dots, d'_{\tau-1}$, $s'_1,\dots,s'_\tau$
 with $k'_t\le k$ for each $t \in \set{\tau}$,
 $d'_t\le \ell$ (respectively~$d'_t\ge \ell$, in the revolutionary case) for each $t \in \set{\tau-1}$,
 and  $s'_t\ge x$ for each $t \in \set{\tau}$.
 The table is of size~$\O(m \cdot k^{2\tau} \cdot n^{\tau})$.
 It remains to show how to fill the table
 and that each table entry can be computed efficiently.
 
 As initialization, we set $T[0,0,\dots,0]$ to true and all other table entries $T[0,\dots]$ to false.
 For each~$i>0$, 
 we compute all entries $T[i,\dots]$ as follows,
 assuming that the entries $T[i-1,\dots]$ have been computed.
 For each \emph{candidate fingerprint}~$F\subseteq 2^{\set{\tau}}$ we set
 $T[i+1,k_1, \dots, k_\tau$, $d_1, \dots, d_{\tau-1}, s_1,\dots,s_\tau]$ to true if
 $T[i,k'_1, \dots, k'_\tau$, $d'_1, \dots, d'_{\tau-1}, s'_1,\dots,s'_\tau]$ is true where
 for each $t \in \set{\tau}$ we have 
  \begin{align*}
  k'_t&=\begin{cases} 
    k_t-1 &\text{if }t \in F,\\
    k_t &\text{otherwise,}
    \end{cases}
    \\
    d'_t&=\begin{cases} 
        d_t-1 &\text{if }t \in F \text{ xor } t+1 \in F,\\
        d_t &\text{otherwise, and}
       \end{cases}
    \\
    s'_t&=\begin{cases} 
        s_t-|u_t^{-1}(c_{i+1})| &\text{if }t \in F,\\
        s_t &\text{otherwise.}
        \end{cases}
  \end{align*}
We set all other entries $T[i+1,\dots]$ to false.

\paragraph{Running time.}
Computing all table entries $T[i,\dots]$ for some~$i$ costs
$\O(2^\tau)$ time steps for setting the ``true''-entries
and another $\O(k^{2\tau} \cdot n^{\tau})$ time steps for setting the ``false''-entries.
Hence, the whole table can be computed in $\O(m \cdot k^{2\tau} \cdot n^{\tau})$ time.

\paragraph{Correctness.} We show that each computed table entry
$T[i, k_1, \dots, k_\tau$, $d_1, \dots, d_{\tau-1}$, $s_1,\dots,s_\tau]$
is indeed set to true if and only if there are
committees~$C_1$, $\dots$, $C_\tau \subseteq \{c_1, \dots, c_i\}$ with 
$|C_t|=k_t$ for each~$t \in \set{\tau}$,
$|\symdif{C_t}{C_{t+1}}| = d_t$ for each $t \in \set{\tau-1}$, and
$\score_t(C_t)=s_t$ for each~$t \in \set{\tau}$. We give an argument by induction.
This is trivially correct for all table entries with~$i=0$.
Now, assume that for some~$i \in [m-1]$ it holds that
$T[i, k_1, \dots, k_\tau, d_1, \dots, d_{\tau-1}, s_1,\dots,s_\tau]$
is set to true if and only if there are
committees~$C_1$, $\dots$, $C_\tau \subseteq \{c_1, \dots, c_i\}$ with 
$|C_t|=k_t$ for each~$t \in \set{\tau}$,
$|\symdif{C_t}{C_{t+1}}| = d_t$ for each $t \in \set{\tau-1}$, and
$\score_t(C_t)=s_t$ for each~$t \in \set{\tau}$.
We show that also each table entry
$T[i+1, k_1, \dots, k_\tau, d_1, \dots, d_{\tau-1}, s_1,\dots,s_\tau]$
is set to true if and only if there are
committees~$C_1$, $\dots$, $C_\tau \subseteq \{c_1, \dots, c_{i+1}\}$ with 
$|C_t|=k_t$ for each~$t \in \set{\tau}$,
$|\symdif{C_t}{C_{t+1}}| = d_t$ for each $t \in \set{\tau-1}$, and
$\score_t(C_t)=s_t$ for each~$t \in \set{\tau}$.

Let $T[i+1, k^*_1, \dots, k^*_\tau, d^*_1, \dots, d^*_{\tau-1}, s^*_1,\dots,s^*_\tau]$
be the table entry of interest.

First, assume that $T[i+1, k^*_1, \dots, k^*_\tau, d^*_1, \dots, d^*_{\tau-1}, s^*_1,\dots,s^*_\tau]$ is set to true.
Herein, let $F^* \subseteq 2^{\{1,\dots,\tau\}}$ be the fingerprint used when
the table entry was set to true.
This implies that
\[T[i, k'_1, \dots, k'_\tau, d'_1, \dots, d'_{\tau-1}, s'_1,\dots,s'_\tau]\]
was true with
\[k'_t=\begin{cases} k^*_t-1 &\text{if }t \in F^*\\
                    k^*_t &\text{otherwise;}
  \end{cases}\]
\[d'_t=\begin{cases} d^*_t-1 &\text{if }t \in F^* \text{ xor } t+1 \in F^*\\
      d_t &\text{otherwise;}
    \end{cases}\]
   \[s'_t=\begin{cases} s^*_t-|u_t^{-1}(c_{i+1})| &\text{if }t \in F^*\\
                    s^*_t &\text{otherwise.}
		  \end{cases}\]
Thus, there are committees~$C'_1$, $\dots$, $C'_\tau \subseteq \{c_1, \dots, c_i\}$ with 
$|C'_t|=k'_t$ for each~$t \in \set{\tau}$,
$|\symdif{C'_t}{C'_{t+1}}| = d'_t$ for each $t \in \set{\tau-1}$, and
$\score_t(C'_t)=s'_t$ for each~$t \in \set{\tau}$.
It is easy to verify that we obtain the desired committees
$C^*_1$, $\dots$, $C^*_\tau \subseteq \{c_1, \dots, c_{i+1}\}$ via
by
\[C^*_t=\begin{cases} C^*_t\cup\{c_{i+1}\} &\text{if }t \in F^*\\
                        C^*_t &\text{otherwise}
		      \end{cases}\]
and it holds that $|C^*_t|=k^*_t$ for each~$t \in \set{\tau}$,
$|\symdif{C^*_t}{C^*_{t+1}}| = d^*_t$ for each $t \in \set{\tau-1}$, and
$\score_t(C^*_t)=s^*_t$ for each~$t \in \set{\tau}$.

Second, assume that $T[i+1, k^*_1, \dots, k^*_\tau, d^*_1, \dots, d^*_{\tau-1}, s^*_1,\dots,s^*_\tau]$ is set to false.
Assume towards a contradiction that there are
committees~$C^*_1$, $\dots$, $C^*_\tau \subseteq \{c_1, \dots, c_{i+1}\}$ with 
$|C^*_t|=k^*_t$ for each~$t \in \set{\tau}$,
$|\symdif{C^*_t}{C^*_{t+1}}| = d^*_t$ for each $t \in \set{\tau-1}$, and with
$\score_t(C^*_t)=s^*_t$ for each~$t \in \set{\tau}$.
Obviously, there must be committees
$C'_1$, $\dots$, $C'_\tau \subseteq \{c_1, \dots, c_i\}$
via $C'_t=C^*_t \setminus \{c_{i+1}\}$
with $|C'_t|=k'_t$ for each~$t \in \set{\tau}$,
$|\symdif{C'_t}{C'_{t+1}}| = d'_t$ for each $t \in \set{\tau-1}$, and
$\score_t(C'_t)=s'_t$ for each~$t \in \set{\tau}$.
Moreover, by our inductive assumption, we have
\[T[i,k'_1, \dots, k'_\tau, d'_1, \dots, d'_{\tau-1},
s'_1,\dots,s'_\tau]=\textrm{true.}\]
Thus,  our
algorithm would have set $T[i+1, k^*_1, \dots, k^*_\tau, d^*_1, \dots,
d^*_{\tau-1}, s^*_1,\dots,s^*_\tau]$ to true using the fingerprint~$F^*=\{t \mid
c_{i+1} \in C^*_t\}$; a contradiction.
\end{proof}
}

\noindent
The $\xp$ containment shown in \cref{thm:xptau} is surprising because known results for multistage or temporal
(graph)
problems show either
\NP-hardness for constant lifetime or trivial
fixed-parameter tractability for this parameter.
In practice,
the algorithm from~\cref{thm:xptau} could prove useful for
short-term planning, which is inevitable for successful planning.

The last result features the difference between~\rmpv{}
and~\cmpv{} manifesting in the impact on the computational complexity of
the difference~$\ell$ of consecutive committees.

\begin{theorem}[\appref{thm:rmpvXPell}]
 \label{thm:rmpvXPell}
 Every instance~$I=(A,C,U,k,\ell,x)$ of~\rmpv{} with $n$~agents, 
 $m$ candidates,
 and~$\tau$ voting profiles
 can be decided in~$\O(\tau \cdot m^{4\ell+1}\cdot n)$ time.
\end{theorem}

Inspired by a first draft of our paper,
recently
\citet{KRZ21} proved that \rmpv{} is solvable in~$2^{\O(\ell)}\cdot
(n+m+\tau)^{\O(1)}$~time---it is in FPT when parameterized only by~$\ell$.

\appendixproof{thm:rmpvXPell}
{
\noindent
Note that we can assume that~$\ell\leq 2k$.
The crucial observation behind the $\XP$-algorithm comes from the structure of
every solution~$\calC=(C_1,\ldots,C_\tau)$:
Since~$|\symdif{C_i}{C_{i+1}}|\geq \ell$, 
there are~$X\subseteq C_i\setminus C_{i+1}$ 
and $Y\subseteq C_{i+1}\setminus C_i$ such that~$X\cap Y=\emptyset$ and~$|X\cup Y|\geq\ell$.
This allows us to build a directed graph~$D$ as follows (see~\cref{fig:xpell}
for an illustration).
{
\begin{figure*}[t]
   \centering
  \begin{tikzpicture}
  \def\xr{1.0}
  \def\yr{0.75}
  \def\xsh{2.4}
  \def\bxh{3}
  \def\bxw{1.33}
  \def\vdsc{0.8}

  \tikzstyle{xnode}=[circle,fill,scale=0.5,draw]
  \tikzstyle{xedge}=[thick,-];
  \tikzstyle{xarc}=[thick,->,>=latex];
  \tikzstyle{earc}=[lightgray,dashed,thick,->,>=latex];

  \node (s) at (0,0)[xnode,label=-90:{$s$}]{};
  \node (t) at (5*\xr*\xsh,0)[xnode,label=-90:{$t$}]{};

  \node (hl1) at (.85*\xr*\xsh,0.3*\bxh*\yr)[]{$V^{1}$};
  \node at (hl1.north west)[anchor=north west,dotted,rounded corners,minimum width=\xr*\bxw cm,minimum height=\yr*\bxh cm,draw]{};
  \node (v1) at (1*\xr*\xsh,-0.3*\yr)[xnode,label=-90:{$v^{1}_{X,Y}$}]{};
  \node at (v1.north)[anchor=south,yshift=\yr*0.15cm,scale=\vdsc]{$\vdots$};
  \node at (v1.south)[anchor=north,yshift=-\yr*0.6cm,scale=\vdsc]{$\vdots$};

  \node (hl2) at (1.85*\xr*\xsh,0.3*\bxh*\yr)[]{$V^{2}$};
  \node at (hl2.north west)[anchor=north west,dotted,rounded corners,minimum
  width=\xr*\bxw cm,minimum height=\yr*\bxh cm,draw]{};
  \node (v2) at (2*\xr*\xsh,-0.2*\yr)[xnode,label=-90:{$v^{2}_{X',Y'}$}]{};
  \node at (v2.north)[anchor=south,yshift=\yr*0.15cm,scale=\vdsc]{$\vdots$};
  \node at (v2.south)[anchor=north,yshift=-\yr*0.6cm,scale=\vdsc]{$\vdots$};

  \node (br) at (3*\xr*\xsh,0)[scale=1.33]{$\cdots$};

  \node (hl4) at (3.85*\xr*\xsh,0.3*\bxh*\yr)[]{$V^{\tau-1}$};
  \node at (hl4.north west)[anchor=north west,dotted,rounded corners,minimum
  width=\xr*\bxw cm,minimum height=\yr*\bxh cm,draw]{};
  \node (v4) at (4*\xr*\xsh,-0.4*\yr)[xnode,label=-90:{$v^{\tau-1}_{X'',Y''}$}]{};
  \node at (v4.north)[anchor=south,yshift=\yr*0.15cm,scale=\vdsc]{$\vdots$};
  \node at (v4.south)[anchor=north,yshift=-\yr*0.6cm,scale=\vdsc]{$\vdots$};

  \draw[xarc] (s) to node(a)[midway]{}(v1);
  \node (txt) at (-0.35*\xsh,3.*\yr)[align=left, anchor=north
  west,font=\small,rounded corners,fill=lightgray!15!white]{$\iff$ $\exists$ $k$-sized $C'\subseteq C$ containing $X$ being  disjoint 
  from $Y$ with $\score_1(C')\geq x$};
  \draw[earc] (txt.south west) to [out=-135,in=90](a);

  \draw[xarc] (v1) to node(a)[midway]{}(v2);
  \node (txt) at (-0.875*\xsh,-2.*\yr)[align=left, anchor=north
  west,font=\small,rounded corners,fill=lightgray!15!white]{\parbox{15cm}{$\iff$ $X\cap X'=Y\cap Y'=\emptyset$ and
  $\exists$ $k$-sized $C'\subseteq C$ containing  $X'\cup Y$ being
  disjoint from $X\cup Y'$ with $\score_2(C')\geq x$}};
  \draw[earc] (txt) to [out=135,in=-90](a);

  \draw[xarc] (v2) to node(a)[midway]{}(br);
  \draw[xarc] (br) to node(a)[midway]{}(v4);

  \draw[xarc] (v4) to node(a)[midway]{}(t);
  \node (txt) at (0.5*\xsh,2.125*\yr)[align=left, anchor=north west,font=\small,rounded corners,fill=lightgray!15!white]{$\iff$ $\exists$ $k$-sized $C'\subseteq C$ containing $Y''$ being disjoint 
  from $X''$ with $\score_\tau(C')\geq x$};
  \draw[earc] (txt.south east) to [out=-45,in=90](a);
  \end{tikzpicture}
  \caption{Illustration of an in-out graph from~\cref{def:inoutgraph}.}
  \label{fig:xpell}
\end{figure*}
}

\begin{definition}
 \label{def:inoutgraph}
 Given an instance~$I=(A,C,U,k,\ell,x)$ of~\rmpv{},
 the \emph{in-out} graph of~$I$ is a directed graph~$D_I$ with vertex set 
 $V =V^1\cup \dots\cup V^{\tau-1}\cup\{s,t\}$ where
  $V^i = \{v^i_{X,Y}\mid X,Y\subseteq C,\, |X|,|Y|\leq \ell, X\cap Y=\emptyset,|X\cup Y|\geq \ell\}$ for every~$i\in\set{\tau-1}$,
 containing the arcs~$(s,v^1_{X,Y})$ if and only if there is a~$k$-sized committee at time step~$1$ containing~$X$ being disjoint from~$Y$ with score at least~$x$,
 the arcs~$(t,v^{\tau-1}_{X,Y})$ if and only if there is a~$k$-sized committee at time step~$\tau$ containing~$Y$ being disjoint from~$X$ with score at least~$x$,
 and an arc~$(v^i_{X,Y},v^j_{X',Y'})$ if and only if
 \begin{inparaenum}[(i)]
  \item $j=i+1$,~$X\cap X'=Y'\cap Y=\emptyset$, and 
  \item there is a~$k$-sized committee at time step~$i+1$ containing~$Y\cup X'$ being disjoint from~$X\cup Y'$ with score at least~$x$.
 \end{inparaenum}
\end{definition}

\noindent
With an \XP-running time (regarding~$\ell$),
we can compute the in-out graph for every given instance.
\begin{lemma}%
 \label{lem:inoutgraphrunningtime}
 Given an 
 instance
 of~\rmpv{} with $n$~agents, 
 $m$ candidates,
 and~$\tau$ voting profiles,
 the in-out graph~$D_I$ of~$I$ can be computed 
 in~$\O(\tau \cdot m^{4\ell+1}n)$~time.
\end{lemma}

{
\begin{proof}
 We can compute each~$V^i$ in~$\O(m^{2\ell})$ time by brute forcing every pair of candidate subsets each of size at most~$\ell$.
 Hence,
 we can compute~$V$ in~$\O(\tau\cdot m^{2\ell})$ time.
 The arcs incident with~$s$ and~$t$ can be computed in~$\O(n+m)$ time,
 same for conditions~(i) and~(ii),
 resulting in an overall running time in~$\O(\tau \cdot m^{4\ell+1}n)$.
\lqed
\end{proof}
}

\noindent
We prove that deciding an instance of~\rmpv{} can be done through deciding whether there is an~$s$-$t$ path in the instance's in-out graph.

\begin{lemma}%
 \label{lem:pathinoutgraph}
 Let~$I$ be an instance of~\rmpv{} and~$D_I$ its in-out graph.
 Then, 
 there is an~$s$-$t$~path in~$D_I$ if and only if $I$~is a \yes-instance.
\end{lemma}

{
\begin{proof}
  Let $I=(A,C,U,k,\ell,x)$ be an instance of~\rmpv{},
  and~$D_I$ the in-out graph for~$I$.
  
  \RD{}
  It follows directly from our construction of~$D_I$ that 
  each~$s$-$t$~path in~$D_I$ is of the form~$P=(s,v_{X_1,Y_1}^1,\dots,v_{X_{\tau-1},Y_{\tau-1}}^{\tau-1},t)$.
  Since arc~$(s,v_{X_1,Y_1}^1)$ exists,
  by construction there is a~$k$-size committee~$C_1$ containing~$X$ but disjoint from~$Y$ of score at least~$x$ (similar for~$C_\tau$).
  Since the arc~$(v_{X_{i},Y_{i}}^{i},v_{X_{i+1},Y_{i+1}}^{i+1})$ exists,
  at time step~$i+1$ there is a~$k$-size committee~$C_{i+1}$ containing~$Y_i\cup X_{i+1}$ but disjoint from~$X_i\cup Y_{i+1}$ of score at least~$x$.
  Observe that~$\symdif{C_i}{C_{i+1}}\supseteq X_i\cup Y_i$ and thus, 
  of size at least~$\ell$.
  Then we find in each time step a committee of score~$x$,
  and consecutive committees differ in at least~$\ell$ elements.

  \LD{}
  Assume there is a solution~$\calC=(C_1,\ldots,C_\tau)$.
  For every~$i\in\set{\tau-1}$,
  let~$X_i\subseteq C_i\setminus C_{i+1}$,
  $Y_i\subseteq C_{i+1}\setminus C_i$.
  Note that~$C_1$ is a committee containing~$X$ but disjoint from~$Y$ of score at least~$x$ in time step one,
  and thus arc~$(s,v^1_{X_1,Y_1})$ exists (analogously for~$C_\tau$ and arc~$(v^{\tau-1}_{X_{\tau-1},Y_{\tau-1}},t)$).
  We claim that there is an arc from~$v^i_{X_i,Y_i}$ to~$v^{i+1}_{X_{i+1},Y_{i+1}}$ for every~$i\in\set{\tau-2}$.
  Note that since~$X_i\cap C_{i+1}=\emptyset$, we have~$X_i\cap X_{i+1}=\emptyset$ (analogously for~$Y$).
  Moreover,
  $C_{i+1}$ is a $k$-size committee with score at least~$x$ (since~$\calC$ is a solution),
  and it contains~$Y_i\cup X_{i+1}$, and is disjoint from~$X_i\cup Y_{i+1}$.
  Hence, 
  the arc exists.
  It follows that~$P=(s,v^1_{X_1,Y_1},\dots,v^{\tau-1}_{X_{\tau-1},Y_{\tau-1}},t)$ forms an~$s$-$t$ path in~$D_I$.
\lqed
\end{proof}
}

\noindent
Given~\cref{lem:inoutgraphrunningtime,lem:pathinoutgraph},
we are set to prove~\cref{thm:rmpvXPell}.

{
\begin{proof}[Proof of \cref{thm:rmpvXPell}]
 Let $I=(A,C,U,k,\ell,x)$ be an instance of~\rmpv{} with~$n$ agents, 
 $m$ candidates,
 and~$\tau$ voting profiles.
 First, 
 construct the in-out graph~$D_I$ of~$I$ in~$\O(\tau \cdot \ell^4 m^{4\ell+1}n)$ time (due to~\cref{lem:inoutgraphrunningtime}).
 Next, 
 in time linear in the size of~$D_I$
 check for an~$s$-$t$ path in~$D_I$.
 Due to~\cref{lem:pathinoutgraph},
 if an~$s$-$t$ path is found,
 then report that~$I$ is a \yes{}-instance,
 and otherwise,
 if no such~$s$-$t$ path is found,
 then report that~$I$ is a \no-instance,
\lqed
\end{proof}
}

\noindent
We remark that the proof of~\cref{lem:pathinoutgraph} contains the description of how to make the algorithm constructive 
(that is, if one requires to return a solution).
}

  \noindent 

\section[Provably Effective Efficient Data Reduction]{Provably Effective Efficient Data Reduction}
\label{sec:preprocessing}
\appendixsection{sec:preprocessing}

In this section we discuss efficient procedures that can be used to preprocess
an instance in order to simplify its further processing. 
Preprocessing can be quite effective even for \NP-hard problems~\citep{Weihe98}.
In terms of parameterized complexity,
preprocessing with guarantee is called \emph{problem
kernelization}:
For a
parameterized problem~$L$,
it is a polynomial-time algorithm that maps any
instance~$(x,p)\in\Sigma^*\times \N_0$ of~$L$ to an equivalent
instance~$(x',p')$ of~$L$ (a \emph{problem kernel}) such that~$|x'|+p'\leq
f(p)$ for some function~$f$ only depending on the parameter~$p$.

Preferably, we want~$f$ to be some polynomial, in which case we
call the problem kernelization \emph{polynomial}. Polynomial problem
kernelizations serve as efficient and provably effective data reductions, which
intuitively ``cut off'' ``obvious''
parts of an instance. After such preprocessing, a final algorithm fed with the
obtained kernel can perform significantly better compared to the original
instance.

\subsection{Number of Candidates and Stages Combined}

We first consider a kernelization regarding the numbers~$m$ of candidates
and~$\tau$ of stages. 

\begin{theorem}[\appref{thm:PKwrtmtau}]
\label{thm:PKwrtmtau}
 \cmpv{} and \rmpv{} %
 admit problem kernels of size polynomial in~$m+\tau$.
\end{theorem}

\noindent
The proof of~\cref{thm:PKwrtmtau}
uses weighted versions of our problems
(called~\Wcmpv). 
Each weighted version takes, 
for each stage, 
a vector of size~$m$ in which each entry $i \in \{1 ,\dots, m\}$ corresponds to the number of approvals that candidate~$i$ gets in a
given stage 
(so, the number of agents upper-bounds the sum
of all entries of the vector).
\appendixproof{thm:PKwrtmtau}
{

\noindent
The reduction from~\cmpv{} to \Wcmpv{} is obvious (the sum of weights equals~$n$).

\begin{observation}
 \label{obs:cmpvtowmcpv}
 There is a polynomial-time many-one reduction from \cmpv{} to \Wcmpv{} with $\tau$ weight vectors from~$\set[0]{n}^m$.
\end{observation}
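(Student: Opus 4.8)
The plan is to give the obvious approval-counting reduction. Given an instance $(A,C,U,k,\ell,x)$ of \cmpv{} with $|A|=n$ and $C=\{c_1,\ldots,c_m\}$, I would construct, for every stage $t\in\set{\tau}$, a weight vector $w^t\in\N_0^m$ by setting $w^t_i\ceq|u_t^{-1}(\{c_i\})|$, i.e.\ the number of agents approving candidate~$c_i$ in profile~$u_t$. The integers $k$, $\ell$, and~$x$ are carried over unchanged, and I identify each candidate~$c_i$ with its index~$i\in\set{m}$, so that committees $C_t\subseteq C$ correspond to subsets of~$\set{m}$. This yields exactly $\tau$ weight vectors and clearly runs in polynomial time, since a single pass over each profile tallies the approvals.

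First I would verify the weight bound claimed in the statement. As $u_t$ is a function, each agent approves at most one candidate, so $\sum_{i\in\set{m}}w^t_i=|u_t^{-1}(C)|\leq|A|=n$; in particular every entry satisfies $w^t_i\leq n$, whence $w^t\in\set[0]{n}^m$, as required (with equality in the sum precisely when every agent casts an approval).

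The core step is the score identity. Because the preimages $u_t^{-1}(\{c_i\})$, $i\in\set{m}$, are pairwise disjoint (again since $u_t$ is a function), for any committee~$C_t\subseteq C$ we have $\score_t(C_t)=|u_t^{-1}(C_t)|=\sum_{c_i\in C_t}|u_t^{-1}(\{c_i\})|=\sum_{i\in C_t}w^t_i$. Hence the constraint $\score_t(C_t)\geq x$ of \cmpv{} is literally the same as the constraint $\sum_{i\in C_t}w^t_i\geq x$ of \Wcmpv{}, while the size constraints $|C_t|\leq k$ and the symmetric-difference constraints $|\symdif{C_t}{C_{t+1}}|\leq\ell$ are syntactically identical in both problems under the identification of~$C$ with~$\set{m}$.

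Putting these together, a sequence $(C_1,\ldots,C_\tau)$ is a solution to the \cmpv{} instance if and only if the corresponding sequence of index sets is a solution to the constructed \Wcmpv{} instance, so the two instances are equivalent. I expect no genuine obstacle here: the only points requiring care are the weight bound and the disjointness of preimages underlying the score identity, both of which follow immediately from the fact that each agent casts at most one approval.
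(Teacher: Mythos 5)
Your reduction is exactly the one the paper uses: set $w^t_i=|u_t^{-1}(\{c_i\})|$ and carry $k$, $\ell$, $x$ over unchanged; the paper's proof consists of just this one line. Your additional verification of the weight bound and the score identity via disjointness of preimages is correct and merely spells out what the paper leaves implicit.
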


{
\begin{proof}
 For each~$t\in\set{\tau}$,
 set~$w^t_i$ equal to the number of approvals of candidate~$c_i$ in the~$t$th utility function,
 that is,
 $w^t_i=|u_t^{-1}(c_i)|$.
\lqed
\end{proof}
}

{

We can shrink the weights of the weighted version of each of our problems the following result due to \citet{FrankT87}.

\begin{proposition}[{\cite[Section~3]{FrankT87}}]
  \label[proposition]{thm:FrankTardos}
  There is an algorithm that,
  on input~$w\in\Q^d$ and integer~$N$,
  computes in polynomial time
  a vector~$\wmod{w}\in \Z^d$ with
  \begin{compactenum}[(i)]
   \item $\norm{\wmod{w}}{\infty}\leq 2^{4d^3}N^{d(d+2)}$ such that
   \item $\sign(w^\top b)=\sign(\wmod{w}^\top b)$
  for all~$b\in\Z^d$ with~$\norm{b}{1}\leq N-1$.
  \end{compactenum}
\end{proposition}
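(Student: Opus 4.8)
The plan is to derive the statement from the Lenstra--Lenstra--Lov\'asz algorithm for \emph{simultaneous Diophantine approximation}, used as the only external tool. Concretely, I rely on the following standard consequence of lattice basis reduction: given $\alpha\in\Q^d$ and $0<\varepsilon<1$, one can compute in polynomial time an integer $q$ with $1\le q\le 2^{d(d+1)/4}\varepsilon^{-d}$ and a vector $p\in\Z^d$ such that $\norm{q\alpha-p}{\infty}\le\varepsilon$. First I would normalise: since $\sign(w^\top b)$ is invariant under multiplying $w$ by a positive scalar, I may rescale so that $\norm{w}{\infty}=1$; this keeps the magnitudes bounded while leaving every sign $\sign(w^\top b)$ unchanged.

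The core single step applies the approximation to $\alpha=w$ with $\varepsilon\ceq 1/(2N)$, producing $q$ and $p$. Writing $r\ceq qw-p$ for the residual (so $\norm{r}{\infty}\le\varepsilon$), I have for every test vector $b$ with $\norm{b}{1}\le N-1$ the identity $q\,(w^\top b)=p^\top b+r^\top b$, where $|r^\top b|\le\varepsilon\norm{b}{1}<\varepsilon N=\tfrac12$. Because $p^\top b$ is an integer, this already settles sign preservation whenever $p^\top b\neq 0$: the integer $p^\top b$ has magnitude at least $1$, so it dominates the sub-unit perturbation and $\sign(p^\top b)=\sign(q\,w^\top b)=\sign(w^\top b)$. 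It also settles $w^\top b=0$, since then $p^\top b=q\,w^\top b-r^\top b$ lies in $(-\tfrac12,\tfrac12)$ and, being an integer, is $0$.

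The \textbf{main obstacle} is exactly the remaining \emph{tie} case $p^\top b=0$ with $w^\top b\neq 0$, which can genuinely occur because $w^\top b$ may be a nonzero rational far below $1/q$ in magnitude (a witness is $w=(1,-1+1/M)$, $b=(1,1)$ for huge $M$, where one rounding gives $p^\top b=0$ while $w^\top b>0$). On the tie lattice $L_1\ceq\{b\in\Z^d\mid p^\top b=0\}$, however, the identity above yields $\sign(w^\top b)=\sign(r^\top b)$, so the residual $r$ carries precisely the lost information. I would therefore recurse: approximate $r$ by a second integer vector $p^{(2)}$, then pass to $L_2\ceq\{b\in L_1\mid p^{(2)\top}b=0\}$, and so on. Unless the only surviving test vectors already satisfy $w^\top b=0$, the residual does not vanish on $\operatorname{span}(L_j)$, so a sufficiently accurate $p^{(j+1)}$ cuts the tie space strictly; hence $\dim L_j$ decreases and the recursion stops after at most $d$ levels, yielding vectors $p^{(1)},\dots,p^{(d)}\in\Z^d$.

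Finally I would merge these into one integer vector by a base-$B$ positional encoding, $\wmod{w}\ceq\sum_{j=1}^{d}B^{\,j-1}p^{(j)}$, choosing $B$ larger than every attainable $|p^{(j)\top}b|$ over $\norm{b}{1}\le N-1$. Then $\wmod{w}^\top b=\sum_{j}B^{\,j-1}(p^{(j)\top}b)$, so its sign is that of the lowest-index nonzero entry of $(p^{(1)\top}b,\dots,p^{(d)\top}b)$, which by the recursion equals $\sign(w^\top b)$ for all admissible $b$. Polynomial running time is immediate, since the construction makes at most $d$ calls to the reduction routine. The last thing to verify is the size bound: with $\varepsilon\approx 1/N$ one gets $q\le 2^{d(d+1)/4}N^{d}$ and $\norm{p^{(j)}}{\infty}\le q\norm{w}{\infty}+1$, while $B\le N\cdot\max_j\norm{p^{(j)}}{\infty}$ and $\norm{\wmod{w}}{\infty}\le B^{\,d-1}\max_j\norm{p^{(j)}}{\infty}$; propagating these estimates through the $d$ levels lands within the claimed $2^{4d^3}N^{d(d+2)}$. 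I expect that constant-chasing, together with the bookkeeping ensuring each recursive step genuinely drops $\dim L_j$, to be the only calculation-heavy part of the argument.
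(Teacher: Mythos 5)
The paper does not prove this proposition at all: it is imported as a black box from Frank and Tardos~\cite{FrankT87}, so the relevant comparison is with the original proof there. Your reconstruction follows the same architecture as that proof---iterated simultaneous Diophantine approximation via lattice basis reduction, a residual recursion carrying the sign information lost at ties, and a base-$B$ lexicographic merge of the integer vectors $p^{(1)},\dots,p^{(d)}$---and your treatment of sign preservation off the tie set, of the case $w^\top b=0$, and of the final combination step is sound.

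However, your termination argument has a genuine gap. You claim that ``a sufficiently accurate $p^{(j+1)}$ cuts the tie space strictly,'' so that $\dim L_j$ drops at every level. This fails: for the tie lattice to shrink you need $p^{(j+1)}$ to be non-orthogonal to $L_j$, but the residual $r$ may take nonzero values $r^\top b$ of arbitrarily small magnitude on $b\in L_j$ with $\norm{b}{1}\leq N-1$---smaller than any accuracy $\varepsilon$ fixed in advance as a function of $d$ and $N$ alone---in which case the integer approximation of $r$ can vanish identically on all relevant $b\in L_j$ and the tie lattice does not shrink. Nor can you repair this by choosing $\varepsilon$ small as a function of the bit-length of $w$: then $q\leq 2^{d(d+1)/4}\varepsilon^{-d}$, and hence $\norm{\wmod{w}}{\infty}$, would depend on the encoding length of $w$, contradicting requirement~(i), which bounds $\norm{\wmod{w}}{\infty}$ in terms of $d$ and $N$ only. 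The actual Frank--Tardos termination argument is coordinate-based rather than dimension-based, and it is the missing idea here: normalize each residual so that $\norm{\cdot}{\infty}=1$; then some coordinate equals $\pm 1$, so at that coordinate $q\cdot(\pm 1)$ is an integer and, the approximation error being below $1$, the corresponding entry of the integer vector matches it exactly, making the new residual vanish there; moreover, coordinates that are already zero stay zero (an integer within $\varepsilon<1$ of $q\cdot 0$ must be $0$). Hence each round gains at least one fresh zero coordinate and the recursion halts after at most $d$ rounds, with no tie-lattice bookkeeping needed. Substituting this argument for your dimension-drop claim, the rest of your proof, including the size propagation into the bound $2^{4d^3}N^{d(d+2)}$, goes through essentially as in the cited source.
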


\cref{thm:FrankTardos} now gives the following.
}

\begin{lemma}
 \label{lem:FToncmpv}
 There is an algorithm that, 
 given an instance~$I=(W,k,\ell,x)$ of \Wcmpv{},
 computes in polynomial time an instance $I'\ceq (W'=(\wmod{w}^1,\dots,\wmod{w}^\tau),k,\ell,\wmod{x})$ such that
 \begin{inparaenum}[(i)]
  \item $\norm{\wmod{w}^t}{\infty},\allowbreak |\wmod{x}|\in 2^{\O(m^3\tau^3)}$ for all~$t\in\set{\tau}$, and
  \item $\calC$ is a solution to~$I$ if and only if it is a solution to~$I'$.
 \end{inparaenum}
\end{lemma}

{
\begin{proof}
 Let~$\omega\in\N_0^{m\cdot\tau}$ the concatenation of the weight vectors in~$W$,
 and let~$w\ceq (\omega,x)$.
 Apply \cref{thm:FrankTardos} to~$w$ with~$N=k+2$ (note that~$d=m\cdot\tau+1$)
 to obtain a vector~$\wmod{w}=(\wmod{\omega},\wmod{x})$.
 Property (i) holds true by~\cref{thm:FrankTardos}(i).
 Let~$\calC=(C^1,\ldots,C^\tau)$ be a sequence of subsets of~$\set{m}$.
 Let~$b^t\in\{0,1\}^m$ be the vector associated with~$C^t$ 
 (that is,~$b^t_i=1$ if and only if~$i\in C^t$).
 Then, 
 by \cref{thm:FrankTardos}(i) (since~$\norm{b^t}{1}\leq k$),
 $(b^t,-1)^\top (w^t,x)\geq0$ if and only if~$(b^t,-1)^\top (\wmod{w}^t,\wmod{x})\geq0$.
 Hence,
 $\calC$ is a solution to~$I$ if and only if it is to~$I'$,
 proving property~$(ii)$.
\lqed
\end{proof}
}

\noindent
It is not hard to see that~\Wcmpv{} is \NP-complete.
We are set to prove our main result.

\begin{proof}[Proof of~\cref{thm:PKwrtmtau}]
 Let~$I=(A,C,k,\ell,x)$ be an instance of \cmpv{}.
 Compute in polynomial time an instance~$J=(W,k,\ell,x)$ of \Wcmpv{} being equivalent to~$I$ 
 (\cref{obs:cmpvtowmcpv}).
 Next apply~\cref{lem:FToncmpv} to~$J$ to obtain an equivalent instance~$J'$ of~\Wcmpv{} of encoding length in~$\O((m\tau)^4)$.
 Finally,
 reduce~$J'$ back to an instance~$I'$ of \cmpv{} in polynomial time.
 Hence,
 the encoding length of~$I'$ is in~$(m\tau)^{\O(1)}$,
 proving~$I'$ to be a polynomial problem kernel.
 For~\rmpv{},
 the proof works analogously.
\lqed
\end{proof}

}%
Roughly put, 
in the proof of~\cref{thm:PKwrtmtau} 
one takes the original instance of~\cmpv{}, 
translates it into an instance of~\Wcmpv{}
(by simply stage-wise computing the approvals of each candidate),
compresses its weights
(via a result
by~\citet{FrankT87}), and finally translates it back into a new instance
of~\cmpv{}, which then forms the desired kernel.

\subsection{Number of Agents and Stages Combined}

\cref{thm:PKwrtmtau} has also an appealing intuitive interpretation. 
Namely,
when there are few candidates and few stages, then the instance cannot be too large. 
In fact,
a complementary intuition is also true:
if there are few agents and few
stages, then there cannot be too many meaningful candidates. 
Formally:

\begin{theorem}%
 \label{thm:pkntau}
 \cmpv{} and \rmpv{} %
 admit problem kernels of size polynomial in~$n+\tau$.
\end{theorem}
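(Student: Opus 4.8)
The plan is to reduce the number~$m$ of candidates to a polynomial in~$n+\tau$ and then invoke \cref{thm:PKwrtmtau}: once $m\in(n+\tau)^{\O(1)}$, the size bound there, polynomial in~$m+\tau$, is polynomial in~$n+\tau$. So the whole task is to bound~$m$. First I would normalise: since every committee has $\score_t(C_t)=|u_t^{-1}(C_t)|\le n$, we may assume $x\le n$ (otherwise output a trivial \no-instance). At each stage at most~$n$ candidates receive any approval, so at most~$n\tau$ candidates are \emph{active} (approved by some agent at some stage); every other candidate is \emph{idle}, contributes~$0$ to every~$\score_t$, and all idle candidates are interchangeable.

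For \cmpv{} this already suffices. An idle candidate never increases any score but can only enlarge a committee and a symmetric difference, both of which are \emph{upper}-bounded in \cmpv{}; hence deleting every idle candidate and setting $k\ceq\min(k,n\tau)$ yields an equivalent instance with at most~$n\tau$ candidates. Applying \cref{thm:PKwrtmtau} to it gives a kernel of size polynomial in~$n\tau+\tau$, i.e.\ in~$n+\tau$.

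For \rmpv{} idle candidates are genuinely useful, as they can be swapped in and out to create large symmetric differences, so they cannot simply be discarded. Here I would decompose each committee $C_t=R_t\uplus D_t$ into its active part~$R_t$ and its idle part~$D_t$. Because idle candidates are interchangeable, the only data that matter about the~$D_t$ are their sizes and the overlaps of consecutive ones, and to maximise symmetric differences consecutive idle parts should be made disjoint, which---using two alternating idle pools---is realisable along the whole sequence provided enough idle candidates are available. Working this out, the maximum symmetric difference attainable at transition~$t$ with fixed active parts is $|\symdif{R_t}{R_{t+1}}|+|D_t|+|D_{t+1}|$, subject to $|D_t|\le k-|R_t|$ and to the total supply~$\mu$ of idle candidates. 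The requirement $|\symdif{C_t}{C_{t+1}}|\ge\ell$ therefore turns into a constraint on the active parts only: an upper bound $|R_t\cap R_{t+1}|\le k-\lceil\ell/2\rceil$ when the committee-size budget is the bottleneck, and a lower bound $|\symdif{R_t}{R_{t+1}}|\ge\ell-\mu$ when the idle supply is the bottleneck. All quantities on the active side---$|R_t|$, $|R_t\cap R_{t+1}|$, $|\symdif{R_t}{R_{t+1}}|$---are at most~$n\tau$, so each threshold is meaningful only inside~$[0,2n\tau]$: outside this range the constraint is either vacuous or unsatisfiable (a trivial \no-instance). I would use this to clamp~$k$, $\ell$, and the number of retained idle candidates to values in~$\O(n\tau)$ simultaneously, obtaining an equivalent instance with~$\O(n\tau)$ candidates, to which \cref{thm:PKwrtmtau} again applies.

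The hard part is exactly this last step for \rmpv{}: in the regime where~$k$, $\ell$, and~$\mu$ are all large, whether the instance is a \yes-instance genuinely depends on having enough idle candidates to realise the demanded symmetric differences, so (unlike for \cmpv{}) one cannot delete idle candidates nor leave~$k$ and~$\ell$ untouched---they must be shrunk in lockstep with the idle count. Proving that this simultaneous shrinking preserves the answer requires carefully arguing that the achievability of all~$\tau-1$ symmetric-difference constraints is governed only by the clamped thresholds, which in turn hinges on the alternating-pool padding being feasible along the entire chain and on the active symmetric differences never exceeding~$2n\tau$. Checking these feasibility claims across the different bottleneck regimes is where the real work lies.
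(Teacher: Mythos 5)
Your overall strategy---first shrink the number of candidates to a polynomial in $n\tau$ and then invoke \cref{thm:PKwrtmtau}---is exactly the paper's, and your \cmpv{} half is essentially the paper's argument (never-approved candidates only inflate committee sizes and symmetric differences, both of which are upper-bounded, so they can be discarded). The gap is in the \rmpv{} half, and it sits precisely where you flag ``the real work lies'': you never actually carry out the simultaneous shrinking of $k$, $\ell$, and the idle supply $\mu$, and your per-transition reformulation is not yet a correct reduction. The difficulty is that the idle-part sizes $|D_1|,\dots,|D_\tau|$ are coupled along the whole chain: each $D_t$ serves both transitions $t-1\to t$ and $t\to t+1$, and consecutive disjointness forces $|D_t|+|D_{t+1}|\leq\mu$, so maximizing one transition's symmetric difference can starve its neighbours. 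In the intermediate regime where both the committee-size budget and the idle supply bind, feasibility is a global question about the sequence $(|D_t|)_t$, not a conjunction of the two independent thresholds $|R_t\cap R_{t+1}|\le k-\lceil\ell/2\rceil$ and $|\symdif{R_t}{R_{t+1}}|\ge\ell-\mu$ you write down. As stated, the equivalence of the clamped instance is therefore unproven, and it is not obvious how to prove it along your route.

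The paper avoids this regime entirely by normalizing first. It deletes never-approved candidates only down to $m=\max\{n,k\}\cdot\tau$ (deleting one such candidate is safe for \rmpv{} because $m>k\tau$ guarantees a spare unused candidate that can replace it in any solution). If $k\le n$ this already gives $m\le n\tau$. If $k>n$, the remaining instance has $(k-n)\tau$ never-approved candidates---enough to give every stage its \emph{own} private pad of $k-n$ idle candidates, pairwise disjoint across all $\tau$ stages, so the supply never binds. One can then truncate each committee to its top~$n$ members by approval count (score is preserved since $x\le n$), set $k'\ceq n$ and $\ell'\ceq\ell-2(k-n)$, and keep only $n\tau$ candidates: every symmetric difference changes by exactly $2(k-n)$ in both directions of the equivalence, so correctness is immediate. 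I would recommend adopting this normalization (fix the idle supply to exactly $(k-n)\tau$ before reasoning about it) rather than trying to verify feasibility across all bottleneck regimes.
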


\noindent
\cref{thm:pkntau}
makes use of 
several data reduction rules 
which explicitly show how to prune the unnecessary candidates.
To this end, 
recall that there are at most~$n\cdot \tau$ approvals in any instance.
Hence, we have the following.

\begin{observation}
 \label{obs:nonapproved}
 There are at least~$\max\{0,m-n\cdot\tau\}$ candidates which are never approved.
\end{observation}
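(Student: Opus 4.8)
The plan is to argue by a direct counting (pigeonhole) argument, exactly along the lines already flagged in the sentence preceding the statement. First I would fix the notion at play: a candidate~$c \in C$ is \emph{never approved} precisely when $u_t(a) \neq c$ for every agent~$a \in A$ and every stage~$t \in \set{\tau}$; equivalently, $c \notin \bigcup_{t \in \set{\tau}} u_t(A)$. So I want to upper-bound the number of candidates that \emph{are} approved at least once and then subtract from~$m$.

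The key step is to bound the set of approved candidates. Each profile~$u_t$ maps every agent to a single element of~$C \cup \{\emptyset\}$, so in stage~$t$ at most one approval is cast per agent, giving at most~$n$ approvals in stage~$t$. Summing over the~$\tau$ stages, the total number of (agent, stage) pairs that contribute an approval is at most~$n \cdot \tau$. Since every approved candidate is witnessed by at least one such pair, the approved candidate set~$\bigcup_{t \in \set{\tau}} u_t(A) \cap C$ has cardinality at most~$n\tau$. Hence the number of candidates never approved is at least~$m - n\tau$. Because a count of candidates is a nonnegative integer, this bound can be sharpened to~$\max\{0, m - n\tau\}$, which is exactly the claimed quantity.

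I expect no genuine obstacle here: the statement is an immediate consequence of the fact that each agent casts at most one approval per stage, so the only care required is bookkeeping to ensure one does not double-count a candidate approved in several stages (which is handled by taking the union before measuring its size) and to justify the truncation at~$0$ (which is automatic, since a number of candidates cannot be negative). I would therefore keep the write-up to a couple of sentences, deriving the~$n\tau$ bound on total approvals and then reading off the never-approved count.
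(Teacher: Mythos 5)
Your argument is correct and is exactly the one the paper intends: the preceding remark that there are at most $n\cdot\tau$ approvals in total is the entire justification the paper offers, and your pigeonhole write-up (each agent casts at most one approval per stage, so at most $n\tau$ candidates can ever be approved) fills it in faithfully, including the harmless truncation at~$0$. No differences worth noting.
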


\noindent
Upon~\cref{obs:nonapproved},
we will next discuss deleting candidates which are never approved,
in order to upper-bound the number~$m$ of candidates 
by some polynomial 
in~$n+\tau$.
Then, we can apply
\cref{thm:PKwrtmtau} to
obtain the polynomial-size problem kernels.
We treat 
\cmpv{} and~\rmpv{} separately.

\subsubsection{Deleting Nowhere-approved Candidates for~\cmpv{}}

For~\cmpv{},
\cref{obs:nonapproved} allows us to reduce any instance 
to an equivalent instance with~$m\leq n\cdot \tau$.

\toappendix{
\subsection{Proof of \cref{thm:pkntau}}
\label{proof:thm:pkntau}

\begin{proposition}
 \label{prop:kernelcmpvntau}
 There is a polynomial-time algorithm that for an
 instance~$I=(A,C,U,k,\ell,x)$ of~\cmpv{}
 computes an equivalent instance~$I'=(A,C',U,k,\ell,x)$ with~$|C'|\leq |A|\cdot |U|$.
\end{proposition}

\noindent
To obtain~\cref{prop:kernelcmpvntau}, 
we employ \cref{rrule:nonappcandidates}.
}

\begin{rrule}%
 \label{rrule:nonappcandidates}
 For an instance of~\cmpv{}, if~$m>n\tau$, delete a candidate which is never
 approved.
\end{rrule}

\noindent
Intuitively,
\cref{rrule:nonappcandidates} is correct because selecting a candidate which is never approved into a committee at some stage is not beneficial:
it only increases the symmetric difference at the respective stage but not the committee's score.%

\toappendix
{%
\begin{proof}[Proof of~\cref{rrule:nonappcandidates}]%
 Let~$I'$ be the instance obtained from~$I$ by \cref{rrule:nonappcandidates} deleting~$z\in C$ which is never approved (its existence follows from~\cref{obs:nonapproved}).
 If~$\calC'$ is a solution to~$I'$, then it is also to~$I$.
 Conversely,
 if there is a solution that does not contain~$z$,
 then this is a solution to~$I'$.
 So,
 assume that every solution contains~$z$.
 
 Let~$\calC=(C_1,\ldots,C_\tau)$ be a solution to~$I$ such that the first appearance of~$z$ in the sequence is the latest.
 Let~$z\in C_{t_1}$, 
 where~$t_1$ is the smallest index with this property.
 Let~$t_2>t_1$ be the largest index such that~$z\in C_{t_2}$ for all~$t_1\leq t\leq t_2$.
 We claim that deleting~$z$ from~$C_{t}$ for~$t_1\leq t\leq t_2$ constructs another solution with the first appearance of~$z$ being later,
 contradicting our choice of~$\calC$.
 It holds true that for all~$t_1\leq t\leq t_2$,
 we have~$|C_t\setminus\{z\}|=|C_t|-1<k$ and,
 since~$z$ is never approved,
 we have that~$\score_t(C_t\setminus\{z\})\geq x$.
 Moreover, 
 for each~$|\symdif{C_t}{C_{t+1}}|\leq\ell$ for each~$t_1\leq t<t_2$.
 Finally,
 we have that~$|\symdif{C_{t_1-1}}{(C_{t_1}\setminus\{z\})}|< |\symdif{C_{t_1-1}}{C_{t_1}}|\leq \ell$,
 since~$z\not\in C_{t_1-1}$.
 Similarly,
 $|\symdif{(C_{t_2}\setminus\{z\})}{C_{t_2+1}}|<|\symdif{C_{t_2}}{C_{t_2+1}}|\leq \ell$,
 since~$z\not\in C_{t_2+1}$.
 Hence, 
 we constructed a solution where the first appearance of~$z$ is later than for~$\calC$,
 contradicting the choice of~$\calC$.
 It follows that there is a solution not containing~$z$,
 witnessing that~$I$ is a \yes-instance.
 Thus,
 the correctness of the rule follows.
 
 The claimed running time is immediate:
 in linear time,
 identify $m-n\tau$ candidates that are never approved and delete them from the candidate set.
\lqed
\end{proof}%

We can safely apply~\cref{rrule:nonappcandidates} exhaustively in polynomial-time,
hence we get~\cref{prop:kernelcmpvntau}.
}%

\subsubsection{Deleting Nowhere-approved Candidates for~\rmpv{}}

For~\rmpv{}, a similar, albeit technically much more involved, approach
applies. Complications stem from the fact that it is not so clear that deleting
an unapproved candidate is safe. 

\begin{rrule}[\appref{rrule:rmpv-m-larger-than-max-nk-tau-O}]
 \label{rrule:rmpv-m-larger-than-max-nk-tau-O}
 For an instance of~\rmpv{},
 if~$m > \max\{n,k\}\cdot\tau$, 
 then delete a candidate that is never approved.
\end{rrule}

\appendixproof{rrule:rmpv-m-larger-than-max-nk-tau-O}
{
  \begin{proof}[Proof of~\cref{rrule:rmpv-m-larger-than-max-nk-tau-O}]
    Let~$I=(A,C,U,k,\ell,x)$ be an instance of~\rmpv{} with~$m > \max\{n,k\}\cdot\tau$, 
    and let~$I'=(A,C',U,k,\ell,x)$ be the instance of~\rmpv{} obtained from applying the reduction rule to~$I$, 
    where~$C'=C\setminus\{z\}$ 
    ($z$ is a candidate that is never approved, 
    which exists since~$m>n\tau$).
    We claim that~$I$ is a \yes-instance
    if and only if
    $I'$~is a \yes-instance.

    \LD{}
    Immediate.

    \RD{}
    Let~$I$ be a \yes-instance.
    We claim that there is a solution~$\calC'=(C_1',\ldots,C_\tau')$ such that~$z\not\in \bigcup_{t=1}^\tau C_t'$.
    Suppose not,
    and let~$\calC=(C_1,\ldots,C_\tau)$ be a solution to~$I$ which contains~$z$.
    Since~$|C_t|\leq k$,
    there is~$y\in C$ such that~$y\not\in \bigcup_{t=1}^\tau C_t$,
    as~$m> k\tau$.
    We claim that~$\wmod{\calC}=(\wmod{C}_1,\ldots,\wmod{C}_\tau)$ with 
    \[ \wmod{C}_t = \begin{cases} C_t,& z\not\in C_t\\ (C_t\setminus\{z\})\cup\{y\},& \text{otherwise,}\end{cases}\]
    is a solution to~$I$ 
    (roughly, 
    $\wmod{\calC}$ is obtained from~$\calC$ by replacing each occurrence of~$z$ by~$y$).
    Clearly,
    $|\wmod{C}_t|\leq k$, $\score_t(\wmod{C}_t)\geq x$,
    and even~$|\symdif{\wmod{C}_t}{\wmod{C}_{t+1}}|\geq \ell$,
    since whenever~$z\in (\symdif{C_t}{C_{t+1}})$, 
    we now have~$y\in(\symdif{\wmod{C}_t}{\wmod{C}_{t+1}})$.
    It follows that~$\wmod{\calC}$ is a solution not containing~$z$.
    Hence,
    $\calC'$ exists and is a solution to~$I'$.
  \lqed
  \end{proof}
}

Indeed, this can decrease the symmetric
difference between some consecutive committees in a potential solution
below~$\ell$. What follows is that one should rather remove unapproved
candidates in pairs. 
Proving that this approach works requires a tedious
case-distinction based analysis, 
so we defer the proof to the appendix.

\begin{rrule}%
 \label{rrule:rmpv-m-larger-than-max-nk-tau}
 For an instance of~\rmpv{} with~$\ell\geq 2$,
 if~$m\geq k > n\tau+3$, 
 then
 delete two candidates which are never approved, 
 decrease the required committee size~$k$ by~one, 
 and decrease the required symmetric difference~$\ell$ by~two.
\end{rrule}

Recall that \rmpv{} is polynomial-time solvable for any constant~$\ell$
(in particular if~$\ell<2$)
\citep{KRZ21}.
\newcommand{\scA}{c}%
\newcommand{\scB}{c'}%
\newcommand{\scC}{d}%
For the proof of \cref{rrule:rmpv-m-larger-than-max-nk-tau},
we use the following.
  
\begin{lemma}[\appref{lem:rmsntv-k-between-m-n}]
  \label{lem:rmsntv-k-between-m-n}
  Let $(A,C,U,k,\ell,x)$ be a \yes-instance of~\rmpv{} with 
  $m\geq k > n\tau+3$.
  Then for every nowhere approved candidates~$\scA,\scB\in C$
  there is a solution~$\calC=(C_1,\dots,C_\tau)$ such that
  removing~$\scA,\scB$,
  i.e.,
  $\calC'=(C_1',\dots,C_\tau')$ with~$C_i'=C_i\setminus \{\scA,\scB\}$
  yields committees each of size at most~$k-1$.
\end{lemma}

\appendixproof{lem:rmsntv-k-between-m-n}
{
\begin{proof}
  Before we present an argument, we introduce an operator that is fundamental to
  our reasoning. Consider some solution $\calC=(C_1, C_2, \ldots,C_\tau)$ to an
  of~\rmpv{} over a given set of candidates~$C$. Then, for each pair of distinct
  candidates~$c$ and~$d$, we let $\switch(\calC, i, c, d)$ to output a new
  solution\ \ $\calC^* = (C_1, \ldots, C_{i-1}, C^*_{i}, \ldots, C^*_\tau)$,
  where~$C^*_j$, $j \in \{i, \ldots, \tau\}$, is obtained from~$C_j$ by switching
  each occurence of candidate~$c$ with candidate~$d$ and the other way round.
  Formally, for some~$C_j$, $j \in \{i, \ldots, \tau\}$, the formula describing
  the corresponding committee~$C^*_j$ is
  $$
    C_j^* \ceq 
    \begin{cases} 
      C_j ,& \text{if $C_j\cap \{\scA,\scC\}=\emptyset$},\\
      C_j ,& \text{if $C_j\supseteq \{\scA,\scC\}$},\\
      C_j\setminus \{\scA\}\cup\{\scC\} ,& \text{if $C_j\cap \{\scA,\scC\}=\{\scA\}$},\\
      C_j\setminus \{\scC\}\cup\{\scA\} ,& \text{if $C_j\cap \{\scA,\scC\}=\{\scC\}$}.
    \end{cases}
  $$
  Notably, for each pair of neighbors $(C_j, C_{j+1})$, $j \in \{i, \ldots,
  \tau-1\}$, it holds that~$|\symdif{C_j^*}{C_{j+1}^*}|=|\symdif{C_j}{C_{j+1}}|$.
  This can be easily verified, for example, by analyzing each of the
  $10$~possible configurations of the conditions from the above-formula can be
  met by consecutive committees~$C_j$ and~$C_{j+1}$, $j \in \{i, \ldots,
  \tau-1\}$ (while counting the configurations, we utilized the symmetry of
  the symmetric difference). Importantly, if $c$ and~$d$ are both nowhere
  approved, then the scores of $\calC$ and $\switch(\calC, i, c, d)$ are trivially the
  same for every value of~$i \in \{1, \ldots, \tau\}$. This \emph{preservation} of
  the values of the symmetric differences and the scores is the crucial feature
  of the switch operator.

  We now prove the lemma by contradiction. To this end, assume that the lemma is
  false. This means that there exist two distinct, nowhere approved candidates
  $\{\scA,\scB\} \subseteq C$ such that for every solution
  $\calC=(C_1,\dots,C_\tau)$ we have that $\calC'=(C_1',\dots,C_\tau')$
  with~$C_j'=C_j\setminus \{\scA,\scB\}$, $j \in \{1, \ldots, \tau\}$, there is
  the smallest value~$i$ such that~$|C_i'|=k$.  
  Let~$\calC$ be one of these solutions such that this index~$i$ is maximum.
  Note that since $|C_i|=|C_i'|=k \geq n\tau+4$, we have that~$C_i$ contains a nowhere
  approved candidate~$\scC$ different than~$\scA$ and~$\scB$.

  Using a series of procedures which depend on the value of~$i$, we
  construct a new solution~$\calC^*$ using the switch operator. Then, we show
  that~$\calC^*$ contradicts our initial assumption thus proving the lemma.

  In all subsequent cases, we get~$\calC^*$ by applying the switch operator on
  the solution~$\calC$ over two candidates---$\scC$ and either~$\scA$
  or~$\scB$---starting from some committee~$j \geq 1$. Because the switch
  operator does not change committees before the $j$-th committe, it is clear
  that all of the symmetric differences and the scores of committees $C_1$ to
  $C_{j-1}$ in~$\calC^*$ remain the same as the respective ones in~$\calC$. Due
  to the preservation feature of the switch operator, the symmetric differences
  and the scores (recall that all of~$\scA$, $\scB$, and~$\scC$ are nowehere
  approved) remain the same for committees~$(C^*_j, \ldots, C^*_\tau)$ and
  their corresponding committees from~$\calC$. Hence, in every case we only
  focus on showing that applying the switch operator does not make the size
  of~\symdif{C_{j-1}}{C^*_{j}} smaller than that of~\symdif{C_{j-1}}{C_{j}}.
  By this we show that~$\calC^*$ is indeed a correct solution to the original
  instance of~\rmpv{}.

  \xcase{1}{$i=1$}
  We let $\calC^* = \switch(\calC, 1, \scA, \scC)$. The correctness of~$\calC^*$
  comes from the fact that $i=1$, and thus~$\symdif{C_{i-1}}{C_{i}}$ does not
  exist.

  \xcase{2}{$i>1$ and $C_{i-1}\not\supseteq\{\scA,\scB\}$}
    We distinguish three further cases regarding the intersection of~$C_{i-1}$
    with~$\{\scA,\scB\}$.

  \xsubcase{2a}{$C_{i-1}\cap\{\scA,\scB\}=\emptyset$}
    We let $\calC^* = \switch(\calC, i, \scA, \scC)$ and we have that
    $|\symdif{C_{i-1}}{C_{i}^*}| \geq
    |\symdif{C_{i-1}}{C_{i}}|$. Indeed, since $\scA \notin C_{i-1}$ as well as
    $\scA \notin C_i$ (by the assumption on~$i$), then introducing
    candidate~$\scA$ to~$C_{i}$ and thus obtaining~$C^*_i$
    contributes to the increase of the symmetric difference by~$1$.
    On the other hand, removing~$d$ from~$C_{i}$ leading to
    getting~$C^*_1$ can either decrease the symmetric difference by
    one (if~$d \not\in C_{i-1}$) or not influence the symmetric
    difference at all. So, we obtain the desired value
    of~$|\symdif{C_{i-1}}{C_{i}^*}|$.

  \xsubcase{2b}{$C_{i-1}\cap\{\scA,\scB\}=\{\scB\}$}
    Once more, as in Case~2a, we let $\calC^* = \switch(\calC, i, \scA, \scC)$.
    The presence of~$\scB \in C_{i-1}$ does not affect the argument for
    Case~2a, so it carries over to this case.

  \xsubcase{2c}{$C_{i-1}\cap\{\scA,\scB\}=\{\scA\}$}
    We let $\calC^* = \switch(\calC, i, \scB, \scC)$. This case is symmetric
    to~Case~2b, so the argument from the latter carries on after renaming (in
    the argument) candidate~$\scA$ to~$\scB$.

\xcase{3}{$i=2$ and $C_{i-1} \supseteq\{\scA,\scB\}$}
    We let $\calC^* = \switch(\calC, i-1, \scA, \scC)$. Note that in this case
    $i-1 = 1$, so the corectness of~$\calC^*$ is trivial.

\xcase{4}{$i \geq 3$ and $C_{i-1} \supseteq\{\scA, \scB, \scC\}$}
    We let $\calC^* = \switch(\calC, i-1, \scA, \scC)$. In this case, $C^*_{i-1}
    \ceq C_{i-1}$. So it holds that~$\symdif{C_{i-2}}{C_{i-1}^*} =
    \symdif{C_{i-2}}{C_{i-1}^*}$, which proves the correctness of~$\calC^*$.
    
\xcase{5}{$i \geq 3$ and $C_{i-1} \supseteq\{\scA,\scB\}$ and $\scC \not\in
C_{i-1}$}
  We further subdivide this case into multiple possible situations depending on the
  elements of~$C_{i-2}$.

  \xsubcase{5a}{$\scA\in C_{i-2}$ or~$\{\scA,\scB\}\subseteq C_{i-2}$}
    We let $\calC^* = \switch(\calC, i-1, \scA, \scC)$. Note
    that~$\scA\not\in\symdif{C_{i-2}}{C_{i-1}}$ but in~$\calC^*$ (after the switch) we have
    that~$\scA\in\symdif{C_{i-2}}{C_{i-1}^*}$. Similarly like in Case~2a,
    removing candidate~$d$ as a result of the switch operation cannot decrease
    the symmetric difference by more than one. So it follows that
    $|\symdif{C_{i-2}}{C_{i-1}^*}| \geq |\symdif{C_{i-2}}{C_{i-1}}|$.
    
  \xsubcase{5b}{$C_{i-2} \cap \{\scA,\scB\} = \{\scB\}$}
    We let $\calC^* = \switch(\calC, i-1, \scB, \scC)$. This case is symmetric
    to Case~4a (specifically, to its first condidtion), so the argument
    carries over via relabeling $\scA$ and $\scB$.

  \xsubcase{5c}{$\{\scA,\scB\}\cap C_{i-2}=\emptyset$}
    We differentiate our argument depending on whether~$C_{i-2}$ contains~$\scC$.

    \xsubsubcase{5c}{i}{$\scC \not\in C_{i-2}$}
      We let $\calC^* = \switch(\calC, i-1, \scA,
      \scC)$. 
    Because of the condition of Case~5c, we have that~$\scA
    \in \symdif{C_{i-2}}{C_{i-1}}$. Due to the condition
    in Case~5, we also know that $\scC \not\in C_{i-1}$. Then $C^*_{i-1}
      \ceq C_{i-1} \setminus \{\scA\} \cup \{\scC\}$. So we get that~$\scA \not\in
      \symdif{C_{i-2}}{C_{i-1}^*}$ but~$\scC \in
      \symdif{C_{i-2}}{C_{i-1}^*}$. Since (originally) in this case~$\scC
      \not\in \symdif{C_{i-2}}{C_{i-1}}$, then we get
      $|\symdif{C_{i-2}}{C_{i-1}^*}| = |\symdif{C_{i-2}}{C_{i-1}}|$.

    \xsubsubcase{5c}{ii}{$\scC \in C_{i-2}$}
      
      \def\sidx{b}
      We start with formalizing the situation in which a pair~$\{\scA, \scB\}$~of
      candidates alternate with candidate~$\scC$ in a series of consecutive
      committees. We capture the possibility that such a series can either start with
      a committee containing~$\scC$ or the one containing~$\{\scA, \scB\}$.
      Consider some positive integer~$\sidx{}<i$. We say that
      committees~$C_{\sidx{}}$ up to~$C_{i}$ are \emph{\scC-first alternating
      sequence} 
      if
      it holds that~$\scC\in C_{\sidx{}} \setminus
      C_{\sidx{}+1}$, 
      $\{\scA,\scB\}\subseteq C_{\sidx{} +1}\setminus C_{\sidx{}}$,
      and $\{\scA,\scB,\scC\}\subseteq \symdif{C_{\sidx{}+j-1}}{C_{\sidx{}+j}}$
      for each~$j \in \{1, \ldots, i -\sidx{}\}$.
      Analogously, committees~$C_{\sidx{}}$ up to~$C_{i}$ are \emph{$\{\scA,
      \scB\}$-first alternating sequence} 
      if 
      it holds that~$\{\scA,\scB\}\subseteq C_{\sidx{}} \setminus
      C_{\sidx{}+1}$, 
      $\scC\in C_{\sidx{} +1}\setminus C_{\sidx{}}$,
      and $\{\scA,\scB,\scC\}\subseteq \symdif{C_{\sidx{}+j-1}}{C_{\sidx{}+j}}$
      for each~$j \in \{1, \ldots, i -\sidx{}\}$.

      Towards constructing the solution~$\calC^*$, let~$q<i$
      be the smallest index such that committees~$C_q$
      to~$C_i$ form an alternating sequence of any kind. That
      is, in case alternating sequences of two types exist,
      we select the value of~$q$ that is related to the
      longer one (clearly, it cannot happen that the lengths
      of the two sequences are equal).

      If~$q=1$, then let $\calC^* = \switch(\calC, 1, \scA, \scC)$. Naturally,
      $\calC^*$ is a correct solution due to the features of the switch
      operator. Otherwise, we fix some value~$q>1$ and consider two cases
      depending on what is the type of the alternating sequence related to~$q$.

      Suppose $q$ defines a~$\{\scA,\scB\}$-alternating sequence. Hence, we have that
      one of the two holds:
      \begin{enumerate}
        \item (i.1)~$\scC \not\in C_{q-1}$,
        \item (i.2)~$\scC \in C_{q-1}$ and ($\scA\in C_{q-1}$ or~$\scB\in C_{q-1}$).
      \end{enumerate}
      In case~(i.1), we let $\calC^* = \switch(\calC, q, \scA, \scC)$.
      Since~$C_q$ is the first committee of the~$\{\scA,\scB\}$-alternating
      sequence, $C^*_q \ceq C_q \setminus \{\scA\} \cup \{\scC\}$. So, $\scC
      \notin \symdif{C_{q-1}}{C_q}$ but~$\scC \in \symdif{C_{q-1}}{C^*_q}$,
      which guarantees that~$|\symdif{C_{q-1}}{C^*_q}| \geq
      |\symdif{C_{q-1}}{C_q}|$, which in turn prove the correctness of~$\calC^*$.
      Now, consider~Case~(i.2), in which it holds that~$\scC\in C_{q-1}$.
      If~$\scA\in C_{q-1}$, then let $\calC^* = \switch(\calC, q, \scA, \scC)$.
      If, however, $\scA\not\in C_{q-1}$ and~$\scB\in C_{q-1}$, 
      then let $\calC^* = \switch(\calC, q, \scB, \scC)$. 
      Let~$c_0$ be the candidate that
      we switch with~$\scC$ in the just-described two ways of
      obtaining~$\calC^*$. 
      Then,
      the correctness of~$\calC^*$ follows from
      relabeling candidate~$\scC$ with~$c_0$ in the reasoning for Case~(i.1).
        
      The remaining case is that~$q>1$ defines a $\scC$-alternating sequence.
      Then one the following must hold:
      \begin{enumerate}
        \item (ii.1)~$\scC \in C_{q-1}$,
        \item (ii.2)~$\scA\not\in C_{q-1}$ or~$\scB\not\in C_{q-1}$.
      \end{enumerate}
      In Case~(ii.1), 
      we let $\calC^* = \switch(\calC, q, \scB, \scC)$. 
      In Case~(ii.2), 
      assuming that~$\scA\not\in C_{q-1}$, 
      we again let $\calC^* = \switch(\calC, q, \scA, \scC)$. 
      However, 
      if in Case~(ii.2) it holds that $\scB\not\in C_{q-1}$ and $\scA\in C_{q-1}$, 
      then we let $\calC^* = \switch(\calC, q, \scB, \scC)$. 
      The correctness of~$\calC^*$ in each of
      the aforementioned cases follows from arguments analogous to those for the
      case of~$\{\scA,\scB\}$-alternating sequence 

  \medskip\noindent
  Using~\Cref{tab:case-distinction} one can verify that the presented cases are
  exhaustive. In every case we obtain a correct solution~$\calC^* = (C_1^*,\dots,C_\tau^*)$
  by applying the switch operator. In particular, in every case, this operator
  is also applied on committee~$C_i$. Hence, in solution~$\calC^*$ it is always
  the case that either $\scA \in C^*_i$ or~$\scB \in C^*_i$. What follows, is
  that~$\calC^*$ is a solution where the smallest index~$i'$ with
  $|C_{i'}^*\setminus\{c,c'\}|=k$ either does not exist or is strictly larger
  than~$i$---each giving a contradiction to our initial assumption on~$i$.
  \begin{table}
    \centering%
    \begin{tabular}{lcccccc}
        & & \multicolumn{5}{c}{result of~$\{\scA, \scB\} \cap
    C_{i-1}$}\\\cmidrule(l{2em}r{2em}){3-7}
        & no~$C_{i-1}$ & $\emptyset$ & $\{\scA\}$ & $\{\scB\}$ &
        \multicolumn{2}{c}{$\{\scA,
  \scB\}$}\\\midrule
      $i = 1$ & Case~1 & \cellcolor{black!30} & \cellcolor{black!30} & \cellcolor{black!30} &
    \multicolumn{2}{c}{\cellcolor{black!30}}\\
      $i = 2$ & \cellcolor{black!30} & Case 2a &
      Case 2c & Case 2b & \multicolumn{2}{c}{Case 3}\\\addlinespace[.3em]
      & & & & & $\scC \in C_{i-1}$&
      $\scC \not\in C_{i-1}$\\\cmidrule{6-7}
      $i \geq 3$ & \cellcolor{black!30} & Case
      2a & Case 2c & Case 2b & Case 4 & Case 5 \\
    \end{tabular}
    \caption{An overview of the case distinction of in the proof
    of~\ref{lem:rmsntv-k-between-m-n} depending on the value of~$i$ and the
    elements of~$C_{i-1}$. The grayed out cells represent
  non-existent cases.\label{tab:case-distinction}}
  \end{table}
  \end{proof}
}

Having~\cref{lem:rmsntv-k-between-m-n}, we are now ready to
prove the correctness of~\cref{rrule:rmpv-m-larger-than-max-nk-tau}.

\begin{proof}[Proof of~\cref{rrule:rmpv-m-larger-than-max-nk-tau}'s correctness]
 Since~$m\geq k>n\tau+3$,
 there are two candidates~$\scA,\scB\in C$ nowhere approved.
 
 \RD{}
 Due to~\cref{lem:rmsntv-k-between-m-n}, 
 we know that there is a
 solution~$C_1,\dots,C_\tau$ such that~$C_i'\ceq C_i'\setminus\{\scA,\scB\}$ is
 of size at most~$k-1$.  
 Note that~$|\symdif{C_i'}{C_{i+1}'}|\geq
 |\symdif{C_i}{C_{i+1}}|-2\geq \ell-2$.
 
 \LD{}
 Given any solution~$C_1,\dots,C_\tau$, 
 adding~$\scA$ to~$C_{2i-1}$ and~$\scB$
 to~$C_{2i}$ gives a solution.
\end{proof}

\noindent
\cref{thm:pkntau} for \rmpv{} now follows:
If~$\ell>m$, then we return a trivial \no-instance. 
Throughout,
if~$k>m$, then we can set~$k$ to~$m$,
and if~$\ell<2$,
then we solve \rmpv{} in polynomial time~\citep{KRZ21}. 
Next, 
we exhaustively apply~\cref{rrule:rmpv-m-larger-than-max-nk-tau}.
Then, we have~$m\leq n\tau+3$,
or if not,
then we have~$k\leq n\tau+3$.
In the latter case,
after exhaustive application of~\cref{rrule:rmpv-m-larger-than-max-nk-tau-O}.
we then have that~$m\in O(n\cdot \tau^2)$.
Since we have polynomially-many applications each running in polynomial time, 
the theorem's part follows.

\toappendix{
\subsection{Fixed-Parameter Tractability of Weighted Variant}

\begin{proposition}%
 \label{prop:cmpvxtau}
 \Wcmpv{} is in \FPT{} regarding~$x+\tau$.
\end{proposition}

\begin{proof}
 Let~$I=((w^1,\dots,w^\tau),k,\ell,x)$ be an instance of \Wcmpv{}.
 It is not hard to see that we can obtain an equivalent instance~$\tilde{I}=((\tilde{w}^1,\dots,\tilde{w}^\tau),k,\ell,x)$
 such that~$\norm{\tilde{w}^t}{\infty}\leq x$ for all~$t\in\set{\tau}$.
 Now observe that each candidate~$i$ has a \emph{fingerprint} $(w_i^1,\dots,w_i^\tau)\in\set[0]{x}^\tau$,
 and that there are at most~$(x+1)^\tau$ pairwise different fingerprints.
 We have the following:
 
 \begin{claim}%
 \label{clm1}
  If there are more than~$x$ many candidates with the same fingerprint,
  then we can delete one of them.
 \end{claim}
 
 \begin{proof}[Proof of the claim.]
  Let~$(C_1,\dots,C_\tau)$ be a solution such that there is at least one~$i\in\set{\tau}$
  such that~$C_i$ contains~$x'>x$ candidates~$c_1',\dots,c_{x'}'$ with the same fingerprint.
  Moreover,
  we assume that candidates with the same fingerprint are lexicographic smallest
  (this does not change the score, yet possibly decrease the symmetric difference between consecutive committees).
  Then delete~$c_{x'}'$ from all~$C_i$ to obtain the sequence~$(C_1',\dots,C_\tau')$.
  Note that any size of a committee and any symmetric difference does not increase,
  and the score of each committee is still at least~$x$.
  Thus,
  $(C_1',\dots,C_\tau')$ is in fact a solution.
 \end{proof}
 
 Due to the claim,
 we can assume that~$m\leq x(x+1)^\tau$.
 In total,
 we constructed an instance of size~$(x+1)^{O(\tau)}$,
 and hence,
 \Wcmpv{} is in \FPT{} when parameterized by~$x+\tau$.
\end{proof}

}

\cref{thm:pkntau,thm:PKwrtmtau} emphasize the role of a short time-horizon in
our problems as they both deal with instances with few stages.
In passing, 
we point out that the above theorems also imply that the corresponding
parameterized problems are in FPT. 
This is due to a well-known fact 
that 
a
kernelization implies fixed-parameter
tractability
for the same parameter.

\section[Conclusion and Discussion]{Conclusion and Discussion}
\label{sec:conclusion}
\appendixsection{sec:conclusion}

\toappendix{
\subsection{Number of candidates and agents combined}
\label{sec:mn}

When parameterized by~$m+n$ polynomial problem kernelization turns out to be unlikely,
which may surprise, since one parameterizes by all dimensions of an input except for the time aspect (number of stages).

\begin{theorem}
 \label{thm:nopkmn}
 Each of \cmpv{} and \rmpv{} admits no problem kernel of size polynomial in~$m+n$ unless~$\NPincoNPslashpoly$.
\end{theorem}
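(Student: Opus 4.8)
The plan is to rule out polynomial kernels via an \emph{OR-cross-composition} (in the sense of Bodlaender, Jansen, and Kratsch) from \VC{}, which is \NP-hard, into \cmpv{} parameterized by~$m+n$. I would first fix the polynomial equivalence relation that declares two \VC{}-instances equivalent exactly when they have the same number~$N$ of vertices and the same budget~$r$ (all malformed strings forming one junk class); this yields at most $s^2+1$ classes on inputs of length at most~$s$, and within a class every instance lives on the common vertex set~$[N]$ with the common budget~$r$. Given $T$ equivalent instances~$G_1,\dots,G_T$, I set $p\ceq\lceil\log T\rceil$ and pad the list up to exactly~$2^p$ instances by appending \no-instances (graphs containing a clique on $r+2$ vertices, whose vertex cover number exceeds~$r$), so that every string $\beta\in\{0,1\}^p$ indexes one instance~$G_\beta$.

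For the construction I would build a single \cmpv{}-instance with $\ell=0$ --- so that a solution is forced to be one constant committee~$X$ --- over the candidate set consisting of the $N$~vertices, the $2p$~\emph{selector} candidates $b_1,\bar b_1,\dots,b_p,\bar b_p$, and one mandatory candidate~$\omega$; hence $m=N+2p+1=O(N+\log T)$. The time axis carries, for every index~$\beta$ and every edge $e=\{u,v\}$ of~$G_\beta$, one stage whose approved candidates are $\{u,v\}\cup M_\beta$, where $M_\beta\ceq\{b_i:\beta_i=0\}\cup\{\bar b_i:\beta_i=1\}$ collects the selectors ``wrong'' for~$\beta$; with threshold~$1$ such a stage is satisfied exactly when $X$ hits $\{u,v\}\cup M_\beta$. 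A few \emph{gadget} stages then pin the committee down: for each pair~$i$ a stage forcing $X\cap\{b_i,\bar b_i\}\neq\emptyset$, one stage approving all candidates to force $|X|\ge k$, and one approving all vertices to force $|X\cap[N]|\ge r$. Setting $k\ceq p+r+1$ and forcing $\omega\in X$, the budget leaves room for exactly one selector per pair and exactly $r$~vertices. Since the stages carry different thresholds, I would normalize everything to the single global value $x\ceq k$ in the standard way, padding each stage with dummy agents approving~$\omega$ to raise its base score; this keeps the number of agents at $n=O(N+\log T)$ and the number of stages polynomial in the total input, so the parameter is $m+n=O(N+\log T)=O(\max_i|G_i|+\log T)$, as required.

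Correctness then reduces to the selector bookkeeping: the forced ``exactly one of $b_i,\bar b_i$'' makes $X$ encode a unique index~$\beta^{*}$, for which $X\cap M_{\beta^{*}}=\emptyset$ while $X\cap M_\beta\neq\emptyset$ for every $\beta\neq\beta^{*}$; hence the edge-stages of all other indices are vacuous and those of~$\beta^{*}$ demand that the $r$~chosen vertices cover~$G_{\beta^{*}}$. Thus the composed instance is a \yes-instance iff some non-dummy $G_\beta$ has a vertex cover of size~$r$, i.e.\ iff at least one input is a \yes-instance. I expect the technical heart --- and the main obstacle --- to be exactly this enforcement of a \emph{single} selected index using only $O(\log T)$ extra candidates: because plurality scores are monotone one cannot directly forbid taking both~$b_i$ and~$\bar b_i$ (which would make every $M_\beta$ be hit, vacate all edge-stages, and create a false positive), so I rely on the exact-size constraint together with the per-pair and vertex lower bounds to leave no spare budget, and on the padding of~$T$ up to~$2^p$ to ensure every selector assignment indexes an actual (possibly dummy-\no) instance. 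Finally, to obtain the statement for \rmpv{} I would compose polynomial-parameter transformations: apply \cref{lem:rmpvtormpv} to turn the produced $\ell=0$ instance into an equivalent \cmpv{}-instance with $k=|C|/2$ and $\ell=0$, then \cref{lem:cmpvtormpv} to pass to \rmpv{}; both blow up $m+n$ only polynomially and keep the number of stages polynomial, transferring the kernel lower bound. This establishes the bound for \cmpv{} directly and for \rmpv{} via the transformation, so neither admits a polynomial kernel in~$m+n$ unless~\NPincoNPslashpoly{}.
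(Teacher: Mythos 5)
Your proposal is correct in its essentials, but it takes a genuinely different route from the paper. The paper's proof is an \emph{AND-composition} of \cmpv{} (resp.\ \rmpv{}) with itself: the $p$ input instances are simply concatenated along the time axis, a universal candidate~$z$ (plus $y_1,\dots,y_\ell$ in the revolutionary case) approved by fresh agents is added, and ``transfer profiles'' are inserted between consecutive instances so that the committee can morph from one instance's solution to the next's while respecting the bound on symmetric differences; since $\tau$ is not part of the parameter, $m+n$ stays polynomially bounded, and the conclusion follows from the fact that AND-compositional problems have no polynomial kernel unless \NPincoNPslashpoly{}. You instead build an OR-cross-composition from \VC{} with an $O(\log T)$-candidate instance-selector, forcing a single constant committee ($\ell=0$) whose tight budget $k=p+r+1$ pins down exactly one selector per pair and hence exactly one active input instance. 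Your architecture is sound: the budget argument correctly circumvents the monotonicity of plurality scores, the parameter is $O(N+\log T)$, and the transfer to \rmpv{} via \cref{lem:rmpvtormpv,lem:cmpvtormpv} is legitimate since both are polynomial-parameter transformations in $m+n$ (the paper instead gives a second, direct AND-composition for \rmpv{}). What each approach buys: the paper's composition is far simpler and needs no selector machinery, but it rests on the (harder) AND-distillation theorem of Drucker; yours is more technical but uses only the standard OR-cross-composition framework and composes from an external NP-hard problem, so it does not require the self-composability of \cmpv{}. Minor details you would still need to nail down are the normalization of all stages to the single global threshold $x=k$ via dummy $\omega$-approvers, and the assumption $r\le N-2$ so that the padding graphs with an $(r+2)$-clique actually fit on the common vertex set.
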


\begin{proof}
To prove~\cref{thm:nopkmn},
we are going to prove that both,~\cmpv{} and \rmpv{}, when parameterized by~$m+n$ are AND-compositional.
That is,
there is an algorithm taking~$p$ instance~$I_1,\dots,I_p$,
each with the same number~$n$ of agents, 
$m$ of candidates, 
and~$\tau$ of profiles,
and constructs in time polynomial in~$\sum_{i=1}^p |I_i|$ an instance~$I$ such that the number of agents and candidates is in~$(m+n)^{O(1)}$,
and that~$I$ is a \yes-instance if and only if each of~$I_1,\ldots,I_p$ is \yes-instance.
When a parameterized problem admits an AND-composition,
then it admits no polynomial kernel unless~$\NPincoNPslashpoly$.

\begin{construction}[\cmpv{}]
 \label{constr:compocmpv}
 Consider~$p$ instances~$I_1,\dots,I_p$ of~\cmpv{} such that~$k,x\in \N$ and~$\ell=1$, 
 each with~$n$ agents~$A^i=\{a_1^i,\ldots,a_n^i\}$,
 $m$ candidates~$C^i=\{c_1^i,\ldots,c_m^i\}$,
 and~$\tau$ voting profiles.
 Add the agent set~$A=\{a_1,\ldots,a_n\}\cup\{\wmod{a}_1,\ldots,\wmod{a}_{n}\}$ and candidate set~$C=\{c_1,\ldots,c_n\}\cup\{z\}$.
 Arrange the~$p$ voting profiles consecutively,
 where we identify each agent~$a_j^i$ with~$a_j$ and each candidate~$c_j^i$ with~$c_j$.
 Moreover,
 each agent~$\wmod{a}_j$ approves~$z$.
 Next,
 between the last voting profile and the first voting profile of two consecutive instances,
 add~$2k$ profiles where all agents approve~$z$ (we call them \emph{transfer profiles}).
 Set~$x'=x+n$,
 $k'=k+1$, and
 $\ell'=1$.
 Note that~$\tau'= p\tau+2k(p-1)$.
 \cqed{}
\end{construction}
Now we prove the correctness of~\cref{constr:compocmpv}. By construction, we
have the following.

\begin{observation}
 \label{obs:zisinall}
 Let~$I'$ from~\cref{constr:compocmpv} be a \yes-instance of~\cmpv{}.
 Then, it holds true for every solution~$(C_1,\ldots,C_{\tau'})$ that~$z\in C_t$ in all~$t\in\set{\tau'}$.
\end{observation}

\begin{proof}
 Let~$\calC=(C_1,\ldots,C_{\tau'})$ be a solution to~$I'$.
 Assume there is~$i\in\set{\tau'}$ such that~$z\not\in C_i$.
 Then,~$\score_i(C_i)\leq |A|-n = n < n+x = x'$,\
 contradicting that~$\calC$ is a solution.
\lqed
\end{proof}

\begin{lemma}
 Let~$I'$ be the instance obtained from~\cref{constr:compocmpv} given~$I_1,\dots,I_p$.
 Then~$I'$ is a \yes-instance if and only if each instance of~$I_1,\dots,I_p$ a \yes-instance.
\end{lemma}

\begin{proof}
 \RD{}
 Let~$\calC=(\calC_1,\calC_{1,2},\calC_2,\dots,\calC_p)$ be a solution for~$I'$,
 where~$\calC_i$ is sequence of committee for the voting profiles obtained from instance~$I_i$,
 and~$\calC_{i,i+1}$ is a sequence of committees for the transfer voting profiles between the consecutive instances~$I_i$ and~$I_{i+1}$.
 We claim that~$\calC_i$ without~$z$ forms a solution to instance~$I_i$ for every~$i\in\set{p}$.
 Due to \cref{obs:zisinall},
 we have that~$z$ is contained in each committee in~$\calC_i$.
 Moreover,
 each committee admits a score of at least~$x'=x+n$,
 and hence as~$z$ contributes~$n$ to the score,
 each committee admits a score of at least~$x$ in the respective voting profile in instance~$I_i$.
 
 \LD{}
 Let each of $I_1,\dots,I_p$ be a \yes-instance,
 and let~$\calC_q=(C_q^1,\dots,C_q^\tau)$ be a solution to~$I_q$ for each~$q\in\set{p}$.
 Note that~$\wmod{\calC}_q=(\wmod{C}^1_q,\dots,\wmod{C}_q^\tau)$ where $\wmod{C}_q^t$ contains the set~$\{c_j\mid c_j^q\in C_q^t\}$ and~$z$
 is a sequence of committees that is a partial solution to~$I'$ on the voting profiles corresponding to~$I_q$.
 It remains to construct a sequence~$\calC_{q,q+1}=(C_{q,q+1}^1,\ldots,C_{q,q+1}^{2k})$ for each~$q\in\set{p-1}$.
 Let~$\wmod{C}_q^\tau=\{z\}\cup \{b_1,\ldots,b_r\}$ and let~$\wmod{C}_{q+1}^1=\{z\}\cup \{d_1,\ldots,d_s\}$.
 Initially, 
 set~$C_{q,q+1}^t=\{z\}$ for all~$t\in\set{2k}$.
 For~$t\leq r$,
 set~$C_{q,q+1}^t=\wmod{C}_q^\tau\setminus\bigcup_{i=1}^t{b_i}$.
 For~$t\geq 2k-s+1$,
 and~$C_{q,q+1}^t=\wmod{C}_{q+1}^1\setminus\bigcup_{i=1}^{2k-t+1}{d_i}$.
 Clearly,
 $|C_{q,q+1}^t|\leq k'$ and since each contains~$z$,
 each admits a score of~$2n$.
 Finally,
 by construction we have that~$|\symdif{C_{q,q+1}^t}{C_{q,q+1}^{t+1}}|\leq 1$ for every~$t\in\set{2k-1}$,
 $|\symdif{C_{q}^\tau}{C_{q,q+1}^{1}}|\leq 1$,
 and~$|\symdif{C_{q+1}^1}{C_{q,q+1}^{2k}}|\leq 1$.
 Altogether, it holds that
 $\calC\ceq (\wmod{\calC}_1,\calC_{1,2},\wmod{\calC}_2,\dots,\wmod{\calC}_p)$ forms a solution to~$I'$.
\lqed
\end{proof}

For~\rmpv{},
the construction is similar to~\cref{constr:compocmpv} in the sense of using transition profiles; 
However,
this time, we only need three.

\begin{construction}
 \label{constr:compormpv}
 Consider~$p$ instances~$I_1,\dots,I_p$ of~\rmpv{} such that $k,x\in \N$ and~$\ell=2k$, 
 each with~$n$ agents~$A^i=\{a_1^i,\ldots,a_n^i\}$,
 $m=\ell$ candidates~$C^i=\{c_1^i,\ldots,c_m^i\}$,
 and~$\tau$ voting profiles.
 Add the agent set~$A=\{a_1,\ldots,a_n\}\cup\{\wmod{a}_j^i\mid 0\leq i\leq \ell, 1\leq j\leq n\}$ 
 and candidate set~$C=\{c_1,\ldots,c_n\}\cup\{z,y_1,\dots,y_\ell\}$.
 Arrange the~$p$ voting profiles consecutively,
 where we identify each agent~$a_j^i$ with~$a_j$ and each candidate~$c_j^i$ with~$c_j$.
 Moreover,
 each agent~$\wmod{a}_j^0$ approves~$z$,
 and each~$\wmod{a}_j^i$ approves~$y_i$.
 Next,
 between the last voting profile and the first voting profile of two consecutive instances,
 add~one profile where all agents approve~$z$ (we call it again \emph{transfer profile}).
 Set~$x'=x+n\cdot(\ell+1)$,
 $k'=k+\ell+1$, and
 $\ell'=\ell$.
 Note that~$\tau'= p\tau+(p-1)$.
 \cqed{}
\end{construction}

Again, 
similar to the case of~\cmpv{},
we observe that~$z$ must be contained in each committee when we are facing a solution.

\begin{observation}
 \label{obs:zisinallrmpv}
 Let~$I'$ from~\cref{constr:compormpv} be a \yes-instance of~\rmpv{}.
 Then it holds true for every solution~$(C_1,\ldots,C_{\tau'})$ that~$z\in C_t$ in all~$t\in\set{\tau'}$ and~$\{y_1,\ldots,y_\ell\}\in C_t$ if~$C_t$ is no transition profile.\end{observation}

\begin{lemma}
 Let~$I'$ be the instance obtained from~\cref{constr:compormpv} given~$I_1,\dots,I_p$.
 Then~$I'$ is a \yes-instance if and only if each instance of~$I_1,\dots,I_p$ a \yes-instance.
\end{lemma}

\begin{proof}
 \RD{}
 Let~$\calC=(\calC_1,\calC_{1,2},\calC_2,\dots,\calC_p)$ be a solution for~$I'$,
 where~$\calC_i$ is sequence of committee for the voting profiles obtained from instance~$I_i$,
 and~$\calC_{i,i+1}$ is a sequence of committees for the transfer voting profiles between the consecutive instances~$I_i$ and~$I_{i+1}$.
 We claim that~$\calC_i$ without~$z$ forms a solution to instance~$I_i$ for every~$i\in\set{p}$.
 Due to \cref{obs:zisinallrmpv},
 we have that~$z$ and~$\{y_1,\dots,y_\ell\}$ are contained in each committee in~$\calC_i$.
 Moreover,
 each committee admits a score of at least~$x'=x+n\cdot(\ell+1)$,
 and hence as each of~$z,y_1,\dots,y_\ell$ contributes~$n$ to the score,
 each committee admits a score of at least~$x$ in the respective voting profile in instance~$I_i$.
 
 \LD{}
 Let each of $I_1,\dots,I_p$ be a \yes-instance,
 and let~$\calC_q=(C_q^1,\dots,C_q^\tau)$ be a solution to~$I_q$ for each~$q\in\set{p}$.
 Note that~$\wmod{\calC}_q=(\wmod{C}^1_q,\dots,\wmod{C}_q^\tau)$ where $\wmod{C}_q^t$ contains the set~$\{c_j\mid c_j^q\in C_q^t\}$ and~$z$
 is a sequence of committees that is a partial solution to~$I'$ on the voting profiles corresponding to~$I_q$.
 It remains to construct the transition profiles~$C_{q,q+1}$ for each~$q\in\set{p-1}$.
 Let~$\wmod{C}_q^\tau=\{z\}\uplus \{y_1,\dots,y_\ell\} \uplus \{b_1,\ldots,b_r\}$ and let~$\wmod{C}_{q+1}^1=\{z\}\uplus \{y_1,\dots,y_\ell\} \uplus \{d_1,\ldots,d_s\}$.
 Set~$C_{q,q+1}=\{z\}$.
 Clearly,
 $|C_{q,q+1}|\leq k'$,
 Moreover,
 since in the transition profiles all agents approve~$z$,
 committee~$C_{q,q+1}$ has score~$A'\geq x'$.
 Finally,
 observe that~$\{y_1,\dots,y_\ell\}$ is a subset of each of~$\symdif{\wmod{C}_q^\tau}{C_{q,q+1}}$ and of~$\symdif{C_{q,q+1}}{\wmod{C}_{q+1}^1}$,
 and hence each symmetric difference is of size at least~$\ell$.
 Altogether,
 $\calC\ceq (\wmod{\calC}_1,C_{1,2},\wmod{\calC}_2,\dots,\wmod{\calC}_p)$ forms a solution to~$I'$.
\lqed
\end{proof}
The above lemma concludes the proof of all claims of~\cref{thm:nopkmn}.
\end{proof}
}

While our multivariate analysis 
revealed several intractability results,
we emphasize
that
we also identified quite practically relevant 
tractable cases.
In natural applications, such as electing committees serving for only few
days/events, the number~$\tau$ of stages is usually small. Even more, since
planning too far in advance usually increases uncertainty. Furthermore, usually
either the number~$n$ of voters or the number~$m$ of candidates is expected to
be small (as suggested by quite a significant number of small-sized election in
the preflib database). Additionally, in many elections the committee size~$k$ is
not very large.  Hence, our results for~$\tau$ and for~$k$ (polynomial-time
solvability for constant values) as well as for~$\tau +n$, for~$\tau + m$, or
for $m$~alone (fixed-parameter tractability) form a very
positive message.

Our results also underline the importance of the time aspect for preprocessing.
For both~\cmpv{} and \rmpv{}, we show that efficient data reduction to size
polynomial in the number of agents and the number of candidates is
unlikely (see \cref{app:sec:conclusion}).
Yet, combining any of the
two parameters with the number of stages (see~\cref{thm:pkntau,thm:PKwrtmtau})
allows for efficient (polynomial) data reduction.

Moreover, the tractability result for~$\tau$ seems generally insightful for the
multistage community where problems usually remain computationally hard even a
for constant number of stages.
Additionally,
the revolutionary multistage model may be relevant on its own
and 
may
pave the way for studying more new
models where consecutive changes are both lower- and upper-bounded.
This is already underlined by the study of~\citet{KRZ21}
on several further problems like matching or $s$-$t$ path
taking up our revolutionary setup.

\paragraph{Representation.}
Although it is naturally justified to start with the simplest meaningful model variant,
our focus on SNTV might look restrictive.
We stress that most results transfer easily to general Approval profiles (see\ \Wcmpv):
Replace each voter approving multiple candidates by multiple voters, each approving one candidate.
Also basic scoring rules can be modeled: Create for each candidate~$c$ that receives score~$s(c)$ exactly~$s(c)$ votes for~$c$.
The main drawback is that now the parameter number~$n$ of voters corresponds to the total number
of approvals or the total score sum, respectively; yet, most positive results still hold.
It remains open whether a direct modeling of more complex preferences instead of blowing up in the number of voters
can avoid blowing up the running time as well.
Recently, 
it was shown 
that many of our results
transfer to more complex voting rules.\footnote{Burak Arinalp. Multistage committee elections: Beyond plurality voting, March, 2021. Bachelor thesis. TU Berlin.}

{
\paragraph{Deeper comparison of our models.}
As opposed to the single-stage case, both
conservative and revolutionary committee election %
over multiple stages are \NP-complete,
even for a constant number of agents.
From a parameterized algorithmic point of view,
computing a revolutionary
committee is easier than a conservative one:
When asking for committees to change for all but constantly many candidates,
\rmpv{} is polynomial time solvable,
yet when asking for committees to change for only constant many candidates,
\cmpv{} remains \NP-hard.
Finally,
we wonder if
\rmpv{}
parameterized by~$k+\tau$ or~$\ell+\tau$ admits a polynomial problem kernel,
and whether \rmpv{}
parameterized by~$x+\tau$ is in~\FPT{}.

\paragraph{Future work.}
Possible future directions may include studying approximate or randomized algorithms
for~\cmpv{} and~\rmpv{} as well as experimentally testing our results.
Moreover,
further concepts and problems (e.g.,\ bribery and manipulation) from computational social choice may be studied in the (conservative and revolutionary) multistage model.
Note that, for instance, 2-Approval (each agent approves up to two candidates)
in the conservative multistage setup is already \NP-hard for one agent
(similar to the proof of~\cref{thm:bothnphard}(i)).
As a concrete future work,
we want to prove 
that \cmpv{} is \W{1}-hard when parameterized by~$k+n$
(inspired by a proof of~\citet{FluschnikNRZ22}).

\paragraph{Offline versus online.}
Importantly, our model is applicable for offline scenarios, in which preferences
are collected in advance (e.g.,\ by
social media polls, Internet
profiling, customer targeting).
However,
online
scenarios also offer an interesting research direction, yet requiring significant
changes in our original models. To observe this, consider an online scenario of
selecting
two committees such that in the first profile all committees are scoring equally high
and in the second profile there is exactly one such a committee.
In the worst case, every algorithm returns a solution requiring exchanging
all candidates between the two selected committees; thus, no reasonable
guarantee concerning the number of changes is achievable. To avoid such trivial
cases, when studying the online setting, one needs to carry out significant
model modifications. One way to proceed (following
the
multistage literature~\citep{BampisEST19,BampisEM18,GuptaTW14}) is to
introduce a goal function and consider
the quality of the selected committees
and the symmetric difference as soft constraints.
Another way is
to
restrict the differences of consecutive profiles (analogously to \citet{PP13}).
Such a correlation between consecutive profiles (greatly restricting the model)
provides enough information %
to achieve some guarantees on the solution.

\section* {Acknowledgements}
\begin{wrapfigure}{r}{.12\textwidth}
\begin{center}
\includegraphics[width=1.85cm]{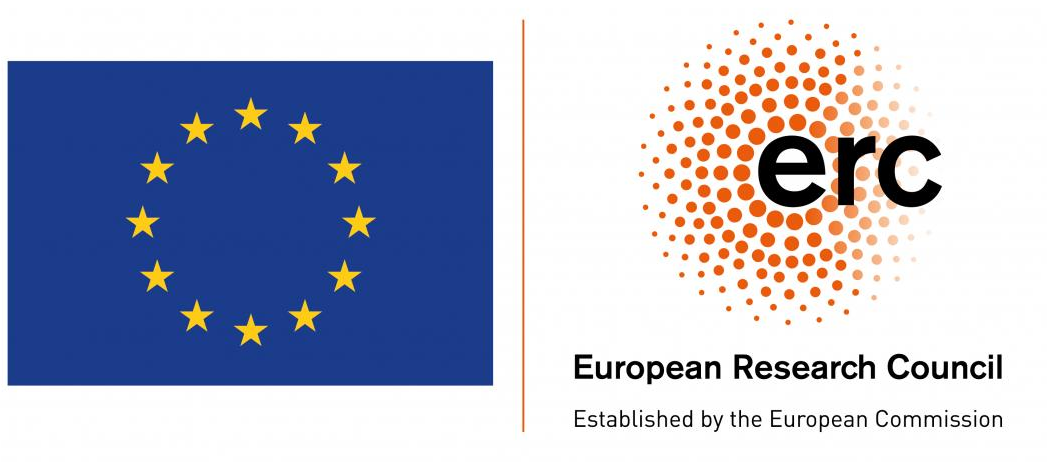}
\end{center}
\end{wrapfigure}
We thank the IJCAI'22 reviewers for their helpful comments.
This work was started when all authors were with TU~Berlin.
TF was supported by the DFG, projects TORE (NI 369/18),
MATE (NI 369/17),
and AFFA (BR 5207/1).
AK was supported by the DFG, project AFFA (BR 5207/1 and NI 369/15),
and by the European Research Council (ERC).
This project has received funding from the European Research Council (ERC)
under the European Union’s Horizon 2020 research and innovation programme (grant
agreement No 101002854).

\clearpage
\newpage

\bibliographystyle{plainnat}
\bibliography{msce-bib}

 \newpage
 \clearpage
 \appendix
 \section*{\section*{Appendix}}
 \appendixProofText
\end{document}